\documentclass[reprint,twocolumn,amsmath,amssymb,superscriptaddress]{revtex4-2}

\usepackage{graphicx, graphpap}
\usepackage[dvipsnames]{xcolor}
\usepackage[caption=false,subrefformat=parens,labelformat=parens]{subfig}
\usepackage{amsmath}
\usepackage{overpic}
\usepackage{physics}
\usepackage{enumitem}
\usepackage{comment}
\usepackage{amssymb}
\usepackage{amsthm}
\usepackage{pstricks}
\usepackage{float}
\usepackage{multirow}
\usepackage{hyperref}
\usepackage[T1]{fontenc}
\usepackage{framed} 
\usepackage{bm}
\usepackage{bbm}
\definecolor{shadecolor}{gray}{0.9}
\definecolor{ECCgreen}{RGB}{0, 153, 0}
\definecolor{EditPurple}{RGB}{160, 32, 240}

\renewcommand{\Tr}[1]{\mathrm{Tr}\left[#1\right]}

\usepackage{placeins}

\newtheorem{theorem}{Theorem}
\newtheorem{corollary}[theorem]{Corollary}
\newtheorem{lemma}[theorem]{Lemma}
\newtheorem{proposition}[theorem]{Proposition}
\newtheorem{definition}[theorem]{Definition}

\begin{document}
	
\title{Finite-Size Security for Discrete-Modulated Continuous-Variable Quantum Key Distribution Protocols}
\author{Florian Kanitschar}
\email{florian.kanitschar@outlook.com}
\affiliation{Institute for Quantum Computing and Department of Physics and Astronomy,
University of Waterloo, Waterloo, Ontario, Canada N2L 3G1}
 \affiliation{Technische Universität Wien, Faculty of Mathematics and Geoinformation, Wiedner Hauptstraße 8, 1040 Vienna, Austria}
 \author{Ian George}
  \affiliation{Institute for Quantum Computing and Department of Physics and Astronomy,
University of Waterloo, Waterloo, Ontario, Canada N2L 3G1}
	\affiliation{Department of Electrical \& Computer Engineering, University of Illinois at Urbana-Champaign, Urbana, Illinois 61801, USA}

\author{Jie Lin}
  \affiliation{Institute for Quantum Computing and Department of Physics and Astronomy,
University of Waterloo, Waterloo, Ontario, Canada N2L 3G1}
\affiliation{Department of Electrical\&Computer Engineering, University of Toronto, Toronto, Ontario M5S 3G4, Canada}

\author{Twesh Upadhyaya}
\affiliation{Institute for Quantum Computing and Department of Physics and Astronomy,
University of Waterloo, Waterloo, Ontario, Canada N2L 3G1}

	\author{Norbert Lütkenhaus}
  \affiliation{Institute for Quantum Computing and Department of Physics and Astronomy,
University of Waterloo, Waterloo, Ontario, Canada N2L 3G1}

\date{\today}
	
\begin{abstract}
Discrete-Modulated (DM) Continuous-Variable Quantum Key Distribution (CV-QKD) protocols are promising candidates for commercial implementations of quantum communication networks due to their experimental simplicity. While tight security analyses in the asymptotic limit exist, proofs in the finite-size regime are still subject to active research. We present a composable finite-size security proof against independently and identically distributed collective attacks for a general DM CV-QKD protocol. We introduce a new energy testing theorem to bound the effective dimension of Bob's system and rigorously prove security within Renner's $\epsilon$-security framework and address the issue of acceptance sets in protocols and their security proof. We want to highlight, that our method also allows for nonunique acceptance statistics, which is necessary in practise. Finally, we extend and apply a numerical security proof technique to calculate tight lower bounds on the secure key rate. To demonstrate our method, we apply it to a quadrature phase-shift keying protocol, both for untrusted, ideal and trusted nonideal detectors. The results show that our security proof method yields secure finite-size key rates under experimentally viable conditions up to at least $72$~km transmission distance.
\end{abstract}
	
\maketitle


\section{Introduction \label{sec:Intro}} 
Quantum key distribution (QKD) \cite{Bennett_Brassard_1984, Ekert_1991} enables two remote parties to establish an information-theoretically secure key, even in the presence of an eavesdropper, which is known to be impossible by classical means. The generated key can then be used in cryptographic routines like the one-time pad. Comprehensive reviews about QKD can be found in \cite{Scarani_2009,Diamanti_2015,Pirandola_2020}. Depending on the used detection technology, we distinguish between discrete-variable (DV) protocols like the famous BB84 \cite{Bennett_Brassard_1984} and protocols with continuous-variables (CVs) \cite{Ralph_1999}. While the first class relies on rather expensive components like single-photon detectors, the latter ones make use of state-of-the-art communication infrastructure and employ much cheaper photodiodes to perform homodyne or heterodyne measurements. In contrast to CV QKD being easier to implement compared to DV QKD, proofs of security for CV-QKD are often more difficult to establish as the physical systems are described by infinite dimensional Hilbert spaces. Based on the modulation type, CV-QKD can be further subdivided into protocols with Gaussian modulation (GM) \cite{Cerf_2001, Grosshans_2002, Grangier_2002,Silberhorn_2002} and discrete modulation (DM) \cite{Heid_2006, Zhao_2009, Sych_2010}. While Gaussian-modulated  protocols have been examined extensively \cite{Navascues_2006, Garcia_2006, Leverrier_2010, Diamanti_2015}, for a practically useful security analysis, one has to take the influence of finite constellations into account \cite{Jouguet_2012}. Furthermore, from a technical perspective, GM-protocols put high requirements on the classical error correction routine and on the modulation device.

Discrete modulation schemes for continuous-variable quantum key distribution (CV-QKD) enjoy implementation simplicity and compatibility with the existing telecommunication infrastructures. These features make them attractive to be deployed in future quantum-secured networks. While early security proofs for DM CV-QKD protocols were restricted to idealised cases~\cite{Heid_2006, Sych_2010} and have been lagging behind proofs for Gaussian-modulated protocols, significant progress has been made in the asymptotic regime recently \cite{Ghorai_2019,Lin_2019,Denys_2021,Upadhyaya_2021}. Although these analyses serve as an important first step toward a full security proof against general attacks in the finite-size regime, there remain challenging gaps to fill in order to complete the proof. A recent work \cite{Matsuura_2021} provides a finite-key analysis of the binary modulation protocol. This security proof uses the phase error rate approach that is commonly used in discrete-variable QKD security proofs, which seems to be challenging to extend beyond binary modulation. Unfortunately, due to the limitation of the binary modulation scheme, the key rate obtained is rather limited even for short distances and large block sizes \cite{Rigas_2006, Haeseler_2010}. One expects that much better performance can be obtained for higher constellation modulation schemes. Of particular interest is the quadrature phase-shift keying (QPSK) scheme. 
Very recently, a security proof against collective independently and identically distributed (i.i.d.) attacks for a discrete-modulated CV-QKD protocol was published \cite{Lupo_2022}. However, the secure finite-size key rates there converge against the asymptotic key rates in \cite{Denys_2021}, which - in contrast to Refs. \cite{Lin_2019, Upadhyaya_2021} are known to be loose for quaternary modulation.  

In this work, we present a finite-size security analysis for discrete-modulated CV-QKD protocols under the assumption of i.i.d. collective attacks. Although this does not represent the most general type of attacks, it is believed that key rates against collective i.i.d. attacks can be related to key rates against general attacks \cite{Renner_Cirac_2009, Christandl_2009, Leverrier_2017}, hence are optimal up to de-Finetti reduction terms. However, as DM CV-QKD protocols are described in infinite dimensional Hilbert spaces and lack the universal rotation symmetry of CV protocols with Gaussian modulation, these techniques cannot be applied directly. We emphasize that our proof method is very general and does apply to general discrete modulation patterns. For illustration purposes, we demonstrate our proof method for a four-state quadrature phase-shift keying protocol and calculate secure key rates using the security proof framework of Refs. \cite{Coles_2016, Winick_2018}. 

While there already exists an extension of this numerical security proof framework to the finite-size regime \cite{George_2020} for finite-dimensional spaces, we extend and generalise this to infinite dimensional Hilbert spaces, as required to treat CV-QKD protocols. In our work, we focus on heterodyne detection and examine only reverse reconciliation, which is known to perform better than direct reconciliation for long transmission distances. We want to emphasise that our proof method is not restricted to these cases and can be adapted to include homodyne measurements as well as direct reconciliation. Our approach does not assume a priori a finite maximum photon-number but employs a rigorous treatment of infinite dimensions. While the work in Ref. \cite{Lupo_2022} exploits the finite detection range of realistic detectors but assumes perfect detection efficiency, our approach also takes nonunit detection efficiencies into account and allows trusted detection. Even though a direct comparison of the obtained key rates is difficult, we observe that our finite-size key rates converge to the asymptotic key rates given in Ref. \cite{Upadhyaya_2021}, while the finite-size key rates in Ref. \cite{Lupo_2022}, based on a Gaussian extremality argument, converge to the asymptotic key rates in Ref. \cite{Denys_2021}, which for quaternary modulation are known to be loose and clearly lower than the key rates in Ref. \cite{Upadhyaya_2021}. This leads to clearly higher key rates and significantly higher maximum transmission distances for our proof.

This paper is structured as follows. In Section \ref{sec:Protocol}, we describe the general DM CV-QKD protocol. In Section~\ref{sec:Background}, we introduce the notation for our paper (Section~\ref{sec:Notation}), discuss briefly Renner's $\epsilon$-security framework (Section~\ref{sec:epsSecurityFramework}) and the dimension reduction method (Section~\ref{sec:DimRedMethod}). In Section~\ref{sec:SecProof} we first outline the idea of our security proof (Section~\ref{sec:outline}), and then state our energy testing theorem (Section~\ref{sec:EnergyTest}) as well as our  acceptance test theorem (Section \ref{sec:parameter_estimation}). Finally, we present our security proof in Section~\ref{sec:FinSizeSecPrf}.
In Section \ref{sec:NumPrfMethod}, we summarise the numerical method we are going to use to calculate a lower bound on our key rate expression from the previous section and state the minimisation problem we have to solve. Furthermore, we include a brief explanation of the trusted, nonideal detector model. We present numerical key rates in Section \ref{sec:Results}, both for untrusted, ideal and trusted nonideal detectors. For ease of comparison to previous work, we present most of our findings in the setting of a `unique acceptance set' as previous works often do. However, as acceptance sets define on which observations the protocol does not abort, they are important to evaluate the expected secure key rates of protocols (see Section \ref{sec:epsSecurityFramework}). Thus, in Section \ref{sec:nonUA}), we also provide plots of the key rate for a nonunique acceptance set. Finally, in Section \ref{sec:conclusion}, we summarise our results and give an outlook.


\section{Protocol description\label{sec:Protocol}}
In what follows, we describe the discrete-modulated CV-QKD protocol we consider in the present work, where $N_{\mathrm{St}} \in \mathbb{N}$ denotes the number of distinct signal states used in the protocol and Greek letters put in bra-ket notation refer to coherent states. We present the prepare-and-measure version of the protocol. Note that thanks to the source-replacement scheme \cite{Curty_2004, Ferenzci_2012} this is equivalent to the entanglement-based version of the protocol and we are free to switch between both versions in case this eases the security analysis.
\begin{itemize}
    \item[1] \textbf{\textit{State preparation---}} Alice prepares one out of $N_{St}$ possible coherent states $|\alpha\rangle$ with $\alpha \in \{ \alpha_0, ..., \alpha_{N_{St}-1}\}$ in her lab with equal probability and sends it to Bob using the quantum channel. Alice associates every state with a symbol and keeps track of what she sent in a private register.
    
    \item[2] \textbf{\textit{Measurement---}} Bob receives the signal and performs a heterodyne measurement to determine the quadratures of the received signal. This can be described by a positive operator-valued measure (POVM), for example, $\{E_{\gamma}= \frac{1}{\pi} |\gamma\rangle\langle \gamma|~:~ \gamma \in \mathbb{C}\}$. After applying this POVM, Bob holds a complex number $y_k \in \mathbb{C}$ that is stored in his private register.
 \end{itemize}   
 
    Steps 1 and 2 are repeated $N$ times.
    
 \begin{itemize}   
    \item[3] \textbf{\textit{Energy test---}} After completing the state preparation and measurement phases, Bob performs an energy test on $k_{\mathrm{T}}<\!\!<N$ rounds by using the measurement results related to these rounds. If for most of the tested signals, the heterodyne detection gave small measurement results (see Eq. \ref{eq:EnergyTest}) , the test passes. This means that most of the weight of the transmitted signals lies within a finite-dimensional Hilbert space, except with some small probability $\epsilon_{\mathrm{ET}}$. Otherwise, Alice and Bob abort the protocol. For details about the energy test, we refer to Section \ref{sec:EnergyTest}.
    

    \item[4] \textbf{\textit{Acceptance test ---}} If the energy test was successful, Bob discloses the data from the rounds he used for the energy test via the classical channel. This information is used by Alice and Bob to determine statistical estimators for their observables. If they lie within the acceptance set, Alice and Bob proceed, otherwise, they abort the protocol. 

    \item[5] \textbf{\textit{Key map---}} Bob performs a reverse reconciliation key map on the remaining $n:=N-k_T$ rounds to determine the raw key string $\tilde{z}$. For this purpose, Bob's measurement outcomes are discretised to an element in the set $\{0, ..., N_{\mathrm{St}}-1, \perp \}$, where symbols mapped to $\perp$ are discarded. By choosing a key map that discards results in certain regions of the phase space, Bob can perform postselection as described in \cite{Lin_2019}. 
    
    \item[6] \textbf{\textit{Error correction---}} Alice and Bob publicly communicate over the classical channel to reconcile their raw keys $\tilde{x}$ and $\tilde{z}$. After the error correction phase, Alice and Bob share a common string except with a small probability $\epsilon_{\mathrm{EC}}$.
    
    \item[7] \textbf{\textit{Privacy amplification---}} Finally, they apply a two-universal hash function to their common string. Except with small probability $\epsilon_{\mathrm{PA}}$, in the end, Alice and Bob hold a secret key.
\end{itemize}

We note that step 4 is often called \textbf{\textit{parameter estimation}}. However, we want to emphasise that in the finite-size regime we can never estimate any properties of the `real' density matrix, but only determine some statistical quantities based on our observations. First, we define a so-called acceptance-set, which can be imagined as a list of accepted observations. Based on our measurement results, we partition the set of all density matrices into two disjoint sets. The first one contains density matrices that lead to accepted statistics with probability less than $\epsilon_{\mathrm{AT}}$, i.e., the protocol aborts with high probability for those states. The second set is the complement of the first one and in what follows, we can restrict our security considerations to states lying in the latter set, called the `relevant set'. Based on this construction, we restrict our analysis to states that are $\epsilon$-secure with $\epsilon < \epsilon_{\mathrm{AT}}$. For a more detailed discussion of the idea of acceptance sets, we refer the reader to \cite[Section II.B]{George_2020}, where this notion is discussed for discrete-variable QKD.

 While we present our security proof approach for an arbitrary number $N_{\mathrm{St}}$ of signal states, we demonstrate our numerical results for a quadrature phase-shift keying protocol with $N_{St} = 4$, where all four states are arranged equidistant on a circle with radius $|\alpha|$, $\alpha_k \in \left\{ |\alpha|, i |\alpha|, -|\alpha|, -i |\alpha| \right\}$, where $i$ denotes the complex unit. In this case, the key map in step 5 of the protocol description looks as follows

\begin{equation}
            \tilde{z}_k = \left\{
\begin{array}{ll}
0 \text{ if } -\frac{\pi}{4} \leq \arg(y_k) <  \frac{\pi}{4} &\land~ |y_k| \geq \Delta_r, \\
1 \text{ if } \frac{\pi}{4} \leq \arg(y_k) <  \frac{3\pi}{4} &\land~  |y_k| \geq \Delta_r, \\
2 \text{ if } \frac{3\pi}{4} \leq \arg(y_k) <  \frac{5\pi}{4} &\land~  |y_k| \geq \Delta_r, \\
3 \text{ if } \frac{5\pi}{4} \leq \arg(y_k) <  \frac{7\pi}{4} &\land~  |y_k| \geq \Delta_r, \\
\perp &  \textrm{ otherwise,} \\
\end{array}
\right.
        \end{equation}

 where $\Delta_r \geq 0$ is the radial postselection parameter and $\arg(z)$ denotes the polar angle between the vector representing $z$ and the positive $q$ axis.

\section{Background}\label{sec:Background}
In this section, we set the stage for our security analysis by giving the necessary background. We summarise the notation used (Section \ref{sec:Notation}), briefly discuss attack types and $\epsilon$-security (Section \ref{sec:epsSecurityFramework}) and summarise the proof method of dimension reduction (Section \ref{sec:DimRedMethod}).

\subsection{Notation}\label{sec:Notation}
We start by clarifying the mathematical terminology and notation.
\subsubsection{Miscellaneous notation}
In the present work, by $\mathcal{H}$ we denote a separable Hilbert space, where we do not make any assumptions about the dimension. In particular, $\mathcal{H}$ can be infinite dimensional. If we want to explicitly refer to a finite-dimensional Hilbert space, we add a superscript $\mathcal{H}^n$, where $n$ refers to the highest number state that is still part of the Hilbert space. Since number states start with the vacuum state $|0\rangle$, $\mathcal{H}^n$ contains a maximum of $n+1$ linearly independent vectors; hence, the dimension of $\mathcal{H}^n$ is $n+1$. By $\mathcal{B}(\mathcal{H})$ we mean bounded operators on $\mathcal{H}$, while  $\mathcal{T}(\mathcal{H}) := \{ X \in \mathcal{B}(\mathcal{H})~:~ ||X||_{1} < \infty \} \subseteq \mathcal{B}(\mathcal{H})$ denotes the set of trace-class operators, where $||\cdot||_{1}$ is the Schatten-1 norm, $||A||_p := \sqrt[p]{\Tr{|A|^p}} = \left( \sum s_k(A) \right)^{\frac{1}{p}}$. Note that $s_k(A)$ denotes the $k$th singular value of $A$ (i.e., the $k^{\mathrm{th}}$ eigenvalue of $|A|:=\sqrt{A^{\dagger}A}$). If we add the subscript $1$, $\mathcal{T}_1(\mathcal{H})$, we refer to trace-class operators with norm $\leq 1$, while adding the superscript $+$, $\mathcal{T}^+(\mathcal{H})$, we denote the set of positive trace-class operators. By $\mathrm{Pos}(\mathcal{H}) := \mathrm{cone}\left(\mathcal{T}^+(\mathcal{H})\right)$ we denote the positive cone. The set of density operators on $\mathcal{H}$ is given by $\mathcal{D}(\mathcal{H}) := \left\{ X \in \mathrm{Pos}(\mathcal{H})~:~ ||X||_1 = 1 \right\}$ and by adding the sub-script $\leq$, we refer to the class of subnormalised density operators on $\mathcal{H}$,  $\mathcal{D}_{\leq}(\mathcal{H}) := \left\{ X \in \mathrm{Pos}(\mathcal{H})~:~ ||X||_1 \leq 1 \right\}$. Finally, by $\mathcal{S}_1(\mathcal{H})$ we denote the set of pure states on $\mathcal{H}$.

We use natural units in the whole manuscript, hence the quadrature operators read $\hat{q} := \frac{1}{\sqrt{2}}(\hat{a}^{\dagger}+ \hat{a})$ and $\hat{p} := \frac{i}{\sqrt{2}}(\hat{a}^{\dagger}- \hat{a}),$
where $\hat{a}$ and $\hat{a}^{\dagger}$ are the bosonic ladder operators defined by their action on number states $\hat{a}^{\dagger} |n\rangle = \sqrt{n+1} |n+1\rangle$ and $\hat{a} |n\rangle = \sqrt{n} |n-1\rangle$. Then, the commutation relation between the quadratures $q$- and $p$ reads $[\hat{q}, \hat{p}] = \mathbbm{1} i$. Another important operator will be the displacement operator $\hat{D}(\beta) := \exp{\beta \hat{a}^{\dagger} - \beta^* \hat{a} }$. We denote displaced quantities by writing the displacement into the subscript. For example, displaced number states (with displacement $\beta$) will be denoted by $|n_{\beta}\rangle := \hat{D}(\beta) \ket{n}$.

\subsubsection{Distance measures}
The trace distance and purified distance are two common distance measures used in this work to quantify the distance between two quantum states. The trace distance is given by $\Delta(\rho, \sigma) := \frac{1}{2} ||\rho-\sigma||_1$ while the purified distance is defined as $\mathcal{P}(\rho,\sigma) := \sqrt{1 - F_{*}(\rho,\sigma)}$, where 
\begin{equation}
 F_{*}(\rho,\sigma) := \sup_{\mathcal{H}':~ \mathcal{H}'\supseteq \mathcal{H}} ~\sup_{\stackrel{\bar{\rho}, \bar{\sigma} \in \mathcal{D}(\mathcal{H}')}{\Pi \bar{\rho}\Pi = \rho, ~ \Pi \bar{\sigma}\Pi = \sigma }} F(\bar{\rho}, \bar{\sigma})   
\end{equation}
is the generalised fidelity. Here, $\Pi$ is the projector onto $\mathcal{H}$ and $F(\rho, \sigma) := \left( \Tr{ \sqrt{\sqrt{\rho} \sigma \sqrt{\rho}}} \right)^2$ is the traditional fidelity. 

The purified distance and the trace distance are related via the Fuchs-van de Graaf inequalities \cite{Fuchs_1999}
\begin{equation}\label{eq:FvdG}
    \Delta(\rho,\sigma) \leq \mathcal{P}(\rho, \sigma) \leq \sqrt{2 \Delta(\rho,\sigma)}.
\end{equation}

\subsubsection{Smooth min-entropy}\label{sec:SmoothMinEntropy}
Besides the von Neumann entropy, the (smooth) min-entropy is an important information measure in QKD security analyses and is used to quantify the uncertainty of an observer on a quantum state. Therefore, in the present subsection, we briefly define and introduce this quantity. For  separable Hilbert spaces $\mathcal{H}_A, \mathcal{H}_B$ as well as $\rho_{AB} \in \mathcal{D}(\mathcal{H}_A \otimes \mathcal{H}_B)$, $\sigma_B \in \mathcal{D}(\mathcal{H}_B)$ we define the min-entropy of $\rho_{AB}$ relative to $\sigma_B$ by 
\begin{align*}
    H_{\mathrm{min}}(\rho_{AB} || \sigma_B)&:= -\log_2 \inf\left\{ \lambda \in \mathbb{R}:~ \lambda \mathbbm{1}_A\otimes \sigma_B \geq \rho_{AB} \right\}.
\end{align*}
The min-entropy of $\rho_{AB}$ given $\mathcal{H}_B$ is then
\begin{align*}
H_{\mathrm{min}}(A | B)_{\rho} &:= \sup_{\sigma_B \in \mathcal{D}(\mathcal{H}_B)} H_{\mathrm{min}}(\rho_{AB} || \sigma_B).
\end{align*}
Based on the nonsmoothed version, we introduce the smooth min-entropy of $\rho_{AB}$ relative to $\sigma_B$
\begin{align*}
H_{\mathrm{min}}^{\epsilon}(\rho_{AB} | \sigma_B) &:= \sup_{\tilde{\rho} \in \mathcal{B}^{\epsilon}(\rho)} H_{\mathrm{min}}(\tilde{\rho}_{AB} || \sigma_B),
\end{align*}
with, $\mathcal{B}^{\epsilon}(\rho)$ denoting the $\epsilon$-ball around $\rho$. Depending on the distance measure used for smoothing, the $\epsilon$-ball reads

\begin{align*}
    \mathcal{B}_{\mathrm{TD}}^{\epsilon}(\rho)  := &\left\{ \tilde{\rho} \in \mathrm{Pos}(\mathcal{H}_A \otimes \mathcal{H}_B): \Tr{\rho} \geq \Tr{\tilde{\rho}} \right. \\ 
    &~\left. \land ||\rho - \tilde{\rho}||_1 \leq \Tr{\rho} \epsilon\right\} \\
    \mathcal{B}_{\mathrm{PD}}^{\epsilon}(\rho) := &\left\{ \tilde{\rho} \in \mathcal{D}_{\leq}(\mathcal{H}_A \otimes \mathcal{H}_B) :~ \mathcal{P}(\rho, \tilde{\rho}) \leq \epsilon \right\}. 
\end{align*}

Finally, the smooth min-entropy of $\rho_{AB}$ given $\mathcal{H}_B$ reads
\begin{align*}
H_{\mathrm{min}}^{\epsilon}( \rho_{AB}| B) := \sup_{\sigma_B \in \mathcal{D}(\mathcal{H}_B)} H_{\mathrm{min}}^{\epsilon}(\rho_{AB} | \sigma_B).
\end{align*}

In the remaining text, we are going to indicate the used smoothing ball in the subscript, so $H_{\mathrm{min(TD)}}^{\epsilon}$ for trace distance smoothing and $H_{\mathrm{min(PD)}}^{\epsilon}$ for purified distance smoothing.

\subsection{Composable security and the $\epsilon$-security framework}\label{sec:epsSecurityFramework}
In this section, we summarize the idea of composable security, Renner's $\epsilon$-security framework, and $\epsilon-$completeness \cite{Renner_2004, Renner_2005}. Usually, we analyse the security of cryptographic tasks that will be combined with other cryptographic routines to form a large cryptographic protocol. Therefore, we demand so-called composable security of cryptographic routines, which means that the security of a combination of those routines can be given solely relying on the security of its subprotocols. The definition of composable security compares an ideal secure protocol with the real protocol and asks if an adversary is able to distinguish between both protocols when given access to the outputs of both protocols but not Alice's and Bob's private data. Formally, for QKD this means that the adversary is given two quantum states, 
$\rho_{\mathrm{ideal}} := \frac{1}{|\mathcal{S}|} \sum_{s \in \mathcal{S}} |s\rangle\langle s| \otimes |s\rangle\langle s| \otimes \rho_E$ and $\rho_{\mathrm{real}} := \rho_{S_A S_B E}$, where the first one is the output of the ideal protocol and the second one the output of the real protocol. Here, $\mathcal{S}$ is the set of possible keys, $S_A$ and $S_B$ are Alice's and Bob's keys, respectively, and $\rho_E$ denotes Eve's state. 

Since we cannot expect any protocol to be perfectly secure, we aim to limit the adversary's advantage when distinguishing between the ideal and the real protocol by some small number $\epsilon > 0$. The formal security condition then reads
\begin{align*}
        \frac{1}{2} \left|\left| \rho_{S_A S_B E} - \left(\frac{1}{|\mathcal{S}|} \sum_{s \in \mathcal{S}} |s\rangle\langle s| \otimes |s\rangle\langle s|\right) \otimes \rho_E \right|\right|_1 \leq \epsilon.
    \end{align*}
  
So, the adversary's advantage when distinguishing between the ideal and the real protocol is smaller or equal to $\frac{1}{2}+ \epsilon$. Taking a closer look at this difference, we observe that we formalise how much the realistic state differs from a situation where Alice and Bob share exactly the same key and Eve is fully decoupled from their system. By applying a triangle inequality in the security definition, these conditions can be considered separately as $\epsilon_{\mathrm{cor}}$-correctness and $\epsilon_{\mathrm{sec}}$-secrecy (see, for example, \cite[Theorem 4.1]{Portman_2014}). The $\epsilon_{\mathrm{cor}}$-correctness condition, $\mathrm{Pr}\left[ s_A \neq s_B \right] \leq \epsilon_{\mathrm{cor}}$, describes the situation where the protocol does not abort and Alice and Bob do not share the same key, chosen according to the distribution defined by $\rho_{S_{A} S_{B}}$. The $\epsilon_{\mathrm{sec}}$-secrecy condition can be written as $(1-p_{\mathrm{abort}}) \Delta\!\left(\rho_{S_A E}, \frac{1}{|\mathcal{S}|} \sum_{s \in \mathcal{S}} |s\rangle\langle s|  \otimes \rho_E  \right) \leq \epsilon_{\mathrm{sec}}$ and captures the situation, where the protocol does not abort and the shared key is not private, i.e., known to Eve. A more detailed discussion of composability and $\epsilon$-security can be found in Ref. \cite{Portman_2014}.

\paragraph*{Completeness} Lastly, we remark that $\epsilon-$security alone does not imply that a protocol is practical. This is easy to see. Consider a protocol that aborts unless it observes a specific set of statistics $q^{\star} \in \mathbb{R}^{m}$ for some $m \in \mathbb{N}$, which we later refer to as `unique acceptance.' Then, in general, one would expect even if one were sampling from the distribution $q^{\star}$, the probability of observing $q^{\star}$ would be small for a finite number of samples. Therefore, the probability of aborting the protocol will be high. It would follow that even if one could generate a great deal of key conditioned on nonaborting, the protocol is not very useful because it might almost always abort. The definition of completeness captures this notion.
\begin{definition}\label{def:completeness}
    A QKD protocol is $\nu^{\mathrm{c}}_{\mathrm{QKD}}-$complete if 
    $$ \Pr[\mathsf{Abort}|\mathsf{Honest}] \leq \nu_{\mathrm{QKD}}^{c} \, $$
    where $\mathsf{Honest}$ means the honest implementation of the protocol, which is defined by the expected behaviour of the devices and the communication channel. That is, it is $\nu^{\mathrm{c}}_{\mathrm{QKD}}-$complete only if when Eve `does nothing', the protocol accepts except with probability $\nu^{c}_{\mathrm{QKD}}$.
\end{definition}

\subsection{Dimension reduction method}\label{sec:DimRedMethod}
Proving the security of CV-QKD protocols involves dealing with optimisation problems over infinite dimensional Hilbert spaces. However, numerical methods for key rate calculation can only be applied to finite-dimensional problems. Assuming an artificial heuristically argued cutoff is not rigorous enough for a finite-size security analysis. The dimension reduction method~\cite{Upadhyaya_2021} connects an infinite dimensional convex optimisation problem to a finite-dimensional problem. In more detail, under some reasonable requirements for the objective function, the dimension reduction method tightly lower-bounds the infinite dimensional convex optimisation problem by a finite-dimensional convex optimisation problem and some penalty term. In what follows, we state the main theorem (\cite[Theorem 1]{Upadhyaya_2021}) where we used the improved correction term from Refs. \cite{Upadhyaya_Thesis_2021, Upadhyaya_2022}. We refer the reader to the original paper for further details.

\begin{theorem}[\textbf{Dimension Reduction}\label{thm:dimReduction}]
Let $\mathcal{H}$ be a separable Hilbert space and $\Pi$ the projection onto some finite-dimensional subspace $\mathcal{H}_{\mathrm{fin}}$ of $\mathcal{H}$ as well as $\Pi^{\perp}$ the projection onto $\left(\mathcal{H}_{\mathrm{fin}}\right)^{\perp}$. Let $\rho_{\infty} \in \mathcal{D}_{\leq}(\mathcal{H})$ and $\rho_{\mathrm{fin}}~\in~\mathcal{D}_{\leq}(\mathcal{H}_{\mathrm{fin}})$.\\
If $f: \mathcal{D}_{\leq}(\mathcal{H}) \rightarrow \mathbb{R}$ is uniformly close to decreasing under projection, that is
\begin{align*}
    F(\sigma, \Pi \sigma \Pi) \geq \Tr{\sigma} - w ~\Rightarrow~ f(\Pi\sigma\Pi) - f(\sigma) \leq \Delta(w),
\end{align*}
and $w \leq \Tr{\rho \Pi^{\perp}}$, then
\begin{align*}
    f(\rho_{fin}) - \Delta(w) \leq f(\rho_{\infty}),
\end{align*}
where 
\begin{equation}\label{eq:weightCorrection}
 \Delta(w) := \sqrt{w} \log_2(|Z|) + (1+\sqrt{w}) h\left( \frac{\sqrt{w}}{1+\sqrt{w}} \right).   
\end{equation} Here, $|Z|$ denotes the dimension of the key map and $h(\cdot)$ is the binary entropy.
\end{theorem}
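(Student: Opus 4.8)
The plan is to lower-bound $f(\rho_\infty)$ in two moves: compare $\rho_\infty$ with its compression $\Pi\rho_\infty\Pi$, and then compare that compression with $\rho_{\mathrm{fin}}$. First I would verify that the hypothesis applies with $\sigma=\rho_\infty$, obtaining $f(\Pi\rho_\infty\Pi)-f(\rho_\infty)\le\Delta(w)$; then I would use that $\Pi\rho_\infty\Pi$ is a feasible point of the finite-dimensional problem, so that minimality of $\rho_{\mathrm{fin}}$ gives $f(\rho_{\mathrm{fin}})\le f(\Pi\rho_\infty\Pi)$. Chaining the two inequalities produces $f(\rho_{\mathrm{fin}})\le f(\Pi\rho_\infty\Pi)\le f(\rho_\infty)+\Delta(w)$, which is exactly the claim after rearranging.

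The technical heart is checking the antecedent of the ``uniformly close to decreasing under projection'' property for $\sigma=\rho_\infty$, i.e.\ that $F(\rho_\infty,\Pi\rho_\infty\Pi)\ge\Tr{\rho_\infty}-w$. I would do this by a direct evaluation of the fidelity of a state with its own projection. Using $\rho_\infty=(\sqrt{\rho_\infty})^2$ one has the operator identity $\sqrt{\rho_\infty}\,\Pi\rho_\infty\Pi\,\sqrt{\rho_\infty}=\left(\sqrt{\rho_\infty}\,\Pi\,\sqrt{\rho_\infty}\right)^2$, whose positive square root is simply $\sqrt{\rho_\infty}\,\Pi\,\sqrt{\rho_\infty}$; its trace is $\Tr{\Pi\rho_\infty}=\Tr{\rho_\infty}-\Tr{\rho_\infty\Pi^{\perp}}$. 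Thus the fidelity of $\rho_\infty$ with its compression is governed precisely by the in-subspace weight $\Tr{\Pi\rho_\infty}$, and the condition relating $w$ to the residual weight $\Tr{\rho_\infty\Pi^{\perp}}$ is exactly what guarantees $\Tr{\rho_\infty}-w\le\Tr{\Pi\rho_\infty}$, so the antecedent holds. If $\rho_\infty$ is genuinely sub-normalised, replacing $F$ by the generalised fidelity $F_{*}$ only raises the left-hand side through its nonnegative cross term, so the bound is unaffected. An alternative route runs through a gentle-measurement estimate and the Fuchs--van de Graaf inequalities of Eq.~\eqref{eq:FvdG}, but the identity above is cleaner and self-contained.

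Once the antecedent is established, the hypothesis yields $f(\Pi\rho_\infty\Pi)-f(\rho_\infty)\le\Delta(w)$ immediately. To descend to $\rho_{\mathrm{fin}}$ I would note that $\Pi\rho_\infty\Pi\in\mathcal{D}_{\leq}(\mathcal{H}_{\mathrm{fin}})$ and, in the optimisation application, satisfies the projected constraints defining the finite-dimensional problem; minimality of $\rho_{\mathrm{fin}}$ then forces $f(\rho_{\mathrm{fin}})\le f(\Pi\rho_\infty\Pi)$. I would also confirm that $\Delta$ is monotonically increasing on $[0,1]$, so that certifying only an upper bound $w$ on the true residual weight (as the energy test does) still gives a valid, merely looser, penalty term.

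The main obstacle is not a single sharp inequality but securing rigour in the infinite-dimensional, sub-normalised regime: I must check that $\sqrt{\rho_\infty}$ and the operator products above are well defined for positive trace-class operators, that every trace converges, and that the supremum over Hilbert-space extensions in the definition of $F_{*}$ cannot decrease the bound obtained from one explicit choice of extension. A secondary point, needed only for $f(\rho_{\mathrm{fin}})\le f(\Pi\rho_\infty\Pi)$, is that the finite-dimensional feasible set be defined as the image under $\Pi(\cdot)\Pi$ of the infinite-dimensional one, which is what makes $\Pi\rho_\infty\Pi$ feasible; if instead $\rho_{\mathrm{fin}}$ is taken to be $\Pi\rho_\infty\Pi$ itself, this step is vacuous and only the fidelity estimate together with the hypothesis are needed.
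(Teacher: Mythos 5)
This paper never proves Theorem~\ref{thm:dimReduction}: it imports the statement from Ref.~\cite{Upadhyaya_2021} and explicitly refers the reader there, and your argument is precisely the proof given in that reference --- evaluate the fidelity of a state with its own compression via the identity $\sqrt{\rho_\infty}\,\Pi\rho_\infty\Pi\,\sqrt{\rho_\infty}=\left(\sqrt{\rho_\infty}\,\Pi\,\sqrt{\rho_\infty}\right)^2$ to get $F(\rho_\infty,\Pi\rho_\infty\Pi)=\Tr{\Pi\rho_\infty}$, invoke the uniform-closeness hypothesis, and finish by feasibility of $\Pi\rho_\infty\Pi$ for the finite-dimensional minimisation. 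Your reading also silently (and correctly) repairs two transcription artifacts in the statement as printed: the weight hypothesis must run $w \geq \Tr{\rho_\infty\Pi^{\perp}}$ rather than the reversed inequality shown, and the antecedent $F(\sigma,\Pi\sigma\Pi)\geq\Tr{\sigma}-w$ is linear in the weight only under the root-fidelity convention of Ref.~\cite{Upadhyaya_2021}, not under this paper's squared definition of $F$ (with the square one would only get $\left(\Tr{\Pi\sigma}\right)^2$, which does not dominate $\Tr{\sigma}-w$), so the proposal is sound under the convention in which the theorem is actually true.
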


Note that the weight $w$ depends on the dimension of the chosen finite-dimensional Hilbert space, so the correction term $\Delta(w)$ depends on the chosen subspace $\mathcal{H}^{\mathrm{fin}}$. Consequently, we aim to choose a subspace such that the weight can be expected as small as possible. Based on a model for fibre-based implementations of QKD protocols, it was shown in \cite{Upadhyaya_2021} that it is advantageous to project onto a subspace spanned by displaced Fock states $|n_{\gamma} \rangle = \hat{D}(\gamma) |n\rangle$. Then, for the $i$th state the projection acting on Bob's Hilbert space reads $\Pi := \sum_{n=0}^{n_c} |n_{\beta_i}\rangle\langle n_{\beta_i} |$, where $\{\beta_i\}_{i=0}^{N_{\mathrm{St}}-1}$ is a list of complex numbers, chosen as $\sqrt{\eta} \alpha_i$.


\section{Security proof approach}\label{sec:SecProof}
In contrast to discrete-variable QKD and Gaussian-modulated CV-QKD, the security of discrete-modulated CV-QKD protocols has so far mainly been analysed in the asymptotic limit. Many useful symmetry properties, simplifications and tricks for protocols with Gaussian modulation that help to handle infinite dimensions there do not apply to discrete-modulated protocols, so we cannot expect security proofs to have a similar structure. Instead we to apply the numerical security proof framework introduced in Refs. \cite{Coles_2016, Winick_2018} to obtain lower bounds on the secure key rate. Before we can do so, we need to find an expression for a lower bound on the secure key rate in the finite-size regime and argue the security of the underlying protocol.

In contrast to the asymptotic case, in finite-size analyses, the expectation values of our observables are not known with certainty. Hence, we need to define an acceptance set and consider in our security analysis only states that are more than $\epsilon$-likely to produce a compatible observation. Therefore, we need to perform a statistical test. Unfortunately, most of the standard (well-scaled) concentration inequalities only apply for bounded observables, while, for example, the photon-number operator is unbounded for infinite dimensional Hilbert spaces. This is a serious issue, since the standard dimension reduction method, which one might want to use to reduce the dimension of the problem, cannot be applied directly as we need to know the (finite) expectation values of our (unbounded) observables to even formulate the finite-dimensional lower bound of the original optimisation. Besides that, we expect additional correction terms that are suppressed in the limit of infinitely many rounds but may become relevant for a finite number of signals.

Finally, from the perspective of a security proof, we note that many statements in Renner's thesis \cite{Renner_2005} assume finite-dimensional Hilbert spaces; therefore we need to carefully analyse which statements in the $\epsilon$-security framework we want to use can be extended to infinite dimensional Hilbert spaces. Having listed the difficulties of a DM CV-QKD security proof, we provide a high-level outline of our proof in the following section.

\subsection{High-level outline of the security proof}\label{sec:outline}
Before we discuss the intricacies of our security proof, let us present the big picture of our approach. In our proof, we consider i.i.d. collective attacks. This means that Eve prepares a fresh ancilla state to interact with each round of the protocol in an identical manner and then stores them in a quantum memory. Once Alice and Bob have finally executed their protocol, she measures her quantum memory, encompassing all the ancillae, collectively. In particular, this means that there are no correlations between different rounds, enabling us to treat each round equally. 

Since Alice's quantum signals went through the quantum channel, which is under Eve's control, we do not know a priori if there is a maximum photon-number in the states Bob receives.  Moreover, since the worst-case scenario occurs when Eve possesses a purification of Bob's states, her purifying system is also infinite dimensional. Consequently, we require a security proof that encompasses infinite dimensional systems.  

Within Renner's finite-size framework \cite{Renner_2005}, the leftover hashing lemma tells us that if Alice and Bob apply a randomly chosen hash function from the family of two-universal hash functions, the output is secure as long as it is smaller than Eve's uncertainty about Alice's and Bob's initial key strings. However, Renner's initial work assumes finite-dimensional Hilbert spaces, so we cannot apply his results directly. To resolve this, we use the leftover hashing lemma against infinite dimensional side information (\cite[Proposition 21]{Berta_2016}) to derive our entropic condition on the key length (Lemma \ref{lemma:leftoverHashing} in Appendix \ref{apdx:TechnicalLemmas}). It remains to take the effect of classical communication during the error correction phase into account. Thanks to Lemma \ref{lemma:removingClassRegister} in Appendix \ref{apdx:TechnicalLemmas} we can separate Eve's information leakage from information reconciliation and from other sources, and convert the effect of the information reconciliation term into a leakage term, even if one one of the conditioning systems (Eve's purifying system) is still infinite dimensional. We then use various properties of the smooth min-entropy to simplify the expression, giving an upper bound on the secure key rate. Following the methodology of Furrer et al. \cite{Furrer_2011}, we establish the asymptotic equipartition property (AEP) from Renner's thesis~\cite{Renner_2005} and extended it to infinite dimensional quantum side information (Corollary \ref{cor:AEP} in Appendix \ref{apdx:GeneralisationAEP}) \footnote{We generalize this form of the AEP rather than using the extension of the fully quantum AEP \cite{Furrer_2011} as it applies for all block lengths and is simpler to apply for numerical calculations.}.

We aim to apply a generalized version of the numerical security proof framework introduced in Refs. \cite{Coles_2016, Winick_2018}. Hence, we have to represent the relevant occurring quantum systems on a computer and solve optimization problems. We cannot represent infinite dimensional states or spaces on a computer. In particular, there is a maximum  practical dimension that can be represented numerically, which means that the numerical dimension of the problem cannot grow with the block size. To make our security proof rigorous, we do not want to simply assume a cutoff dimension. Thus, we design a method that guarantees that the analysed quantum states have high weight in a low-dimensional (thus, in particular, finite-dimensional) subspace. In more detail, within the framework of our acceptance analysis, we develop an energy test (Theorem~\ref{thm:energy_test} below) that rigorously bounds the effective dimension. If the test passes, except with some small probability $\epsilon_{\mathrm{ET}}$, most of the weight of the states sent lies within the chosen cutoff space $\mathcal{H}^{n_c}$. The remaining errors due to cutting off at some finite-dimension are handled by the dimension reduction method \cite{Upadhyaya_2021} (Theorem~\ref{thm:dimReduction}), which allows us to translate the infinite dimensional optimization problem into a finite-dimensional semidefinite program.

It remains to discuss how the acceptance set is defined. The acceptance analysis guarantees that the state generated by Eve's attack either results in a secure key via the specified protocol or generates statistics such that the protocol aborts except with small probability. Now, recall that the security proof has to be done in infinite dimensions and that the dimension reduction method relates a well-defined infinite dimensional optimization problem with a finite-dimensional one. Thus, the acceptance set has to be defined on the infinite dimensional states. Unfortunately, the convergence of the sample mean to the true mean of unbounded random variables is only limited by Chebyshev's inequality, which gives slow convergence, and hence low key rates. Hence, using unbounded observables would be impractical. To enforce our observables to be bounded, we introduce a ``soft detection limit,'' i.e., we coarse grain the measurement results, which allows us to bound our modified observables. We then can use Hoeffding's inequality to perform a statistical test, the acceptance test (Theorem \ref{Thm:ParameterEstimation} below), and obtain bounds on the expectations. Mathematically, we distinguish between two scenarios for both tests. Either the test fails, meaning that with high probability the observed statistical quantity does not correspond to a state in our acceptance set, or the test passes. Hence, after performing both the energy test and the acceptance test, we know that the actual state is $\epsilon_{\mathrm{ET}}+\epsilon_{\mathrm{AT}}$ close to the set we consider in our security analysis. 

Finally, we obtain a semidefinite program that we solve with an extension of the numerical framework presented in \cite{Coles_2016, Winick_2018}.

\subsection{Bounding observables}\label{sec:BoundingObservables}
As we we argued in the previous section, it is crucial for the security proof that the observables are bounded. To achieve this, inspired by real detectors, we modify our detector model such that detectors have a finite detection range, i.e., possible measurement outcomes are confined in a finite region $\mathcal{M}$, for example, $q,p \in \mathcal{M}=[-M,M]$ of the phase space. We note that this parameter $M$ does not have to be exactly the physical limit of the real detector (e.g., the value corresponding to the maximal output of the analog-to-digital converter (ADC) ) as we simply introduce a `soft detection limit' that only has to be smaller than the physical detection limit. This method takes results $q$ and $p$ with values larger (smaller) than $M$ ($-M$) and simply sets them to $M$ ($-M$). We want to highlight that this means that we do not need to model the exact physical process happening when strong laser pulses enter the detector, as long as we set $M$ small enough. Effectively, we introduce an additional postselection region for measurement results with absolute value larger than $M$, which is already included in our postprocessing framework (see Ref. \cite{Lin_2019}). For the time being, it suffices to know that this allows us to bound every observable $\hat{X}$ by some $x(M) < \infty$ and we postpone the detailed derivation for the observables occurring in the protocol we used to illustrate our security proof to later.

\subsection{Energy Test}\label{sec:EnergyTest}
One of the first steps in our protocol is to perform an energy test. The goal of performing an energy test is to make a probabilistic statement about the maximum energy of a set of states by testing a subset of the total number of signals. Before we come to our version, we briefly discuss issues with existing energy tests \cite{Renner_Cirac_2009, Leverrier_2013, Furrer_2014} that prevented us from applying one of those.

 The energy test presented in Ref. \cite{Renner_Cirac_2009} makes use of the permutation invariance of the individual rounds in many QKD protocols. There, the authors performed testing on some subset of the signals and states that, except with some small probability, most of the remaining rounds live in finite-dimensional Hilbert spaces. However, since there remain some possibly infinite dimensional rounds, we cannot apply this energy test. In contrast, the energy test in Ref. \cite{Leverrier_2013} examines a small subset of all rounds, resulting in a statistical statement about the dimension of all remaining rounds and does not leave back any possibly infinite dimensional systems. However, this test requires a very strong phase-space rotation symmetry that our protocol does not satisfy. The approach in Ref. \cite{Furrer_2014} adds a beam splitter to the experimental setup and therefore performs testing on some small fraction of every signal. However, as this comes with additional components such as a beam splitter and a second heterodyne measurement setup, it is experimentally less favourable. Thus, we developed our own energy test that does not require additional hardware and does not assume any particular phase-space symmetry.

As outlined in the protocol description, after transmitting $N$ rounds of signals, Alice and Bob perform an energy test on $k_{T}<\!\!<N$ modes, i.e. they perform a heterodyne measurement to determine the quadratures of the chosen rounds. As we show in Appendix \ref{apdx:proof_ET}, this can be used for the following statement.

\begin{theorem}[\textbf{Noise robust energy test}\label{thm:energy_test}]
Consider signal states of the form $\rho^{\otimes N}$, and let $k_T \in \mathbb{N}$, $k_T <\!\!< N$, be the number of signals sacrificed for testing and $l_T \in \mathbb{N}$ be the number of rounds that may not satisfy the testing condition. Denote by $(Y_1, ..., Y_{k_T})$ the absolute values of the results of the test measurement. Pick a weight $w\in [0,1]$, a photon cutoff number $n_{c}$ and a testing parameter $\beta_{\mathrm{test}}$ satisfying $M \geq \beta_{\mathrm{test}} > 0$, where $M>0$ is the finite detection range of the heterodyne detectors.  Define $r:= \frac{\Gamma(n_c+1,0)}{\Gamma(n_c+1,\beta_{\mathrm{test}})}$, where, $\Gamma(n,a)$ is the upper incomplete gamma function, as well as  $Q_y := \begin{pmatrix}
    1-y \\ y
\end{pmatrix}$
 and $P_{j} := \begin{pmatrix}
    1-\frac{j}{k_T}\\
    \frac{j}{k_T}
\end{pmatrix}$. Finally, let $\Pi^{\perp}$ be the projector onto the complement of the photon cutoff space $\mathcal{H}^{n_c}$.\\
Then, as long as $\frac{l_T}{k_T} < \frac{w}{r}$ for all $\rho$ such that $\Tr{ \Pi^{\perp} \rho} \geq w$,
\begin{equation}\label{eq:EnergyTest}
\begin{aligned}
    \mathrm{Pr}&\left[\left|\left\{ Y_j:~ Y_j < \beta_{\mathrm{test}}\right\} \right| \leq l_T \right]  \\ ~~~~& \leq (l_T+1) \cdot 2^{-k_T D\left(P_{l_T} || Q_{\frac{w}{r}}\right)}  =: \epsilon_{\mathrm{ET}},
\end{aligned}
\end{equation}
where $D(\cdot||\cdot)$ is the Kullback-Leibler divergence.
\end{theorem}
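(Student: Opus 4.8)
The plan is to exploit the i.i.d.\ product structure $\rho^{\otimes N}$ to reduce the multi-round claim to a single-round operator inequality followed by a routine binomial concentration bound. Because the $k_T$ tested modes are each subjected to an independent heterodyne measurement on an identical copy of $\rho$, the moduli $Y_1,\dots,Y_{k_T}$ are i.i.d.\ real random variables. The whole problem therefore factorizes into (i) a bound on the single-round probability that one heterodyne outcome is ``large,'' expressed directly in terms of the out-of-cutoff weight $\Tr{\Pi^{\perp}\rho}$, and (ii) a concentration statement for a sum of i.i.d.\ indicators whose mean is controlled by that single-round bound.

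For step (i), the central object is the POVM element associated with outcomes of modulus at least $\beta_{\mathrm{test}}$,
\[ G := \frac{1}{\pi}\int_{|\gamma|\ge \beta_{\mathrm{test}}} \ket{\gamma}\!\bra{\gamma}\, d^2\gamma . \]
By rotational invariance $G$ is diagonal in the Fock basis, and using the Husimi $Q$-function of a number state together with a polar change of variables one obtains $\bra{n}G\ket{n}$ as a ratio of an upper incomplete gamma function to $\Gamma(n+1)$. The two structural facts I would extract are that these eigenvalues are \emph{monotonically increasing} in $n$, and that, restricted to the support of $\Pi^{\perp}$ (the high-photon-number sector), they are bounded below by $1/r$. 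Together these give the semidefinite inequality $G \succeq \tfrac{1}{r}\,\Pi^{\perp}$, so that for every $\rho$ with $\Tr{\Pi^{\perp}\rho}\ge w$ the single-round probability of a large outcome satisfies $p := \Tr{G\rho} \ge \tfrac{1}{r}\Tr{\Pi^{\perp}\rho} \ge \tfrac{w}{r}$. This is the quantitative bridge between ``energy outside the cutoff'' and ``the detector registers a large modulus,'' and crucially it holds uniformly over all (possibly infinite-dimensional) admissible $\rho$ with no dimensional assumption.

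For step (ii), let $X_j$ indicate that round $j$ yields a modulus of at least $\beta_{\mathrm{test}}$; the $X_j$ are i.i.d.\ Bernoulli with parameter $p\ge w/r$, and the test accepts only when at most $l_T$ rounds register such a large outcome, i.e.\ $\sum_j X_j \le l_T$. Since $l_T/k_T < w/r \le p$, this is a lower-tail event, which I would control with a Chernoff-type binomial tail bound in Kullback–Leibler form: write the tail as a sum over $j=0,\dots,l_T$, bound each binomial weight by $2^{-k_T D(P_j\|Q_p)}$ via the method of types, use monotonicity of $D(P_j\|Q_p)$ in $j$ on the range $j/k_T<p$ so that the largest summand is the one at $j=l_T$ (yielding the prefactor $l_T+1$), and finally use monotonicity of $D(P_{l_T}\|Q_p)$ in $p$ to replace the unknown $p$ by its worst-case value $w/r$. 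This reproduces the stated bound $(l_T+1)\,2^{-k_T D(P_{l_T}\|Q_{w/r})}$.

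I expect the main obstacle to be step (i): making the infinite-dimensional POVM integral rigorous, verifying the operator inequality $G\succeq \tfrac1r\Pi^{\perp}$ with exactly the incomplete-gamma constant quoted (including the off-by-one in where $\Pi^{\perp}$ begins), and arguing that a single worst-case parameter $w/r$ dominates uniformly over the entire infinite-dimensional family of admissible states. The soft detection limit needs a short but genuine check: capping each quadrature at $\pm M$ with $M\ge\beta_{\mathrm{test}}$ leaves every in-box outcome unchanged and maps every out-of-box outcome to a point of modulus at least $M\ge\beta_{\mathrm{test}}$, so the large-outcome indicator, and hence $p$, is exactly preserved and the ideal-heterodyne computation of $G$ remains valid for the truncated measurement. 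By contrast, once the i.i.d.\ reduction and the per-round bound are established, the concentration in step (ii) is entirely standard.
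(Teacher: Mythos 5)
Your proposal follows essentially the same route as the paper's proof: the operator inequality $G \succeq \tfrac{1}{r}\Pi^{\perp}$ is exactly the paper's $W_1 \leq r V_1$ (obtained there, as you suggest, by diagonalizing the large-outcome POVM element in the Fock basis and using monotonicity of $\Gamma(n+1,\cdot)/\Gamma(n+1,0)$ in $n$), and your concentration step is the paper's method-of-types bound $\Pr[f_{k_T}=P_j]\leq 2^{-k_T D(P_j\|Q_\rho)}$ summed over $j\leq l_T$, with the same two monotonicity arguments replacing $Q_\rho$ by the worst case $Q_{w/r}$ and $P_j$ by $P_{l_T}$ to get the $(l_T+1)$ prefactor. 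Your remark on the soft detection limit also mirrors the paper's observation that clipping outcomes at $\pm M\geq\beta_{\mathrm{test}}$ leaves the binary test statistic unchanged, so the proposal is correct and matches the paper's argument.
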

\begin{proof}
See Appendix \ref{apdx:proof_ET}.
\end{proof}

In other words, the energy test tells us that for all $\rho$ that satisfy $\Tr{\Pi^{\perp}\rho} \geq w$ the energy test will fail except with probability $\epsilon_{\mathrm{ET}}$.

Note that the theorem only tells us something in the case in which the energy test passes. If the energy test fails, we abort the whole protocol and therefore it is (trivially) secure. Furthermore, as Alice's lab is assumed to be inaccessible to Eve, the test needs to be performed only by Bob.

\subsection{Acceptance test}\label{sec:parameter_estimation}
After passing the energy test, working in a finite-dimensional Hilbert space allows us to specify the relevant set for our observables. This is the set we restrict our security analysis to (see our discussion in Section \ref{sec:Protocol}), based on statistical bounds for the observed values of our observables. This statistical test replaces the parameter estimation step in asymptotic security analyses. In particular, for any given set of observed statistics, the protocol must either abort or accept. To be secure, the acceptance set is a set of states such that any state not in the set could only have generated any of the accepted statistics with probability less than $\epsilon_{\mathrm{AT}}$. The following theorem establishes such a set of states.

\begin{theorem}[\textbf{Acceptance Test}]\label{Thm:ParameterEstimation}

Let $\Theta$ be the set of Bob's observables. Let $\mathbf{r} \in \mathbb{R}^{|\Theta|}$ and $\mathbf{t} \in \mathbb{R}^{|\Theta|}_{\geq 0}$, where $|\Theta|$ denotes the cardinality of $\Theta$. Define the set of accepted statistics as
\begin{equation}\label{eq:accepted-observations} \mathcal{O} := \{ \mathbf{v} \in \mathbb{R}^{\Theta} : \forall X \in \Theta, |v_{X} - r_{X}| \leq t_{X} \} \ ,
\end{equation}
and the corresponding acceptance set as
\begin{equation}\label{eq:ATset}
\begin{aligned}
    &\mathcal{S}^{\mathrm{AT}}:=  \left\{ \rho \in \mathcal{D}(\mathcal{H}_A \otimes \mathcal{H}_B^{n_c}):\right. \\
    & \hspace{18mm} \forall X \in \Theta, |\Tr{\rho X} - r_{X}| \leq \mu_{X} + t_{X} \},
\end{aligned}
\end{equation}
where $r_{X}$ is the $X$th element of the vector $\mathbf{r}$ and likewise for $t_{X}$. For every $X \in \Theta$, let \begin{equation*}
    \mu_X := \sqrt{\frac{2x^2}{m_{X}} \ln\left( \frac{2}{\epsilon_{\mathrm{AT}}} \right)}
\end{equation*}
or, if $X$ is a positive semidefinite operator,
\begin{equation*}
    \mu_X := \sqrt{\frac{x^2}{2m_{X}} \ln\left( \frac{2}{\epsilon_{\mathrm{AT}}} \right)},
\end{equation*}
where $x := \|X\|_{\infty}$ and $m_{X}$ is the number of tests for the observable $X$. If $\rho \not \in \mathcal{S}^{\mathrm{AT}}$, then the probability of accepting the statistics generated by the i.i.d. measurements of $\rho^{\otimes n}$ is bounded above by $\epsilon_{\mathrm{AT}}$. That is, the complement of $\mathcal{S}^{\mathrm{AT}}$ are all $\epsilon_{\mathrm{AT}}$-filtered.
\end{theorem}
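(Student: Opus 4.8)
The plan is to reduce the multivariate acceptance condition to a single-observable concentration statement and then invoke Hoeffding's inequality. First I would argue by contrapositive: fix a state $\rho \notin \mathcal{S}^{\mathrm{AT}}$ and bound the acceptance probability. By the definition of $\mathcal{S}^{\mathrm{AT}}$ in Eq.~(\ref{eq:ATset}), failure of membership means there is at least one \emph{witnessing} observable $X^{\star} \in \Theta$ with $|\Tr{\rho X^{\star}} - r_{X^{\star}}| > \mu_{X^{\star}} + t_{X^{\star}}$. The crucial observation is that this single observable suffices: by Eq.~(\ref{eq:accepted-observations}) the protocol accepts only if \emph{every} coordinate of the estimator vector $\mathbf{v}$ lies in $\mathcal{O}$, so acceptance in particular requires $|v_{X^{\star}} - r_{X^{\star}}| \leq t_{X^{\star}}$. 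Hence $\Pr[\mathsf{accept}] \leq \Pr\!\left[|v_{X^{\star}} - r_{X^{\star}}| \leq t_{X^{\star}}\right]$, and no union bound over $\Theta$ is needed---this is precisely what keeps the final bound at $\epsilon_{\mathrm{AT}}$ rather than $|\Theta|\,\epsilon_{\mathrm{AT}}$.

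Next I would convert the acceptance event for $X^{\star}$ into a large-deviation event for its estimator. The estimator $v_{X^{\star}}$ is the empirical mean of $m_{X^{\star}}$ i.i.d.\ post-processed outcomes whose expectation is the true value $\Tr{\rho X^{\star}}$. A triangle inequality chains the witnessing gap with the acceptance window: on the event $|v_{X^{\star}} - r_{X^{\star}}| \leq t_{X^{\star}}$ one has
\begin{align*}
\mu_{X^{\star}} + t_{X^{\star}} &< |\Tr{\rho X^{\star}} - r_{X^{\star}}| \\
&\leq |\Tr{\rho X^{\star}} - v_{X^{\star}}| + |v_{X^{\star}} - r_{X^{\star}}| \\
&\leq |\Tr{\rho X^{\star}} - v_{X^{\star}}| + t_{X^{\star}},
\end{align*}
so acceptance forces $|v_{X^{\star}} - \Tr{\rho X^{\star}}| > \mu_{X^{\star}}$. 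It therefore suffices to bound $\Pr\!\left[|v_{X^{\star}} - \Tr{\rho X^{\star}}| > \mu_{X^{\star}}\right]$.

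Finally I would apply the two-sided Hoeffding bound $2\exp\!\left(-2 m s^2/(\mathrm{range})^2\right)$, which is legitimate precisely because the soft detection limit of Section~\ref{sec:BoundingObservables} renders every $X \in \Theta$ bounded, so the individual outcomes feeding each estimator are confined to a range of length $2x$ for a general Hermitian observable ($x := \|X\|_{\infty}$, eigenvalues in $[-x,x]$) or length $x$ for a positive semidefinite one (eigenvalues in $[0,x]$). Substituting $s = \mu_{X^{\star}}$ together with the relevant range and the stated value of $\mu_{X^{\star}}$ yields exactly $2\exp\!\left(-\ln(2/\epsilon_{\mathrm{AT}})\right) = \epsilon_{\mathrm{AT}}$; the two formulas for $\mu_X$ correspond term by term to the two possible ranges. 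I expect the only genuinely delicate points to be bookkeeping rather than conceptual: verifying that the post-processed heterodyne data feeding each estimator are truly i.i.d.\ with mean $\Tr{\rho X}$ and the claimed almost-sure bounds (so Hoeffding applies verbatim), and ensuring that membership in $\mathcal{O}$ is used as a \emph{necessary} condition for acceptance so that the single-coordinate reduction is valid. Everything else amounts to matching the constants.
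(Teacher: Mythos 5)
Your proposal is correct and follows essentially the same route as the paper's proof: bound the outcomes via $\|X\|_{\infty}$, apply two-sided Hoeffding with the stated $\mu_X$ (the constants match exactly), use the triangle inequality to show that acceptance together with $\rho \notin \mathcal{S}^{\mathrm{AT}}$ forces the deviation event $|v_{X^{\star}} - \Tr{\rho X^{\star}}| > \mu_{X^{\star}}$, and avoid any union bound by reducing to a single violated observable. The only cosmetic difference is that you inline the lift from one observable to many by picking the witnessing $X^{\star}$ directly, whereas the paper proves the single-observable case first and then cites the intersection lemma of Ref.~\cite{George_2020} (Theorem 5 there), whose content is precisely your witnessing argument.
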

\begin{proof}
First, using Hölder's inequality, for the observable $X$, we obtain
\begin{align*} 
    ||X \rho ||_1 \leq ||X||_{\infty} ||\rho||_1 = ||X||_{\infty} =: x,
\end{align*}
therefore, $\mathbb{E}(X) = \Tr{\rho X} \leq x$. This implies that our measurement results with respect to the observable $X$ lie within the interval $[-x,x]$ (or $[0,x]$ in case $X$ is positive semidefinite). Hence, we can apply Hoeffding's inequality~\cite{Hoeffding_1963} which states that 
\begin{equation}\label{eq:HoeffdingBound}
    \mathrm{Pr}\left[\left| \bar{X} - \mathbbm{E}[X]\right| \geq \mu_X\right] \leq 2 e^{-\frac{2 m_{X} \mu_X^2}{(2x)^2}} =: \epsilon_{\mathrm{AT}}^{X} , 
\end{equation}
where $\overline{X}$ is the average of the observations, i.e. the empirical mean. For positive semidefinite $X$, we replace $2x$ in the denominator of the exponent by $x$. Then, we obtain the $\mu_X$ given in the theorem statement from basic algebra.

Next we show that if $|\Theta|=1$ with a unique element $\widehat{X}$ then $\mathcal{S}^{\mathrm{AT}}$ only has $\epsilon_{\mathrm{AT}}-$filtered states in its complement. For this case, we denote the set  $\mathcal{S}^{\mathrm{AT}}_{\widehat{X}}$. Let $v_{\widehat{X}} \in \mathbb{R}$ be the empirical mean of this unique observable, e.g. $v_{\widehat{X}} := \overline{\widehat{X}}$ for $\widehat{X} \in \Theta$. Then we have by Hoeffding's inequality that except with probability $\epsilon_{\mathrm{AT}}^{\widehat{X}}$, $\left|v_{\widehat{X}} - \Tr{\rho \widehat{X}}\right| < \mu_{\widehat{X}}$, where $\rho$ is the state from which we are i.i.d. sampling. Now we show every state not in $\mathcal{S}^{\mathrm{AT}}_{\widehat{X}}$ is $\epsilon_{\mathrm{AT}}-$filtered. Let $\sigma \not \in \mathcal{S}^{\mathrm{AT}}_{\widehat{X}}$. Then,
\begin{align*}
    & \Pr\left[\mathsf{Accept}\mathrm{~AT}|\sigma\right] \\ 
    =& \Pr\left[\left|\Tr{\sigma \widehat{X}} - r_{\widehat{X}}\right| > \mu_{\widehat{X}} + t_{\widehat{X}} 
   \, \land \, \left|v_{\widehat{X}} - r_{\widehat{X}}\right| \leq t_{\widehat{X}}  \right] \ ,
\end{align*}
which follows from the definition of $\mathcal{S}^{\mathrm{AT}}_{\widehat{X}}$ and the definition of the accepted statistics \eqref{eq:accepted-observations}. Now note the implication
\begin{align*}
    & \left|\Tr{\sigma \widehat{X}} - r_{\widehat{X}}\right| > \mu_{\widehat{X}} + t_{\widehat{X}} \,\, \land \,\, |v_{\widehat{X}} - r_{\widehat{X}}| \leq t_{\widehat{X}} \\
    & \hspace{4cm} \Rightarrow  \left|\Tr{\sigma \widehat{X}} - v_{\widehat{X}}\right| > \mu_{\widehat{X}} \ ,
\end{align*}
which follows from the triangle inequality:
\begin{align*}
    \left|\Tr{\sigma \widehat{X}} - r_{\widehat{X}}\right|
    =& \left|\Tr{\sigma \widehat{X}} - v_{\widehat{X}} + v_{\widehat{X}} - r_{\widehat{X}}\right| \\
    \leq& \left|\Tr{\sigma \widehat{X}} - v_{\widehat{X}}| + |v_{\widehat{X}} - r_{\widehat{X}}\right| \ .
\end{align*}
Therefore, combining these points,
\begin{align*}
     \Pr\left[\mathsf{Accept}\mathrm{~AT}|\sigma\right]
     \leq \Pr\left[ \left|\Tr{\sigma \widehat{X}} - v_{\widehat{X}}\right| > \mu_{\widehat{X}}\right] = \epsilon^{\widehat{X}}_{\mathrm{AT}} \ .
\end{align*}
Thus, we have shown in the one-parameter case, the set $\mathcal{S}_{\widehat{X}}^{\mathrm{AT}}$ only has $\epsilon^{\widehat{X}}_{\mathrm{AT}}-$filtered states in its complement. 

All that is left to do is to lift from the one-parameter case to the many-parameter case. We want to do this without using a union bound. To do this, we first set the $\epsilon$-parameter to be the same for every observable, i.e.\ $\forall X, X' \in \Theta:~  \epsilon_{\mathrm{AT}}^X = \epsilon_{\mathrm{AT}}^{X'} =: \epsilon_{\mathrm{AT}}$. Then we note that $\mathcal{S}^{\mathrm{AT}} = \cap_{X \in \Theta} \mathcal{S}^{\mathrm{AT}}_{X}$. It is known that if one takes the intersection of sets each of which only has $\epsilon-$filtered states in the complement, then the intersection also only contains $\epsilon-$filtered states in the complement (\cite[Theorem 5]{George_2020}). Thus, as we established the filtering property for the single observable case, and $\mathcal{S}^{\mathrm{AT}}$ is the intersection of single observable cases, we know that if $\sigma \not \in \mathcal{S}^{\mathrm{AT}}$, then $\sigma$ is $\epsilon_{\mathrm{AT}}-$filtered. This is what we wanted to establish, so this completes the proof.
\end{proof}
Before moving forward, we note that the reason we need the vector $\mathbf{t} \in \mathbb{R}^{\Theta}_{\geq 0}$ is not for security, but rather for the completeness of the protocol. Indeed, if $\mathbf{t} = 0$ then we would filter all states that do not result in statistics $\mathbf{r}$ except with probability $\epsilon_{\mathrm{AT}}$. This has often been the case considered in previous works implicitly and we call this setting the unique acceptance set following terminology from Ref. \cite{George_2020}. However, we note that the probability of obtaining the statistics $\mathbf{r}$ is in general close to zero, so the protocol defined via a unique acceptance set aborts almost all of the time. For this reason a good key length in the unique acceptance setting is in some sense not useful. Thus, we use $\mathbf{t}$ to draw a ``box'' of accepted statistics around some ideal statistics $\mathbf{r}$. This will of course decrease the key rate, but it will increase the completeness, thereby making the protocol practical. Indeed, we can show the following.
\begin{proposition}\label{prop:completeness-bound}
    Let $\mathbf{r}$ be defined via $r_{X} := \Tr{\sigma X}$ where $\sigma$ is the state after the honest implementation of the channel. Let $l_{T},k_{T}$ be the same as in Theorem~\ref{thm:energy_test}, and let $V_1$ be defined as in the proof of Theorem~\ref{thm:energy_test}. Then, assuming $1-\Tr{V_{1}\sigma} < \frac{l_{T}+1}{k_{T}}$, the protocol is $(\nu^{c}_{\mathrm{ET}} + \nu^{c}_{\mathrm{AT}} + \nu^{c}_{\mathrm{EC}})-$complete where $\nu^{c}_{\mathrm{EC}}$ is a parameter of the chosen error correcting code and
    \begin{align*}
        \epsilon^{c}_{\mathrm{ET}} &:= (k_{T} - l_{T}-1)2^{-k_{T}D(P_{l_{T}+1}||Q_{\sigma})} \\
        \epsilon^{c}_{\mathrm{AT}} &:= 2 \sum_{X \in \Theta} e^{-2m_{X} t_{X}^{2}/(4\|X\|^{2}_{\infty})} \ ,
    \end{align*}
    where $m_{X}$ is the number of tests of observable $X$.
\end{proposition}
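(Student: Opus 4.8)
The plan is to prove completeness by a single union bound over the three stages at which the honest protocol can abort---the energy test, the acceptance test, and the error-correction step---and then to bound each failure probability separately under the honest state $\sigma$. Concretely, I would first write $\Pr[\mathsf{Abort}\,|\,\mathsf{Honest}] \le \Pr[\text{ET fails}] + \Pr[\text{AT fails}] + \Pr[\text{EC fails}]$, where each term is the marginal probability that the corresponding step rejects (replacing the conditional rejection probabilities of the later steps by their marginals only weakens the bound). The error-correction term is, by definition of the chosen code, the parameter $\nu^{c}_{\mathrm{EC}}$, so the work reduces to controlling the first two terms and identifying them with $\epsilon^{c}_{\mathrm{ET}}$ and $\epsilon^{c}_{\mathrm{AT}}$. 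Because the honest rounds are i.i.d., each of these reduces to a concentration inequality for a sum of independent bounded random variables whose per-round statistics are exactly $\Tr{\sigma X}$ and $\Tr{V_{1}\sigma}$.

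For the acceptance-test term I would reuse the Hoeffding estimate \eqref{eq:HoeffdingBound} already established in the proof of Theorem~\ref{Thm:ParameterEstimation}, but now applied to $\sigma$ rather than to a filtered state. The hypothesis $r_{X} = \Tr{\sigma X}$ makes the empirical mean $\overline{X}$ an unbiased estimator of the centre of the acceptance box, $\mathbb{E}[\overline{X}] = r_{X}$, and the honest protocol rejects at this stage precisely when some observable leaves its interval, i.e.\ $|\overline{X} - r_{X}| > t_{X}$ for some $X \in \Theta$. Applying the two-sided Hoeffding bound with deviation $t_{X}$ (using the conservative, non-positive-semidefinite form with denominator $4\|X\|_{\infty}^{2}$ uniformly) gives $\Pr[|\overline{X} - r_{X}| > t_{X}] \le 2\exp(-2m_{X}t_{X}^{2}/(4\|X\|_{\infty}^{2}))$, and a union bound over the finitely many observables in $\Theta$ yields exactly $\Pr[\text{AT fails}] \le 2\sum_{X \in \Theta}\exp(-2m_{X}t_{X}^{2}/(4\|X\|_{\infty}^{2})) = \epsilon^{c}_{\mathrm{AT}}$. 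This is the only place the vector $\mathbf{t}$ enters, and it makes transparent why enlarging the box improves completeness at the expense of the key rate.

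For the energy-test term I would recycle the binomial machinery from the proof of Theorem~\ref{thm:energy_test}. Since the $k_{T}$ test rounds are i.i.d., each independently violates the single-round energy condition with probability $1 - \Tr{V_{1}\sigma}$, so the number of violations is $B \sim \mathrm{Bin}(k_{T}, 1 - \Tr{V_{1}\sigma})$, and the honest protocol fails the energy test exactly when $B > l_{T}$. This is the opposite tail to the one controlled for security in Theorem~\ref{thm:energy_test}, which is precisely why the statement carries the hypothesis $1 - \Tr{V_{1}\sigma} < \tfrac{l_{T}+1}{k_{T}}$: it places the honest mean $k_{T}(1-\Tr{V_{1}\sigma})$ strictly below the failure threshold $l_{T}+1$, so that $\Pr[B > l_{T}]$ is a genuine upper-deviation event. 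I would then bound each tail term $\Pr[B = j]$ for $j \ge l_{T}+1$ by the relative-entropy (Chernoff) estimate $2^{-k_{T}D(P_{l_{T}+1}\|Q_{\sigma})}$, where $Q_{\sigma}$ is the single-round Bernoulli distribution with parameter $1-\Tr{V_{1}\sigma}$, and sum over the tail values to recover the prefactor $k_{T}-l_{T}-1$ and hence $\Pr[\text{ET fails}] \le \epsilon^{c}_{\mathrm{ET}}$.

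Combining the three estimates through the union bound gives the claimed $(\nu^{c}_{\mathrm{ET}} + \nu^{c}_{\mathrm{AT}} + \nu^{c}_{\mathrm{EC}})$-completeness, with $\nu^{c}_{\mathrm{ET}} = \epsilon^{c}_{\mathrm{ET}}$ and $\nu^{c}_{\mathrm{AT}} = \epsilon^{c}_{\mathrm{AT}}$. I expect the energy-test term to be the main obstacle rather than the acceptance-test term: one must re-express the honest rejection event as the correct (upper) tail of the same binomial used for security, verify that it is the hypothesis on $\Tr{V_{1}\sigma}$---and not the $w/r$ condition of Theorem~\ref{thm:energy_test}---that licenses the tail bound, and correctly track the summation-over-values prefactor so that it matches $k_{T}-l_{T}-1$. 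The acceptance-test term, by contrast, is a routine Hoeffding-plus-union-bound computation once one notices that the choice $r_{X} = \Tr{\sigma X}$ renders the empirical means unbiased.
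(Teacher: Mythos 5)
Your proposal is correct and follows essentially the same route as the paper's proof: a union bound over the energy-test, acceptance-test, and error-correction abort events, Hoeffding plus a union bound over $\Theta$ for the acceptance term, and a relative-entropy (type/Chernoff) tail bound for the energy term, with the hypothesis $1-\Tr{V_{1}\sigma} < \frac{l_{T}+1}{k_{T}}$ used exactly as in the paper to ensure every tail term's divergence is dominated by $D(P_{l_{T}+1}\|Q_{\sigma})$. The only cosmetic difference is that the paper phrases the energy-test bound in the language of types via Eq.~\eqref{eq:type-prob-bound}, whereas you phrase it as a binomial upper-deviation bound; these are the same estimate.
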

\begin{proof}
See Appendix \ref{apdx:Completeness}.    
\end{proof}
\noindent We note that if $t_{X} = 0$ for any $X$, then the protocol is always $1$-complete by these bounds, which we do not want.

To summarise, the above theorem tells us that states whose expected values deviate too far from $\mathbf{r}$ in terms of $\mu_X$ and $\mathbf{t}_{X}$, and hence are not part of the acceptance set $\mathcal{S}^{\mathrm{AT}}$, will only be accepted by our testing procedure with very low probability. Thus, at the cost of introducing a small probability of error $\epsilon_{\mathrm{AT}}$, the remaining security analysis focuses on states in $\mathcal{S}^{\mathrm{AT}}$. Additionally, via smart choices of parameter $\mathbf{t}$, the theorem allows us to tune the success probability of the protocol.

\subsection{Finite-size security proof}\label{sec:FinSizeSecPrf}
After having finished all preparations, we now establish the security proof of the present CV-QKD protocol against i.i.d. collective attacks. We state our main result, the security statement against i.i.d. collective attacks, in the following theorem and prove it afterwards.

\begin{theorem}[\textbf{Security statement against i.i.d. collective attacks}]\label{thm:SecurityStatement}
Let $\mathcal{H}_A$ and $\mathcal{H}_B$ be separable Hilbert spaces and let $\epsilon_{\mathrm{ET}}, \epsilon_{\mathrm{AT}}, \bar{\epsilon}, \epsilon_{\mathrm{EC}}, \epsilon_{\mathrm{PA}} > 0$. The objective QKD protocol is $\epsilon_{\mathrm{EC}} + \max\left\{\frac{1}{2}\epsilon_{\mathrm{PA}}+\bar{\epsilon}, \epsilon_{\mathrm{ET}}+\epsilon_{\mathrm{AT}} \right\}$-secure against i.i.d. collective attacks, given that, in case the protocol does not abort, the secure key length is chosen to satisfy
\begin{equation}
\begin{aligned}
    \frac{\ell}{N} \leq \frac{n}{N} &\left[ \min_{\rho \in \mathcal{S}^{\mathrm{E\&A}}} H(X|E')_{\rho} - \delta(\bar{\epsilon}) - \Delta(w) \right]     \\
    & - \delta_{\mathrm{leak}}^{\mathrm{EC}} - \frac{2}{N} \log_2\left( \frac{1}{\epsilon_{\mathrm{PA}}} \right),
\end{aligned}
\end{equation}
where $\delta^{\mathrm{EC}}_{\mathrm{leak}}$ takes the classical error correction cost into account, $\Delta(w)$ is given in Eq.~(\ref{eq:weightCorrection}), $\delta(\epsilon) := 2 \log_2\left( \mathrm{rank}(\rho_X)+3 \right) \sqrt{\frac{\log_2\left(2/\epsilon \right)}{n}}$ and $\mathcal{S}^{\mathrm{E\&A}}$ is defined below.
\end{theorem}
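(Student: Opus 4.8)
The plan is to establish composable security by separately bounding correctness and secrecy and then combining them via the triangle inequality indicated in Section~\ref{sec:epsSecurityFramework}. Correctness is the easy part: the error-correction step guarantees $\Pr[s_A \neq s_B] \leq \epsilon_{\mathrm{EC}}$ by construction of the code, so $\epsilon_{\mathrm{cor}} = \epsilon_{\mathrm{EC}}$, and this supplies the leading additive $\epsilon_{\mathrm{EC}}$ term in the claimed security parameter. All the remaining work is on the secrecy term, where I would target the bound $\epsilon_{\mathrm{sec}} \leq \max\{\tfrac{1}{2}\epsilon_{\mathrm{PA}} + \bar{\epsilon}, \epsilon_{\mathrm{ET}} + \epsilon_{\mathrm{AT}}\}$ and recover the stated total by adding $\epsilon_{\mathrm{EC}}$.

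For the secrecy term I would work backwards from the leftover hashing lemma. First I would invoke its infinite-dimensional form (Lemma~\ref{lemma:leftoverHashing}, based on \cite{Berta_2016}) to reduce secrecy to a statement about the smooth min-entropy $H_{\mathrm{min}}^{\bar{\epsilon}}(X|E'C)$ of Bob's raw key $X$ conditioned on Eve's (infinite-dimensional) system $E'$ together with the classical error-correction transcript $C$; choosing the key length so that $\ell \leq H_{\mathrm{min}}^{\bar{\epsilon}}(X|E'C) - 2\log_2(1/\epsilon_{\mathrm{PA}})$ produces the $\tfrac{1}{2}\epsilon_{\mathrm{PA}} + \bar{\epsilon}$ contribution. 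Next I would strip the transcript $C$ using Lemma~\ref{lemma:removingClassRegister}, which costs the leakage term $\delta_{\mathrm{leak}}^{\mathrm{EC}}$ and leaves $H_{\mathrm{min}}^{\bar{\epsilon}}(X|E')$ even while $E'$ remains infinite-dimensional. Exploiting the i.i.d.\ collective-attack structure, I would then apply the generalized AEP (Corollary~\ref{cor:AEP}) to lower-bound $H_{\mathrm{min}}^{\bar{\epsilon}}(X^n|E'^n) \geq n\,[H(X|E') - \delta(\bar{\epsilon})]$ with the stated correction $\delta(\bar{\epsilon})$, converting the smooth min-entropy of $n$ rounds into $n$ copies of the single-round von Neumann entropy.

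It then remains to lower-bound the single-round conditional entropy $H(X|E')$ uniformly over all states consistent with the tests. The acceptance test (Theorem~\ref{Thm:ParameterEstimation}) guarantees that, up to probability $\epsilon_{\mathrm{AT}}$, any non-aborting state has expectation values inside the box defining $\mathcal{S}^{\mathrm{AT}}$, while the energy test (Theorem~\ref{thm:energy_test}) guarantees, up to probability $\epsilon_{\mathrm{ET}}$, that the weight outside the cutoff space $\mathcal{H}^{n_c}$ is at most $w$. Intersecting these constraints defines $\mathcal{S}^{\mathrm{E\&A}}$, so I would replace $H(X|E')$ by $\min_{\rho \in \mathcal{S}^{\mathrm{E\&A}}} H(X|E')_{\rho}$. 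Since this optimization still lives in infinite dimensions, I would then apply the dimension-reduction theorem (Theorem~\ref{thm:dimReduction}) with objective $f = H(X|E')$ to replace it by a finite-dimensional program at the price of the penalty $\Delta(w)$, using precisely the weight $w$ certified by the energy test. Assembling this chain of inequalities reproduces the displayed key-length formula.

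The final and most delicate step is combining the failure probabilities into the $\max$ rather than a sum, which I would handle by a case distinction on the true state $\rho$. If $\rho \in \mathcal{S}^{\mathrm{E\&A}}$, the chain above bounds the trace distance to the ideal key-Eve state by $\tfrac{1}{2}\epsilon_{\mathrm{PA}} + \bar{\epsilon}$. If $\rho \notin \mathcal{S}^{\mathrm{E\&A}}$, the two tests force the non-abort probability $(1-p_{\mathrm{abort}})$ below $\epsilon_{\mathrm{ET}} + \epsilon_{\mathrm{AT}}$, so the $(1-p_{\mathrm{abort}})$-weighting built into the secrecy definition caps that branch by $\epsilon_{\mathrm{ET}} + \epsilon_{\mathrm{AT}}$ irrespective of Eve's correlations; taking the worst of the two branches yields the $\max$. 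I expect the main obstacle to be ensuring that \emph{every} entropic manipulation — the leftover hashing lemma, the removal of the classical register, and especially the AEP — genuinely survives the passage to infinite-dimensional side information $E'$, since the standard statements in Renner's thesis assume finite dimension. This is exactly why the infinite-dimensional versions collected in the appendices must be invoked at each step, and checking their hypotheses (subnormalization, the smoothing-ball conventions, and separability of $\mathcal{H}$) together with the consistency of the weight $w$ between the energy test and the dimension-reduction penalty is where the care is needed.
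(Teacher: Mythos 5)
Your proposal is correct and mirrors the paper's own proof step for step: the same decomposition into $\epsilon_{\mathrm{EC}}$-correctness plus secrecy, the same chain of tools (infinite-dimensional leftover hashing via Lemma~\ref{lemma:leftoverHashing}, transcript removal via Lemma~\ref{lemma:removingClassRegister}, the generalized AEP of Corollary~\ref{cor:AEP}, and dimension reduction with the energy-test weight $w$), and the same case distinction on whether $\rho \in \mathcal{S}^{\mathrm{E\&A}}$, with the subnormalized conditioning on non-abort capping the filtered branch at $\epsilon_{\mathrm{ET}}+\epsilon_{\mathrm{AT}}$ and yielding the $\max$ structure. No gaps; the points you flag as delicate (infinite-dimensional side information, smoothing-ball conventions, consistency of $w$) are exactly the ones the paper's appendix lemmas are built to handle.
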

\begin{proof}
According to our assumption, after completing $N$ rounds of the quantum phase in the present QKD protocol, Alice and Bob share the state $\rho_{AB}^{\otimes N} \in \mathcal{D}((\mathcal{H}_A\otimes\mathcal{H}_B)^{\otimes N})$. Alice and Bob choose randomly $k_T$ of those rounds for testing, where they first perform the energy test, followed by the acceptance test. Recall the notion of $\epsilon$-securely filtered states; an input state $\sigma$ is called $\epsilon$-securely filtered if the probability that the corresponding statistical test does not abort on $\sigma$ is less than $\epsilon$. This allows us to define
\begin{align*}
    \mathcal{S}^{\mathrm{ET}} := &\left\{\sigma \in \mathcal{D}_{\leq}(\mathcal{H}_A \otimes \mathcal{H}_B^{n_c} \otimes \mathcal{H}_E): \textrm{ purification of } \rho_{AB}   \right. \\
    & \left. \land \mathrm{Tr}_{E}\left[\sigma\right] \text{ is not $\epsilon_{\mathrm{ET}}$-securely filtered in the ET} \right\}.
\end{align*}
Analogously, as a subset of all states that have not been filtered by the energy test, we define the set of states that have not been filtered by the acceptance test with probability greater than $1-\epsilon_{\mathrm{AT}}$ 
\begin{align*}
    \mathcal{S}^{\mathrm{E\&A}} := &\left\{ \sigma \in \mathcal{S}^{\mathrm{ET}}: \right.\\
    &\left.~ \mathrm{Tr}_{E}\left[\sigma\right] \text{ is not $\epsilon_{\mathrm{AT}}$-securely filtered in the AT} \right\}.
\end{align*}
This set combines the results of Theorem~\ref{thm:energy_test} and Theorem~\ref{Thm:ParameterEstimation}. In what follows, when we refer to `passing the testing' we mean that both tests pass successfully.

Because of the nature of statistical testing, in our security analysis we never know the actual state Bob receives, but only decide to proceed or abort the protocol, based on if the received state lies within a predefined set.
Therefore, we split the security argument into two cases: 
\begin{itemize}
    \item[1.)] the input state $\sigma$ is in set $\mathcal{S}^{\mathrm{E\&A}}$,
    \item[2.)] the input state $\sigma$ is not in set $\mathcal{S}^{\mathrm{E\&A}}$.  
\end{itemize}
Denote by $\Omega$ the event that Alice's and Bob's testing succeeds, i.e., the tests pass. Note that if we write a state conditioned on an event we do not imply that this state was renormalised. 

To ease notation, we define the map $\mathcal{E}^{\mathrm{QKD}} := \mathcal{E}^{\mathrm{key}} \circ \mathcal{E}^{\mathrm{AT}} \circ \mathcal{E}^{\mathrm{ET}}$, representing the action of the QKD protocol, where $\mathcal{E}^{\mathrm{ET}}$ and $\mathcal{E}^{\mathrm{AT}}$ denote the quantum channels representing the energy test and the acceptance test and $\mathcal{E}^{\mathrm{key}}$ is the map denoting the classical postprocessing.

Let $\rho_{ABE} = \sigma^{\otimes N}$ be an arbitrary i.i.d. input state and $\rho_{S_A S_B E'} := \mathcal{E}^{\mathrm{QKD}}\left( \rho_{ABE} \right)$. Here $E'$ denotes Eve's register $E$ including all information she gathered from the classical communication between Alice and Bob. This state can either pass or fail the testing. Note that the protocol is trivially secure if the testing procedure aborts the protocol. For the difference between $\rho_{S_A S_B E'}$ and a uniformly distributed key that is fully decoupled from Eve, we obtain
\begin{align*}
    &\frac{1}{2} \left|\left| \rho_{S_A S_B E'} - \pi_{S_A S_B} \otimes \rho_{E'} \right|\right|_1 \\
    & = (1 - \mathrm{Pr}[\Omega]) \cdot 0 + \frac{1}{2} \left|\left| \rho_{S_A S_B E'|\Omega} - \pi_{S_A S_B} \otimes \rho_{E'|\Omega} \right|\right|_1 \\
    & \leq \frac{1}{2} \left|\left| \rho_{S_A S_B E'|\Omega} - \rho_{S_A S_BE'|\Omega \land S_A = S_B}\right|\right|_1  \\
    & ~~+ \frac{1}{2} \left|\left| \rho_{S_A S_BE'|\Omega \land S_A = S_B}- \pi_{S_A S_B} \otimes \rho_{E'|\Omega} \right|\right|_1\\
    &\leq \epsilon_{\mathrm{EC}} + \frac{1}{2} \left|\left| \rho_{S_A S_BE'|\Omega \land S_A = S_B}- \pi_{S_A S_B} \otimes \rho_{E'|\Omega} \right|\right|_1,
\end{align*}
where, for the second inequality, we inserted the definition of $\epsilon_{\mathrm{EC}}$. The last term can be simplified further, taking into account that the input was assumed to be i.i.d. and therefore the two cases 
\begin{enumerate}
    \item[1.)] the test passes and the input is in set $\mathcal{S}^{\mathrm{E\&A}}$
\end{enumerate}
and
\begin{enumerate}
    \item[2.)]  the test passes and the input is not in $\mathcal{S}^{\mathrm{E\&A}}$
\end{enumerate}
are mutually exclusive. We obtain
\begin{align*}
 &\frac{1}{2} \left|\left| \rho_{S_A S_BE'|\Omega \land S_A=S_B}- \pi_{S_A S_B} \otimes \rho_{E'|\Omega} \right|\right|_1   \\
 &\leq \max\left\{ \mathrm{Pr}\left[A\right] \frac{1}{2} \left|\left| \rho_{S_A E'|\Omega}- \pi_{S_A} \otimes \rho_{E'|\Omega} \right|\right|_1 \right., \\
 &~~~~~\left. \mathrm{Pr}\left[ A^c \right]\frac{1}{2} \left|\left| \rho_{S_A E'|\Omega}- \pi_{S_A} \otimes \rho_{E'|\Omega} \right|\right|_1 \right\},
\end{align*}
where $A:=\{\sigma^{\otimes n}: \sigma \in \mathcal{S}^{\mathrm{E\&A}}\}$ and, following the argument in the proof of \cite[Theorem 3.2.5]{Baudry_2015}, we dropped the register $S_B$ since we condition on $S_A=S_B$, which means that the ideal output and the conditioned output have perfectly correlated classical registers, and hence contain redundant information. The second term in the maximum is upper bounded by 
\begin{align*}
    &\mathrm{Pr}\left[A\right]\frac{1}{2} \left|\left| \rho_{S_AE'|\Omega}- \pi_{S_A} \otimes \rho_{E'|\Omega} \right|\right|_1 \\
    &\leq \mathrm{Pr}\left[~\Omega~|~ \sigma^{\otimes n}\notin \mathcal{S}^{\mathrm{E\&A}}\right]\\
    &\leq  \epsilon_{\mathrm{ET}}+\epsilon_{\mathrm{AT}},
\end{align*}
where the first inequality uses the fact that the distinguishability given that Alice and Bob accept the testing is upper bounded by the probability of passing the test, and, for the second inequality we used Theorem~\ref{thm:energy_test} and Theorem~\ref{Thm:ParameterEstimation} which define the set $\mathcal{S}^{\mathrm{E\&A}}$.

It remains to upper bound the first term in the maximum, which refers to the case where $\sigma \in \mathcal{S}^{\mathrm{E\&A}}$ and describes the fact that Alice's and Bob's shared key is only partially secret. This problem is addressed by performing privacy amplification, which is characterised by the leftover hashing lemma \cite[Lemma 5.6.1]{Renner_2005}. We use the version that applies to infinite dimensional side information (Lemma~\ref{lemma:leftoverHashing}). In Lemma~\ref{lemma:leftoverHashing}, we set $\epsilon_{\mathrm{sec}} = \frac{\epsilon_{\mathrm{PA}}}{2} + 2 \epsilon'$ and $\epsilon' = \frac{\bar{\epsilon}}{2}$. Then, for any input $\sigma^{\otimes n}$ with $\sigma \in \mathcal{S}^{\mathrm{E\&A}}$, the output will satisfy
\begin{align*}
    &\frac{1}{2} \left|\left| \rho_{S_AE'|\Omega} - \pi_{S_A} \otimes \rho_{E'|\Omega} \right|\right|_1\\
    &\leq \frac{1}{2} \left|\left| \rho_{S_A S_B E'|\Omega} - \pi_{S_A S_B} \otimes \rho_{E'|\Omega} \right|\right|_1\\
    &\leq \frac{1}{2} \epsilon_{\mathrm{PA}} + 2 \epsilon'\\
    &= \frac{1}{2} \epsilon_{\mathrm{PA}} + \bar{\epsilon}
\end{align*}
as long as we choose 
\begin{equation}
    \ell \leq \min_{\sigma \in \mathcal{S}^{\mathrm{E\&A}}} H_{\mathrm{min}}^{\bar{\epsilon}}(X|E'C)_{\mathcal{E}^{\mathrm{QKD}}\left(\sigma^{\otimes n}\right)} - 2 \log_2\left(\frac{1}{\epsilon_{\mathrm{PA}}}\right).
\end{equation}
Register $C$ denotes the information reconciliation transcript. Therefore, putting things together, we obtain

\begin{align*}
    &\frac{1}{2} \left|\left| \rho_{S_A S_B E'} - \pi_{S_A S_B} \otimes \rho_{E'} \right|\right|_1 \\
    & \leq \epsilon_{\mathrm{EC}} + \max\left\{\frac{1}{2}\epsilon_{\mathrm{PA}}+\bar{\epsilon}, \epsilon_{\mathrm{ET}}+\epsilon_{\mathrm{AT}} \right\} =: \epsilon.
\end{align*}

Lemma \ref{lemma:removingClassRegister} in Appendix \ref{apdx:TechnicalLemmas} extends a statement in \cite[Lemma 6.4.1]{Renner_2005} to infinite dimensional side information and allows us to remove the classical register $C$ containing the transcript of the information reconciliation procedure from the smooth min-entropy at the cost of $\mathrm{leak}_{\mathrm{EC}}$ bits,
\begin{equation}\label{eq:intermedResOnEll}
 \ell \leq \min_{\sigma \in \mathcal{S}^{\mathrm{E\&A}}} H_{\mathrm{min}}^{\bar{\epsilon}}(X|E')_{\mathcal{E}^{\mathrm{QKD}}\left(\sigma^{\otimes n}\right)} - 2 \log_2\left(\frac{1}{\epsilon_{\mathrm{PA}}}\right) -  \mathrm{leak}_{\mathrm{EC}}.  
\end{equation}

Finally, we use Corollary \ref{cor:AEP}, which is our version of the asymptotic equipartition property \cite[Corollary 3.3.7]{Renner_2005}, to rewrite the smooth min-entropy in terms of the von-Neumann entropy
\begin{equation}
\begin{aligned}
  \ell \leq &n \left[\min_{\sigma \in \mathcal{S}^{\mathrm{E\&A}}} H(X|E')_{\mathcal{E}^{\mathrm{QKD}}\left(\sigma^{\otimes n}\right)} - \delta(\bar{\epsilon})  \right] \\
  &-2 \log_2\left(\frac{1}{\epsilon_{\mathrm{PA}}}\right) -  \mathrm{leak}_{\mathrm{EC}}.  
\end{aligned}
\end{equation}

While this completes our finite-size analysis, we want to optimise over finite-dimensional (in more detail: low-dimensional) states. Our energy test (Theorem \ref{thm:energy_test}) guarantees that any state that is not $\epsilon_{\mathrm{ET}}$-filtered has at most weight $w$ outside the cutoff space (defined by parameter $n_c$ in the energy test), and hence satisfies $\Tr{\rho \Pi^{n_c}} = 1-w$. Using Theorem \ref{thm:dimReduction}, we can relate the values of our objective function on inputs from an infinite dimensional Hilbert space to its values on projections onto a finite-dimensional subspace $\mathcal{H}^{n_c}$ by taking an additional weight-dependent correction term $\Delta(w)$ (see Eq.~(\ref{eq:weightCorrection}))into account. Hence, we arrive at

\begin{equation}
\begin{aligned}
    \ell \leq n &\left[\min_{\sigma \in \mathcal{S}^{\mathrm{E\&A}}} H(X|E')_{\mathcal{E}^{\mathrm{QKD}}\left(\sigma^{\otimes n}\right)} - \delta(\bar{\epsilon}) - \Delta(w) \right] \\& -2 \log_2\left(\frac{1}{\epsilon_{\mathrm{PA}}}\right) -  \mathrm{leak}_{\mathrm{EC}}.
\end{aligned}
\end{equation}

Finally, we divide both sides by $N$, the total number of signals sent, and obtain

\begin{equation}
    \begin{aligned}\label{eq:KeyRateFormula}
    \frac{\ell}{N} \leq \frac{n}{N} &\left[\min_{\sigma \in \mathcal{S}^{\mathrm{E\&A}}} H(X|E')_{\mathcal{E}^{\mathrm{QKD}}\left(\sigma^{\otimes n}\right)} - \delta(\bar{\epsilon}) - \Delta(w) \right] \\ &- \frac{2}{N} \log_2\left(\frac{1}{\epsilon_{\mathrm{PA}}}\right) -  \delta_{\mathrm{leak}}^{\mathrm{EC}},
    \end{aligned}
\end{equation}
where we defined $\delta_{\mathrm{leak}}^{\mathrm{EC}} := \frac{\mathrm{leak}_{\mathrm{EC}}}{N}$ (see Section \ref{sec:ErrorCorrection}).
Hence, the key we obtain is $\epsilon_{\mathrm{sec}} = \max\left\{\frac{1}{2}\epsilon_{\mathrm{PA}}+\bar{\epsilon}, \epsilon_{\mathrm{ET}}+\epsilon_{\mathrm{AT}} \right\}$-secret and $\epsilon_{\mathrm{cor}} = \epsilon_{\mathrm{EC}}$-correct, so $\epsilon := \epsilon_{\mathrm{sec}} + \epsilon_{\mathrm{cor}}$-secure, which finishes the proof.

\end{proof}

\section{Numerical Security Proof Method}\label{sec:NumPrfMethod}
Having derived the secure key rate formula and having transformed it into a finite-dimensional optimisation problem, it remains to calculate lower bounds on the secure key rate numerically. It turns out that the optimisation problem in Eq.(\ref{eq:KeyRateFormula}) is a semidefinite program with convex, nonlinear objective function $f:~\mathcal{D}(\mathcal{H}^{n_c}) \rightarrow \mathbb{R}, ~ \sigma \mapsto H(X|E')_{\sigma}$. Since we are interested in finding a reliable lower bound on the secure key rate, it does not suffice to find an approximate solution to this minimisation problem.  Therefore, we apply the numerical method developed in Refs. \cite{Coles_2016,Winick_2018}, which we are going to summarise briefly in what follows.

\subsection{Idea of the numerical method}\label{sec:NumMethod}
The idea of the numerical method is to split the problem into two steps. In the first step, the nonlinear problem is solved approximately, for example by an iterative first-order algorithm like the Frank-Wolfe algorithm \cite{Frank_Wolfe_1956}. We end up with an approximate solution $\rho_{\text{Step 1}}$ on the minimisation problem. This is, however, not a reliable lower bound on the secure key rate. Therefore, we apply step 2, which helps us to transform this suboptimal solution into a reliable lower bound, using a linearisation and SDP-duality theory. We calculate $\nabla f(\rho_{\text{Step 1}})$, the gradient of our objective function at the approximate minimum from step 1, and use a relaxation theorem to formulate an expanded, linearised semidefinite program. This can be seen as lower bounding the (convex) objective function by a hyperplane, tangent at $\rho_{\text{Step 1}}$. To take numerical imprecisions into account, the feasible set is enlarged by some small $\epsilon_{\mathrm{num}}$. Then the dual of this expanded SDP is solved numerically. Because of results from duality theory in semidefinite programming, every feasible point of this dual SDP is a lower bound on the initial optimisation problem. Consequently, we obtain a reliable lower bound on the optimisation problem in Eq.~(\ref{eq:KeyRateFormula}), and hence a reliable lower bound on the secure key rate.

\subsection{Infinite dimensional, asymptotic optimisation}
In this section, we summarise the details of the formulation of the used numerical method for a DM-CV QKD protocol in the asymptotic limit for infinite dimensional Hilbert spaces, following Refs. \cite{Lin_2019, Upadhyaya_2021}. Even though we treat a more general case, this will be helpful to us to understand the formulation of the optimisation problem in the finite-size regime.

As outlined in the protocol description, in the prepare-and-measure picture, Alice chooses one out of $N_{\mathrm{St}}$ coherent states $\Psi_i \in \{ \alpha_0, ..., \alpha_{N_\mathrm{St}-1} \}$ with probability $p_i$ and sends it to Bob. This can be modelled as Alice preparing the pure state

\begin{equation}
\ket{\Psi}_{AA'} = \sum_{i =0}^{N_{\mathrm{St}-1}} \sqrt{p_i} \ket{i}\otimes\ket{\Psi_i},
\end{equation}

 where Alice keeps register $A$ and sends register $A'$ to Bob via the quantum channel $\mathcal{E}_{A'\rightarrow B}$,
 \begin{equation}
\rho_{AB} = (\mathrm{id}_{A} \otimes \mathcal{E}_{A'\rightarrow B})\left(\ket{\Psi}\!\bra{\Psi} \right),
 \end{equation}
 which is under Eve's control. We denote the joint state of Alice, Bob and Eve by $\rho_{ABE}$.
 As Eve cannot access Alice's lab in the source replacement scheme, P\&M schemes are subject to the constraint $\rho_{A} := \sum_{i,j=0}^{N_{\mathrm{St}}-1}\sqrt{p_i p_j} \langle \Psi_j |\Psi_i \rangle |i\rangle \langle j |_{A} $.

We model the postprocessing steps and the key map conducted by Alice and Bob as quantum channel $\Phi$ that stores the resulting key in the classical register $Z$,
\begin{equation}
  \Phi(\rho_{ABE}) := \sum_{z=0}^{N_{\mathrm{St}}-1} |z\rangle\langle z |_{Z} \otimes \mathrm{Tr}_{AB}\left[ \rho_{ABE} \left( \mathbbm{1}_A\otimes R_B^z \otimes \mathbbm{1}_E\right) \right],
\end{equation}
where $R_B^z$ is the so-called region operator, describing the key map on Bob's side (see Figure \ref{fig:Sketch_Keymap}),
\begin{equation}
    R_B^z := \frac{1}{\pi} \int_{\Delta_r}^{\infty} \int_{\frac{2z-1}{N_{\mathrm{St}}} \pi}^{\frac{2z+1}{N_{\mathrm{St}}} \pi} r |r e^{i \phi}\rangle \langle r e^{i \phi}|~d\phi ~dr.
\end{equation}

\begin{figure}
\subfloat[Standard key map. \label{fig:Sketch_Keymap_old} ]{
   \includegraphics[width=0.46\textwidth]{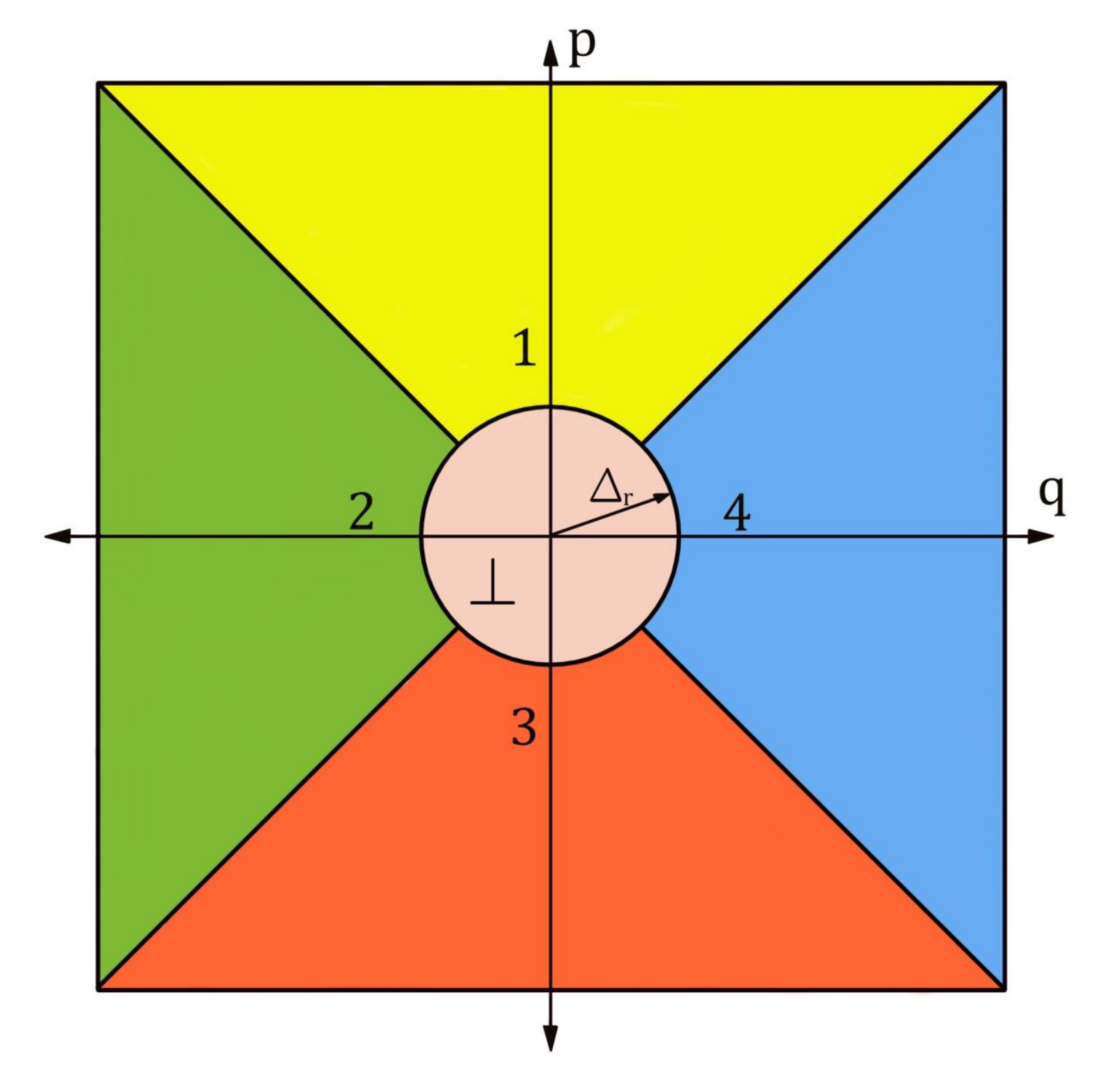}}\\
\subfloat[Modified key map. \label{fig:Sketch_Keymap_new}]{
    \includegraphics[width=0.46\textwidth]{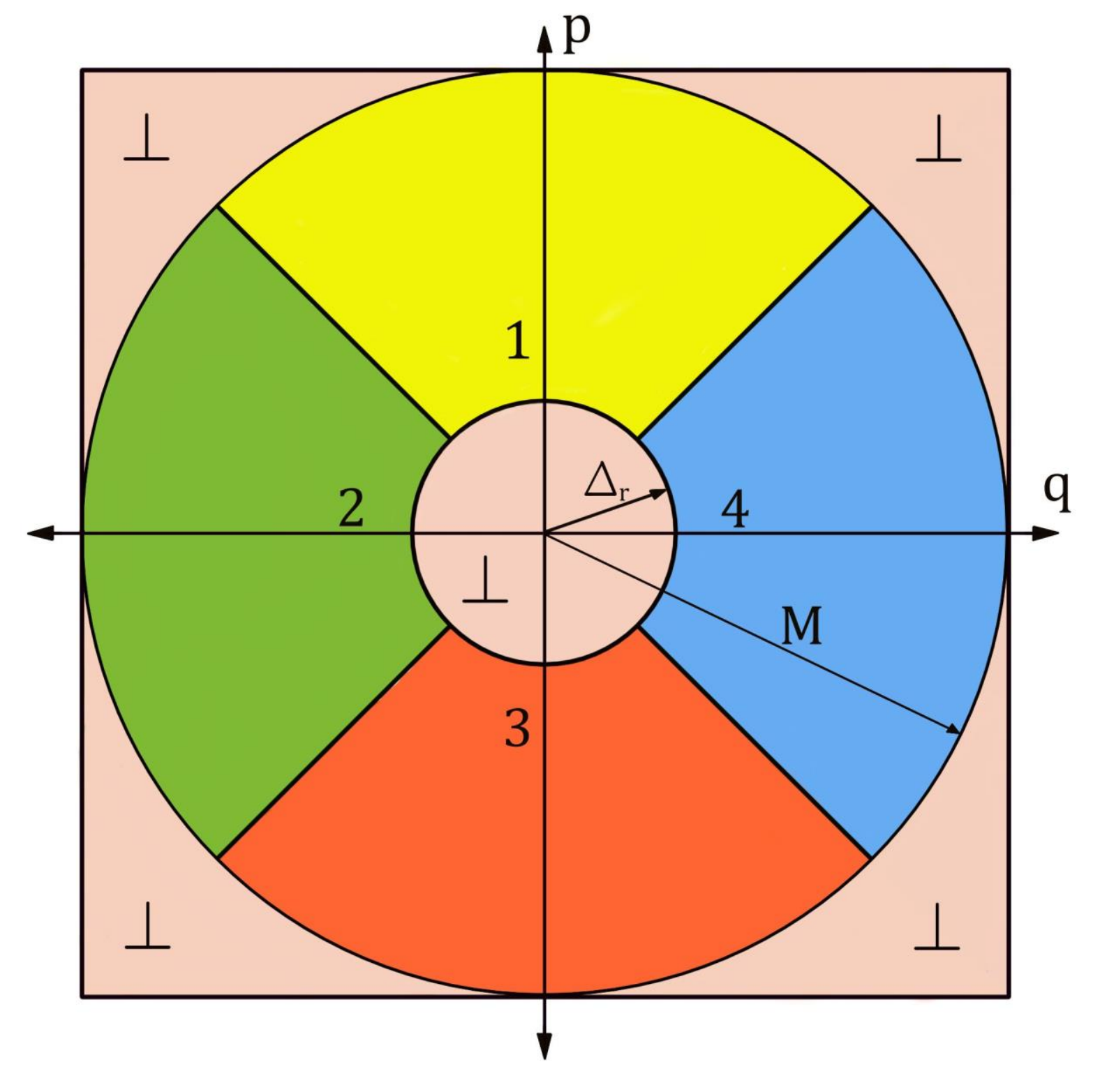}}
\makeatletter\long\def\@ifdim#1#2#3{#2}\makeatother
\caption{ Sketch of the key map in phase space in (a) the standard setting for the ideal protocol, and (b) the modified setting with confined measurement (see Section~\ref{sec:BoundingDetectionRange} and the discussion in Appendix~\ref{sec:APDX:ModifiedKeymap}). The symbol $\perp$ denotes results that are discarded while the shaded areas illustrate which points in phase space are associated with which symbol.\label{fig:Sketch_Keymap}}
\end{figure}

In the asymptotic limit, the secure key rate is given by the Devetak-Winter formula \cite{Devetak_Winter_2006}. Taking realistic error correction into account, this leads to the following expression
\begin{equation}\label{eq:AsymptoticKRformula}
    R^{\infty} = \min_{\rho_{ABE} \in \mathcal{S}^{\infty}} H(Z| E)_{\Phi(\rho_{ABE})} - \delta_{\mathrm{leak}}^{\mathrm{EC}},
\end{equation}
where $\mathcal{S}^{\infty}$ denotes the feasible set of the optimisation to find secure key rates in the asymptotic limit. In what follows, we provide details about this set. 

For ease of notation, we denote the objective function by $f(\rho)$. The set $\mathcal{S}^{\infty}$ is defined by constraints due to Bob's measurements as well as by additional requirements on the quantum state shared between Alice and Bob. As outlined above, we assume Alice's lab is inaccessible to Eve so that her share of the state cannot change during the key-generation process. Next, we take Bob's measurements into account. We generically denote Bob's measurement operators by $\hat{\Gamma}_j$ and the corresponding expected values by $\gamma_j$, where $j \in \{1,..., N_{\mathrm{meas}}\}$ with $N_{\mathrm{meas}}$ being the number of different measurement operators Bob applies.  Additionally, as we optimise over a set of valid density matrices, we require the trace to be equal to one and demand positive semidefiniteness. Then, the generic structure of the optimisation problem reads
\begin{align*}
    \min~ &f(\rho)\\
    \text{subject to }&\\
    & \mathrm{Tr}_{B}\left[\rho\right] = \rho_A\\
    & \Tr{\hat{\Gamma}_j \rho } = \langle \gamma_i \rangle\\
    & \Tr{\rho} = 1\\
    &\rho \geq 0,
\end{align*}
where $j$ runs from $1$ to the number of constraints we introduce. Hence, $\mathcal{S}^{\infty}$ reads
\begin{align*}
    \mathcal{S}^{\infty} := \left\{ \rho \in \mathcal{D}(\mathcal{H}_{AB}):~ \mathrm{Tr}_{B}\left[\rho\right] = \rho_A, \Tr{\hat{\Gamma}_j\rho} = \gamma_j \right\},
\end{align*}
where, to ease the notation, we included the constraint $\Tr{\rho} = 1$ into our set of measurement-induced constraints, by defining $\hat{\Gamma}_0 := \mathbbm{1}$ and the corresponding expected value by $\gamma_0 := 1$. Consequently, we redefine the index set for $j$ as $\{0, ..., N_{\mathrm{meas}}\}$.

As outlined in the protocol description, Bob performs a heterodyne measurement so that he has access to the moments of the received signals. We follow the approach in~\cite{Upadhyaya_2021} to use the photon-number operator $\hat{n}$ and its square $\hat{n}^2$ as Bob's observables and then express our constraints in the displaced number basis since these combinations turned out to give good estimation of the weight when we applied the dimension reduction method. Therefore, $\Gamma_j \in \{\mathbbm{1}, \ket{i}\!\bra{i}\otimes\hat{n}_{\beta_i}, \ket{i}\!\bra{i}\otimes \hat{n}^2_{\beta_i} \}$ and $\gamma_j \in \{1, \langle \hat{n}_{\beta_i} \rangle , \langle \hat{n}^2_{\beta_i} \rangle \}$ for $i \in \{0, ..., N_{\mathrm{St}} - 1 \}$.

As $f$ is a convex function and the feasible set $\mathcal{S}^{\infty}$ is convex, we have a convex optimisation problem, which can be solved using the numerical security proof framework in Refs. \cite{Coles_2016, Winick_2018}.

\subsection{Finite-size optimisation problem}
Note that the objective function of the optimisation in the asymptotic limit (\ref{eq:AsymptoticKRformula}) is the same as for the finite-size problem (\ref{eq:KeyRateFormula}), while the feasible sets differ. Furthermore, there are additional correction terms for the finite-size version of the key rate formula. However, as these terms are constant with respect to the performed optimisation, they do not influence the structure of the SDP.

In the finite-size regime, we do not know the expected values of our observables with certainty. As outlined in the protocol description, we  fix some small $\epsilon_{\mathrm{AT}}>0$ and a testing ratio $r_{\mathrm{test}} \in (0,1)$ such that $k := r_{\mathrm{test}} \cdot N$  and perform testing on $k$ randomly selected rounds. According to Theorem \ref{Thm:ParameterEstimation}, we obtain bounds $\mu_j$ which define our acceptance set. Therefore, our actual optimisation problem reads

\begin{align}
\begin{aligned}\label{eq:HighDimOptProb}
    \alpha := \min ~&f(\rho)\\
    \text{subject to }& \\
    & \mathrm{Tr}_{B}\left[\rho\right] = \rho_A\\
    & \left| \Tr{\hat{\Gamma}_j\rho } - \gamma_j \right| \leq \mu_j\\
    & \Tr{\rho} = 1\\
    & \rho \geq 0    
\end{aligned}
\end{align}
for $j \in \{1, ..., 2 N_{\mathrm{St}}\}$. Note that the constraints $\mathrm{Tr}_{B}\left[\rho\right] = \rho_A$ and $\Tr{\rho} = 1$ are not subject to finite-size effects.

It is shown in Appendix \ref{APDX:FinDimOptProb} that finally, after applying the dimension reduction method, and various steps to bring the SDP to a more favourable form, we obtain the following (primal) optimisation problem

\begin{align}
\begin{aligned}\label{eq:PrimalProblem}
    \beta :=& \min~ f(\bar{\rho})\\
    \text{s.t. }& \\
    & \Tr{P}+\Tr{N} \leq 2 \sqrt{w}\\
    & P \geq \mathrm{Tr}_{B}\left[ \overline{\rho} \right] - \rho_A\\
    & N \geq - \left( \mathrm{Tr}_{B}\left[ \bar{\rho} \right] - \rho_A \right) \\
    &\Tr{\left( \ket{j}\!\bra{j}\otimes\hat{n}_{\beta_j}\right) \bar{\rho} } \geq \mu_j + \langle \hat{n}_{\beta_j} \rangle - w ||\hat{n}_{\beta_j}||_{\infty} \\
    & \Tr{\left( \ket{j}\!\bra{j}\otimes\hat{n}_{\beta_j}\right) \bar{\rho} } \leq \mu_j + \langle \hat{n}_{\beta_j} \rangle  \\
    &\Tr{\left( \ket{j}\!\bra{j}\otimes\hat{n}_{\beta_j}\right) \bar{\rho}} \geq -\mu_j + \langle \hat{n}_{\beta_j} \rangle - w ||\hat{n}_{\beta_j}||_{\infty} \\
    & \Tr{\left( \ket{j}\!\bra{j}\otimes\hat{n}_{\beta_j}\right) \bar{\rho}} \leq -\mu_j + \langle \hat{n}_{\beta_j} \rangle  \\ 
    & \Tr{\left( \ket{j}\!\bra{j}\otimes \hat{n}^2_{\beta_j} \right) \bar{\rho}}\geq \mu_j + \langle \hat{n}_{\beta_j}^2 \rangle - w ||\hat{n}_{\beta_j}^2||_{\infty}\\
    &\Tr{\left( \ket{j}\!\bra{j}\otimes \hat{n}^2_{\beta_j} \right) \bar{\rho}} \leq \mu_j + \langle \hat{n}^2_{\beta_j} \rangle \\
    & \Tr{\left( \ket{j}\!\bra{j}\otimes\hat{n}^2_{\beta_j}\right) \bar{\rho} } \geq -\mu_j + \langle \hat{n}_{\beta_j}^2 \rangle - w ||\hat{n}_{\beta_j}^2||_{\infty} \\
    & \Tr{\left( \ket{j}\!\bra{j}\otimes\hat{n}^2_{\beta_j}\right) \bar{\rho} } \leq -\mu_j + \langle \hat{n}^2_{\beta_j} \rangle  \\ 
    & 1-w \leq \Tr{\overline{\rho}} \leq 1\\
    & \bar{\rho}, P, N \geq 0
\end{aligned}
\end{align}
where $j \in \{0,..., N_{\mathrm{St}}-1\}$ and $a_j$ and $b_j$ denote the $j$-th entry of the vectors $\Vec{a}$ and $\Vec{b}$, respectively.
It remains to solve this SDP numerically to obtain lower bounds on the secure key rate. In the present work, we use the technique introduced in \cite{Winick_2018}, where secure key rates are obtained via the two-step process described in Section \ref{sec:NumMethod}. For the reader's convenience, we derive the corresponding dual problem in Appendix \ref{APDX:FinDimOptProb}.

\subsection{Error correction}\label{sec:ErrorCorrection}

In this subsection, we briefly explain the information-reconciliation leakage term. In the case one is able to carry out the information reconciliation procedure in the Slepian-Wolf limit~\cite{Slepian_Wolf_1973}, the EC leakage term reads
\begin{equation*}
    \delta_{\mathrm{EC}} := H(Y|X) = H(Y) - I(X:Y).
\end{equation*}
Here, $X$ and $Y$ represent Alice's and Bob's key strings. Since we cannot expect to perform error correction in the optimal limit, we assume only a fraction $0 < \beta \leq 1$ of the mutual information between Alice's and Bob's key strings can be used. Hence, $I(X:Y)$ in the formula above is replaced by $\beta I(X:Y)$. Therefore,
\begin{align*}
    \delta_{\mathrm{EC}} \mapsto \delta_{\mathrm{EC}}^{\beta} &:= H(Y) - \beta I(X:Y) \\ &= H(Y) - \beta \left[ H(Y) - H(Y|X) \right]\\
    &= (1-\beta) H(Y) + \beta H(Y|X).
\end{align*}
Finally, the total leakage term is the sum of the correction term we just derived and the verification term. We obtain~\cite{George_2020}
\begin{align}\label{eq:leakageTerm}
\mathrm{leak}_{\mathrm{EC}} \leq n ~\delta_{\mathrm{EC}}^{\beta} + \log_2\left( \frac{2}{\epsilon_{\mathrm{EC}}} \right).
\end{align}
As the present protocol allows postselection, not all signals might be used for signal generation. Hence, not all signals have to undergo the information reconciliation procedure. Therefore, we replace $\mathrm{leak}_{\mathrm{EC}} \mapsto p_{\mathrm{pass}} \mathrm{leak}_{\mathrm{EC}}$, where $p_{\mathrm{pass}}$ is the probability that a round passes the postselection routine.

\subsection{Trusted, nonideal detector approach}
So far, it has been assumed that Bob's detectors are ideal (i.e., $100\%$ detection efficiency and no electronic noise) and we therefore dedicated all noise to Eve. In real-world implementations, detectors are noisy and have detection efficiency smaller than one. The trusted, nonideal detector model introduced in Ref. \cite{Lin_2020} enables us to include realistic detectors in our key rate calculations and allows us to trust those parts of the noise that come from Bob's detection devices. This assumption is reasonable since Bob's detectors are located in his lab, and hence assumed to be inaccessible to Eve.

The idea of the model is to introduce an additional beam splitter in front of every perfect homodyne detector that measure either the $q$ or $p$ quadrature. The transmission is chosen to be equal to the detector efficiencies $\eta_q$ and $\eta_p$. At the second input port of both of those beam splitters, the signal is mixed with a thermal state with mean photon-numbers $\bar{n}_i = \frac{\nu_{\mathrm{el, } i}}{2(1-\eta_i)}$ for $i \in\{q,p\}$. Therefore, the output signals experience electronic noise $\nu_{\mathrm{el,~}q}$ and $\nu_{\mathrm{el,~}p}$, respectively. Finally, two ideal homodyne detectors are used to perform the measurement. For more details regarding the trusted, nonideal detector we refer the reader to Ref. \cite{Lin_2020}. A sketch of the trusted detector scheme can be found in \cite[Figure 2]{Lin_2020}.

\subsection{Bounding the detection range}\label{sec:BoundingDetectionRange}
As outlined in Section \ref{sec:BoundingObservables},  we ensure fast convergence of our acceptance test by constraining the observables to the detection range of the heterodyne detector. In line with the discussion in Ref. \cite{Upadhyaya_2021}, in the ideal (nonrestricted) detector model, operators $\hat{X}$ can be represented as
\begin{equation}\label{eq:OperatorXideal}
    \hat{X}= \int_{\zeta \in \mathbb{C}} f_X(\zeta) \frac{1}{\pi} \ket{\zeta}\!\!\bra{\zeta}~d^2\zeta,
\end{equation}
where $f_{\hat{X}}(\zeta)$ is some scalar-valued function and $\frac{1}{\pi} \ket{\zeta}\!\!\bra{\zeta}$ is the POVM corresponding to an ideal heterodyne measurement. Its noisy counterpart reads
\begin{equation}\label{eq:OperatorXnonIdeal}
    [\hat{X}]' = \int_{\zeta \in \mathbb{C}} f_{\hat{X}}(\zeta) G_{\zeta}~d^2\zeta,
\end{equation}
where $G_{\zeta}$ is the nonideal trusted detector POVM derived in Ref. \cite{Lin_2020}. In order to restrict the measurement results to the interval $\mathcal{M}=[-M,M]^2$, we need to modify function $f_{\hat{X}}$. This involves partitioning the phase space into distinct regions and replacing $f_X(\zeta)$ with $g_X(\zeta)$. Define $\tilde{\mathcal{Q}}_1 := \mathbb{R}_{+}\times\mathbb{R}_{+} \setminus \mathcal{M}$, $\tilde{\mathcal{Q}}_2 := \mathbb{R}_{-}\times\mathbb{R}_{+} \setminus \mathcal{M}$, $\tilde{\mathcal{Q}}_3 := \mathbb{R}_{-}\times\mathbb{R}_{-} \setminus \mathcal{M}$ and  $\tilde{\mathcal{Q}}_4 := \mathbb{R}_{+}\times\mathbb{R}_{-} \setminus \mathcal{M}$. Function $g_{\hat{X}}(\zeta)$ takes into consideration the finite detection range that has been proposed,
\begin{equation}
\begin{aligned}
    &g_{\hat{X}}(\zeta_x, \zeta_y):= \\
    & \begin{cases}
    f_{\hat{X}}(\zeta_x, \zeta_y)  & \text{if } (\zeta_x, \zeta_y)  \in \mathcal{M},\\
    f_{\hat{X}}(\min\{\zeta_x, M\}, \min\{\zeta_x, M\})  & \text{if } (\zeta_x, \zeta_y)  \in \tilde{\mathcal{Q}}_1,\\
    f_{\hat{X}}(\max\{\zeta_x, -M\}, \min\{\zeta_x, M\})  & \text{if } (\zeta_x, \zeta_y)  \in \tilde{\mathcal{Q}}_2,\\
    f_{\hat{X}}(\max\{\zeta_x, -M\}, \max\{\zeta_x, -M\})  & \text{if } (\zeta_x, \zeta_y)  \in \tilde{\mathcal{Q}}_3,\\
    f_{\hat{X}}(\min\{\zeta_x, M\}, \max\{\zeta_x, -M\})  & \text{if } (\zeta_x, \zeta_y)  \in \tilde{\mathcal{Q}}_4,\\
    \end{cases}
\end{aligned}
\end{equation}
where $\zeta_x$ denotes the real part of $\zeta$, while $\zeta_y$ denotes the imaginary part of $\zeta$. In the current protocol, we perform measurements of $\hat{n}$ and $\hat{n}^2$ in the displaced number basis. We derive expressions for the observations of $[\hat{n}]_b'$ and $[\hat{n}^2]_b'$, which are the bounded and noisy equivalents of $\hat{n}$ and $\hat{n}^2$, in Appendix~\ref{APDX:Bounded_Measurement}.

With our observables now being bounded, we can readily observe that we obtain $x_{\hat{n}} = M^2-\frac{1}{2}$ and $x_{\hat{n}^2} = M^4-\frac{1}{2}M^2$ for the constants involved in Theorem \ref{Thm:ParameterEstimation}. 


\section{Results}\label{sec:Results}
\subsection{Quadrature phase-shift keying protocol}
To provide numerical key rates, we restrict our proof for general discrete-modulated CV-QKD protocols to the special case of $N_{\mathrm{St}} = 4$ signal states arranged on a circle in the phase space, a so-called quadrature phase-shift keying protocol. Therefore, in every round, Alice prepares one of the states $\{|\alpha\rangle, | i\alpha\rangle, |-\alpha\rangle, |-i \alpha\rangle  \}$ with equal probability, where $\alpha \in \mathbb{R}$ is arbitrary but fixed. Bob then performs heterodyne detection on the states he receives. While our security proof works for both direct- and reverse reconciliation, we proceed with reverse reconciliation that is known to outperform direct reconciliation for CV-QKD protocols in the long-distance regime. Therefore, Bob performs the key map and assigns symbols to his measurement results, depending on which area of phase space the measurement outcomes lie. This includes the option of performing postselection to increase the key rate. For more details regarding the protocol, we refer the reader to \cite[Protocol 2]{Lin_2019}. Since our description of the numerical method in Section \ref{sec:NumMethod} was general, the expressions there apply to the present special case if we choose $N_{\mathrm{St}} = 4$. 

\subsection{Choice of the weight}\label{sec:choice_of_weight}
In our security proof, the weight $w = \Tr{\rho \Pi^{\perp}}$ plays a twofold role. On the one hand, it appears as a parameter in the energy test, while on the other hand, it determines the size of the correction term $\Delta(w)$ arising from the dimension reduction method. While the asymptotic dimension reduction method gives a bound on the weight via another semidefinite program, in our case $w$ is chosen freely during the energy test. This means that, in principle, one could choose the weight arbitrarily small, resulting in a negligible correction term without corrupting our security statement (possibly resulting in a large $\epsilon_{\mathrm{ET}}$). However, since the energy test only makes a statement in the case when the test passes and aborts otherwise (in which case it is trivially secure), this comes at the cost of a high failure rate of the energy test, hence ultimately a low average key rate. Therefore, the choice of the weight $w$ is a balancing act between aiming for a low correction term and making the energy test pass with high probability. In order to assure that, we required that the energy test passes with high probability in the honest implementation, i.e., when Eve is passive. Therefore, we modelled the quantum channel connecting Alice and Bob as a noisy and lossy Gaussian channel with excess noise $\xi$ and transmittance $\eta$ and calculated the expected weight $w_{\mathrm{exp}}$ outside the cutoff space. Then, one possible choice for the weight is $w \geq w_{\mathrm{exp}}$. We want to highlight that this was a choice motivated by practicality and is not a requirement of the security proof. Alternatively, we may fix $\epsilon_{\mathrm{ET}}$ and just solve the expression for $\epsilon_{\mathrm{ET}}$ obtained from the energy testing theorem (Theorem \ref{thm:energy_test}) for $w$ to obtain $w_{\epsilon}$. In practice, we introduce a minimal weight $w_{\mathrm{min}}$ and choose the weight $w := \max \{w_{\mathrm{exp}}, w_{\epsilon}, w_{\mathrm{min}} \}$ to make sure it is both compatible with the chosen $\epsilon_{\mathrm{ET}}$ and large enough such that the energy test passes with high probability on the honest implementation.

\begin{figure}
\includegraphics[width=0.48\textwidth]{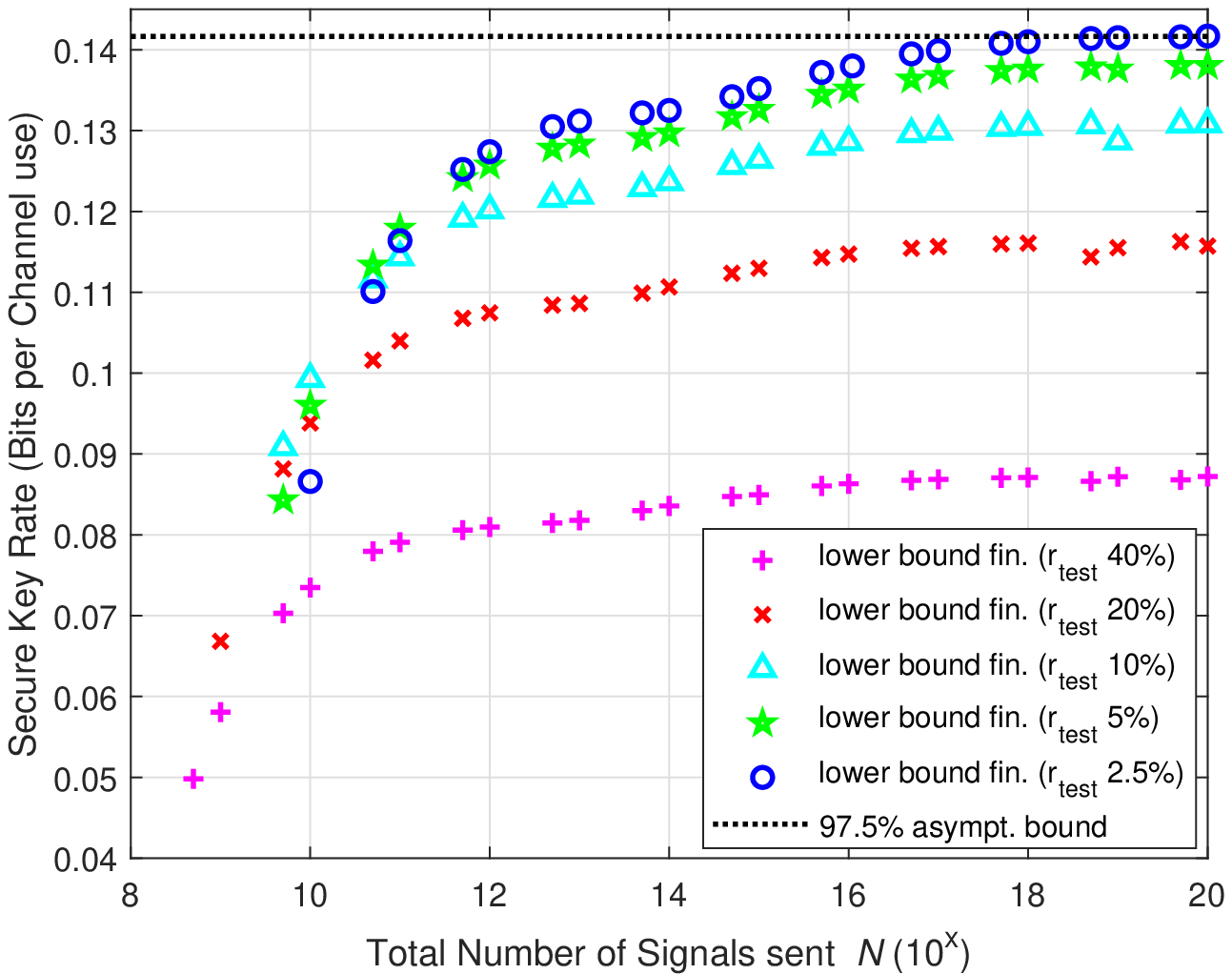}
\caption{Secure key rates over total number of signals sent $N$ for $L= 10$km, $\alpha = 0.85$, $\Delta_r = 0.45$ for ideal, untrusted detectors. \label{fig:KRoverN} }
\end{figure}

\subsection{Details about the implementation}
Before we come to our numerical results, we briefly discuss our choice of parameters and some technical details. 
To demonstrate the performance of the chosen quadrature phase-shift keying protocol under our finite-size security proof, we simulate the expectation values (see Eqs.~(\ref{eq:HighDimOptProb}) and optimisation problems derived thereof) obtained from an experiment by modelling Alice's coherent states passing a noisy and lossy Gaussian channel with excess noise $\xi$ and channel transmittance $\eta$. The excess noise is understood as preparation noise on Alice's side so that it is taken to be fixed at the input of the channel. Hence, Bob experiences the effective noise $\eta \xi$. Note that we measure the noise in the shot noise units. Within the whole work, our transmittance model as a function of the transmission distance $L$ is $\eta = 10^{-0.02 L}$. This corresponds to a transmission of $-0.2$ dB/km that is a common value for optical fibres at the telecom wavelength. 

While the total number of transmitted signals $N$, as well as the testing ratio $\frac{k_T}{N}$ varies, we fix $l_T/k_T$ (see Theorem~\ref{thm:energy_test}) to be $10^{-8}$ and $M=5$. Furthermore, we fix the $\epsilon$ parameters to be 
$\epsilon_{\mathrm{EC}} = \frac{1}{5}\times 10^{-10}$, $\epsilon_{\mathrm{PA}} = \frac{1}{5}\times 10^{-10}$, $\bar{\epsilon} = \frac{7}{10}\times 10^{-10}$, $\epsilon_{\mathrm{AT}} = \frac{7}{10}\times 10^{-10}$ and $\epsilon_{\mathrm{ET}} = \frac{1}{10}\times 10^{-10}$ such that the total security parameter (see Theorem \ref{thm:SecurityStatement}) is $\epsilon = 10^{-10}$. We emphasise that our security proof is independent of the choice of parameters and that those values are chosen for demonstration purposes only.

We applied the numerical framework in \cite{Coles_2016, Winick_2018} to find a lower bound on the minimisation problem in Eq.~(\ref{eq:PrimalProblem}), where the coding was carried out in \textsc{Matlab}\textsuperscript{\textregistered}, version R2020a. The semidefinite programs were modelled using CVX \cite{cvx1, cvx2}, where we used the MOSEK solver (version 9.1.9) \cite{mosek} to solve the semidefinite programs.

\subsection{Simulation Results}

\begin{figure}
\includegraphics[width=0.48\textwidth]{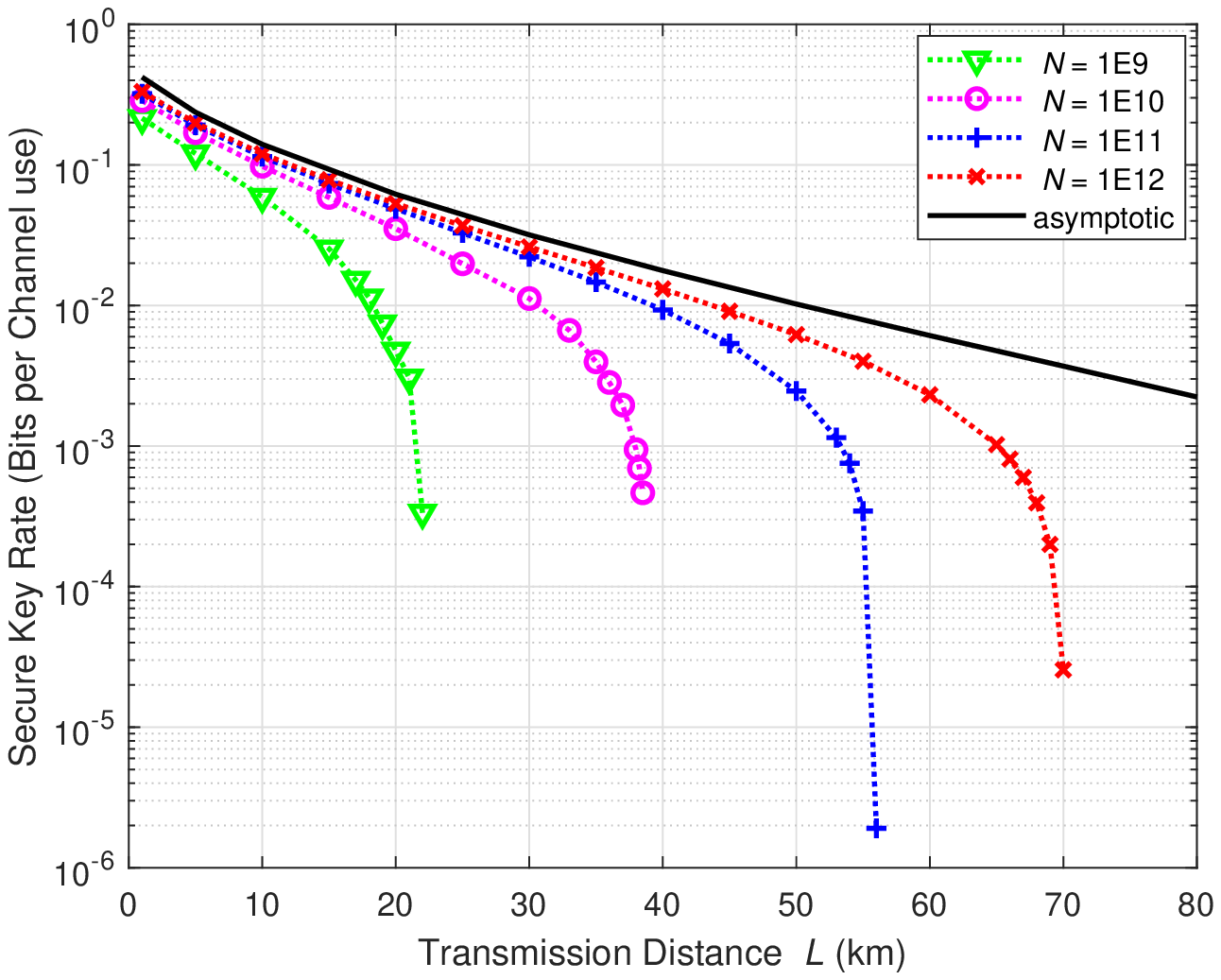}
\caption{Secure key rates over transmission distance $L$ for different total number of signals $N$. We optimised the coherent state amplitude $\alpha$ and the radial postselection parameter $\Delta_r$ and fixed testing ratio $r_{\mathrm{test}} = 10\%$. All curves correspond to ideal, untrusted detectors. \label{fig:KRoverL_diff_N} }
\end{figure}

We present plots of the obtained secure key rates for various parameter choices. If not mentioned otherwise, we fix the preparation noise $\xi = 0.01$ and in all plots, we assume that an error correction code with efficiency $\beta = 0.95$ is used, which is achievable with the latest low-density parity-check codes. We note that it is not entirely clear if constant $\beta$ is also achievable for wide ranges of SNR. However, our security proof method is independent of the particular $\beta$ and  for illustration purposes we fixed it to $0.95$, in accordance with common values used in the literature. If we do not state a particular value for the amplitude $\alpha$ and the postselection parameter $\Delta_r$, the corresponding curves have been obtained after optimising over $\alpha$ and $\Delta_r$ via a coarse grained search. We chose the cutoff space dimension $n_c = 20$, which turned out to be a sound compromise between numerical feasibility (calculation time) and impact on the obtained key rates (see the role of the cutoff number in the security proof in Section~\ref{sec:SecProof}).

In the first two subsections, we present plots in the unique-acceptance scenario (see Section \ref{sec:parameter_estimation}), which is standard in the literature and allows for comparison. We start by discussing our results for untrusted, ideal detectors (so $\eta_d = 1$ and $\nu_{\mathrm{el}} = 0$), which is followed by results for trusted, nonideal detectors ($\eta_d <1$ and $\nu_{\mathrm{el}} > 0$). However, as elaborated on after Theorem \ref{Thm:ParameterEstimation}, in the unique-acceptance scenario, practical protocols will abort with probability close to $1$. Therefore, in the final section, we briefly discuss the nonuniequ-acceptance scenario and present key rates for this practical and realistic case.

\subsubsection{Untrusted, ideal detectors}
In what follows, we present our results for untrusted, ideal detectors. The key rates shown are measured in bits per channel use and the plotted asymptotic key rate curves were generated with the method described in Ref. \cite{Upadhyaya_2021}.

Figure \ref{fig:KRoverN} shows the obtained secure key rates over the total number of signals sent $N$. We fixed the transmission distance to be $10$ km, the coherent state amplitude $\alpha = 0.85$ and the radial postselection parameter $\Delta_r = 0.45$, while we varied the testing ratios (TR). As one can see, we obtain secure key rates for $N \geq 5\times10^8$ for $r_{\mathrm{test}} = 40\%$. Furthermore, our secure key rates approach the asymptotic limit from Ref. \cite{Upadhyaya_2021} for $N\rightarrow \infty$ and low testing ratios. This shows that our analysis is tight in the asymptotic limit. We note that we had to adapt the asymptotic key rate curve in Figure \ref{fig:KRoverN} compared to Ref. \cite{Upadhyaya_2021} because of different weights, and hence different correction terms $\Delta(w)$. The reason behind this is as follows. The weight in the asymptotic regime without testing is determined by solving an additional SDP, and is hence fundamentally different than in our analysis including an energy test (see also the discussion in Section \ref{sec:choice_of_weight}). Our statistical approach allows us to work with smaller weights, and hence smaller correction terms. In order to make the key rate curves comparable, one therefore has to readjust the asymptotic curves in Ref. \cite{Upadhyaya_2021} by the weight correction.

Next, we consider the performance of our secure key rates as a function of the transmission distance for a different number of total rounds $N$ in Figure \ref{fig:KRoverL_diff_N}. We fix the testing ratio to $r_{\mathrm{test}} = 10\%$. Again, we note that for the asymptotic key rates, we do not effectively sacrifice signals for testing. Hence the asymptotic key rates are conceptionally different to the finite-size key rates in the plot and would correspond to finite-size key rates with a testing ratio equal to $0\%$. This explains the tiny difference in key rates between the asymptotic reference curve and the finite-size key rates for low transmission distances.

Our observations from Figure \ref{fig:KRoverN} indicate that it is unlikely positive key rates are obtained for $N$ smaller than $N=5\times 10^8$ at $L=10$ km. Therefore we start our investigation at $N = 10^9$ in Figure \ref{fig:KRoverL_diff_N}, where we have hope to surpass $L=10$ km significantly and go up to $N=10^{12}$, which is the largest $N$ we assume is achievable in experiments with state-of-the-art lasers and heterodyne detectors in a practical amount of time. Note that we optimised over the coherent state amplitude $\alpha$ and the postselection parameter $\Delta_r$ via coarse grained search. We observe positive key rates up to $22$ km for $N=10^{9}$, up to $38.5$ km for $N=10^{10}$, up to $56$ km for $N=10^{11}$ and up to $70$ km for $N=10^{12}$.

\begin{figure}
\includegraphics[width=0.48\textwidth]{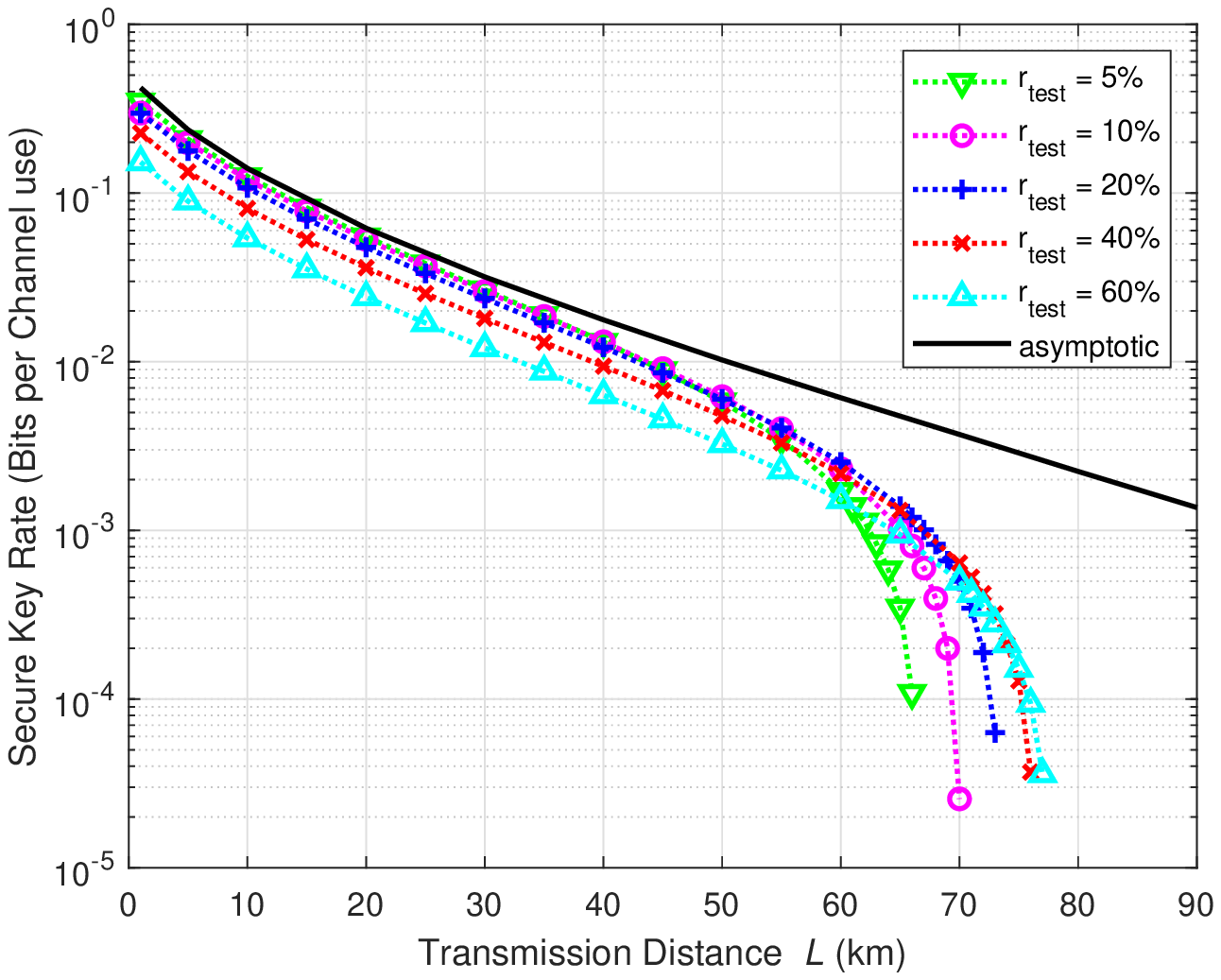}
\caption{Secure key rates over transmission distance $L$ for fixed $N = 10^{12}$, optimised the coherent state amplitude $\alpha$ and the radial postselection parameter $\Delta_r$ and different testing ratios $r_{\mathrm{test}}$. \label{fig:KRoverL_diff_TR} }
\end{figure}

It remains to discuss how much we can improve our results by varying the testing ratio $r_{\mathrm{test}}$. In Figure \ref{fig:KRoverL_diff_TR}, we fix $N=10^{12}$, optimise over $\alpha$ and $\Delta_r$ via a coarse grained search and examine the impact of testing ratios between $5\%$ and $60\%$. As expected, it turns out that for low transmission distances, low testing ratios are advantageous, while the maximal achievable transmission distance can be improved significantly by increasing the fraction of signals used for testing. This is because for high transmission distances the expectation values in our constraints become small, and hence (for the same testing as for lower distances) their uncertainties become relatively large. Higher testing counteracts this effect and increases the secure key rates. Sacrificing $60\%$ of the signals for testing increases the maximal achievable transmission distance from $66$ km (for $5\%$ testing) to $77$ km.

\subsubsection{Trusted, nonideal detectors}
Next, we present our results for the case of trusted, nonideal detectors. For demonstration purposes we choose $\eta_d = 0.72$ and $\nu_{\mathrm{el}} = 0.04$, and emphasise that our analysis is not restricted to this choice. We fix the excess noise again to $\xi = 0.01$. Note that this means that the curves for trusted, nonideal detectors have a higher total noise level compared to the curves for untrusted, ideal detectors in the previous section. Again, we add asymptotic key rate curves, derived following the method presented in Ref. \cite{Upadhyaya_2021}, for comparison. Like in the untrusted, nonideal case, our key rates are tight, i.e. for low testing ratio $r_{\mathrm{test}}$ and a high number of rounds $N$, the obtained finite-size key rates converge to the asymptotic limit. 

We examine the performance of our security proof for different total numbers of rounds, while we fix the testing ratio at $10\%$ and optimise over the coherent state amplitude $\alpha$ and the radial postselection parameter $\Delta_r$ via a coarse grained search. The resulting key rate curves can be seen in Figure \ref{fig:KRoverL_diff_N_trusted}. We see that, as expected, the secure key rates are lower than for the untrusted, ideal detector, but the maximal achievable transmission distances decrease only moderately compared to the untrusted detector with the same excess noise level. We observe positive key rates up to $22$ km (compared to $24$ km for untrusted, ideal detectors), for $N=10^9$ signals, we obtain non-negative key rates up to $39$ km (compared to $41$ km) for $N=10^{10}$, up to $55$ km (compared to $58$ km) for $N=10^{11}$, and up to $67$ km (compared to $71$ km) for $N=10^{12}$f.

\begin{figure}
\includegraphics[width=0.48\textwidth]{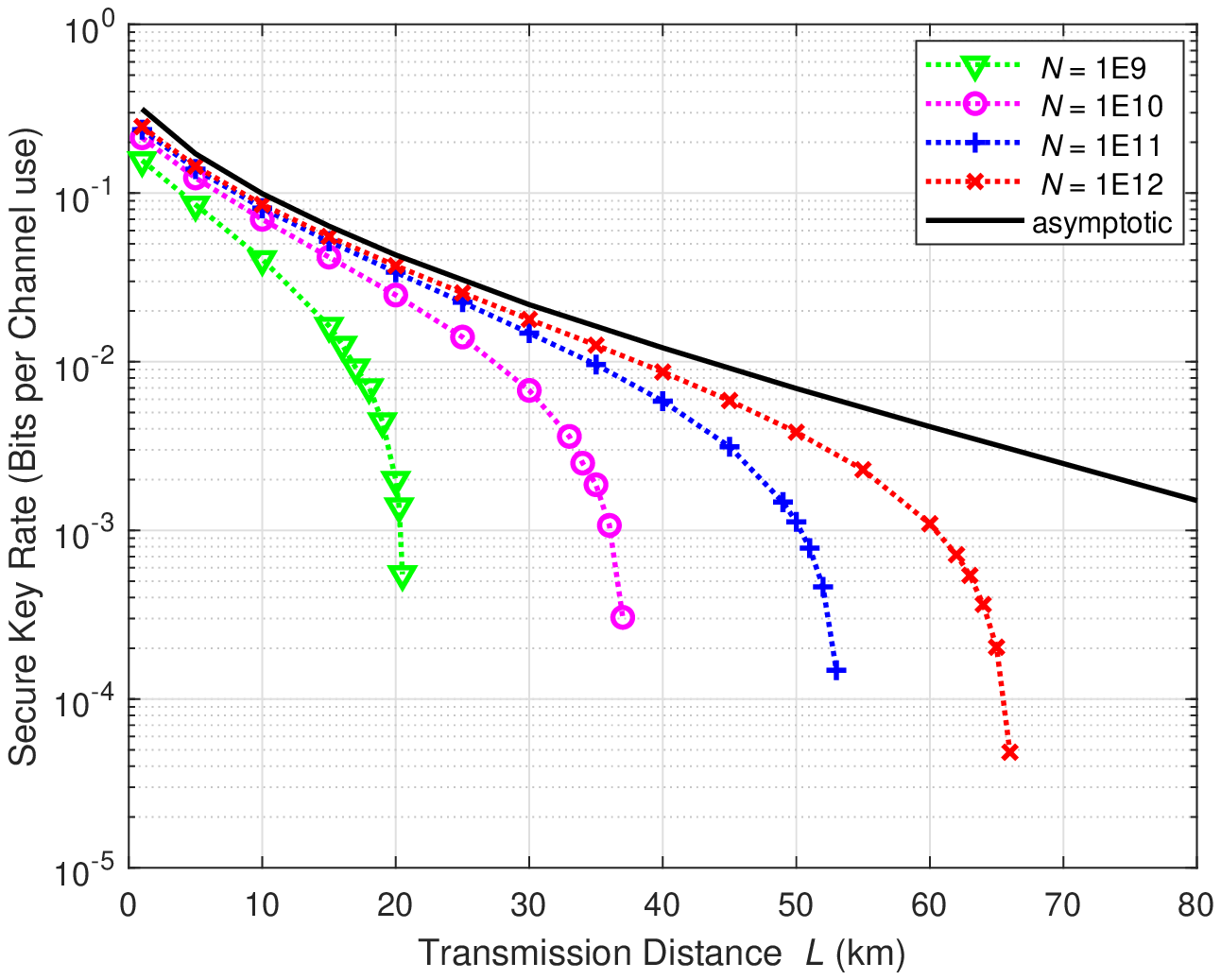}
\caption{Secure key rates over transmission distance $L$ for trusted, nonideal detector  for with $\nu_{\mathrm{el}} = 0.04$ and $\eta_d = 0.72$. We plot key rates for different total number of signals $N$ for optimised the coherent state amplitude $\alpha$ and the radial postselection parameter $\Delta_r$ and fixed testing ratio $r_{\mathrm{test}} = 10\%$. \label{fig:KRoverL_diff_N_trusted} }
\end{figure}

In Figure \ref{fig:KRoverL_diff_TR_trusted}, we plot the obtained secure key rates as a function of the transmission distance $L$ for different testing ratios, while we fix $N=10^{12}$ and optimise over the coherent state amplitude $\alpha$ and the radial postselection parameter $\Delta_r$. As expected the obtained secure key rates are lower than those for the untrusted, ideal detector.

However, for an excess noise level of $\xi = 0.01$, it turns out that the maximal achievable transmission distances do not differ significantly in the trusted detector scenario. For example, when the testing rate is $60\%$ of the signals, the maximal achievable transmission distance for the trusted, nonideal detector is $72$ km while in the untrusted, ideal detector case we obtained $77$ km. For a testing ratio of $5\%$, the maximal achievable transmission distance differs by only $3$ km. The achieved secure key rates in the nonideal detector case are merely lower. Therefore, even for realistic detectors, our method yields practically relevant secure finite-size key rates. We note that this moderate performance difference between key rates using ideal, untrusted detectors and noisy, trusted detectors has already been observed for the asymptotic case in \cite[Section 5.3]{Upadhyaya_Thesis_2021}. The reason behind this is that Bob's noisy observables can be related to his ideal observables by linear combinations. Hence, effectively, the feasible set remains unchanged, while only the objective function changes due to different POVM elements for the noisy, nonideal heterodyne detector. The error correction cost, however, is slightly higher, which explains the observed drop in the secure key rate.

\begin{figure}
\includegraphics[width=0.48\textwidth]{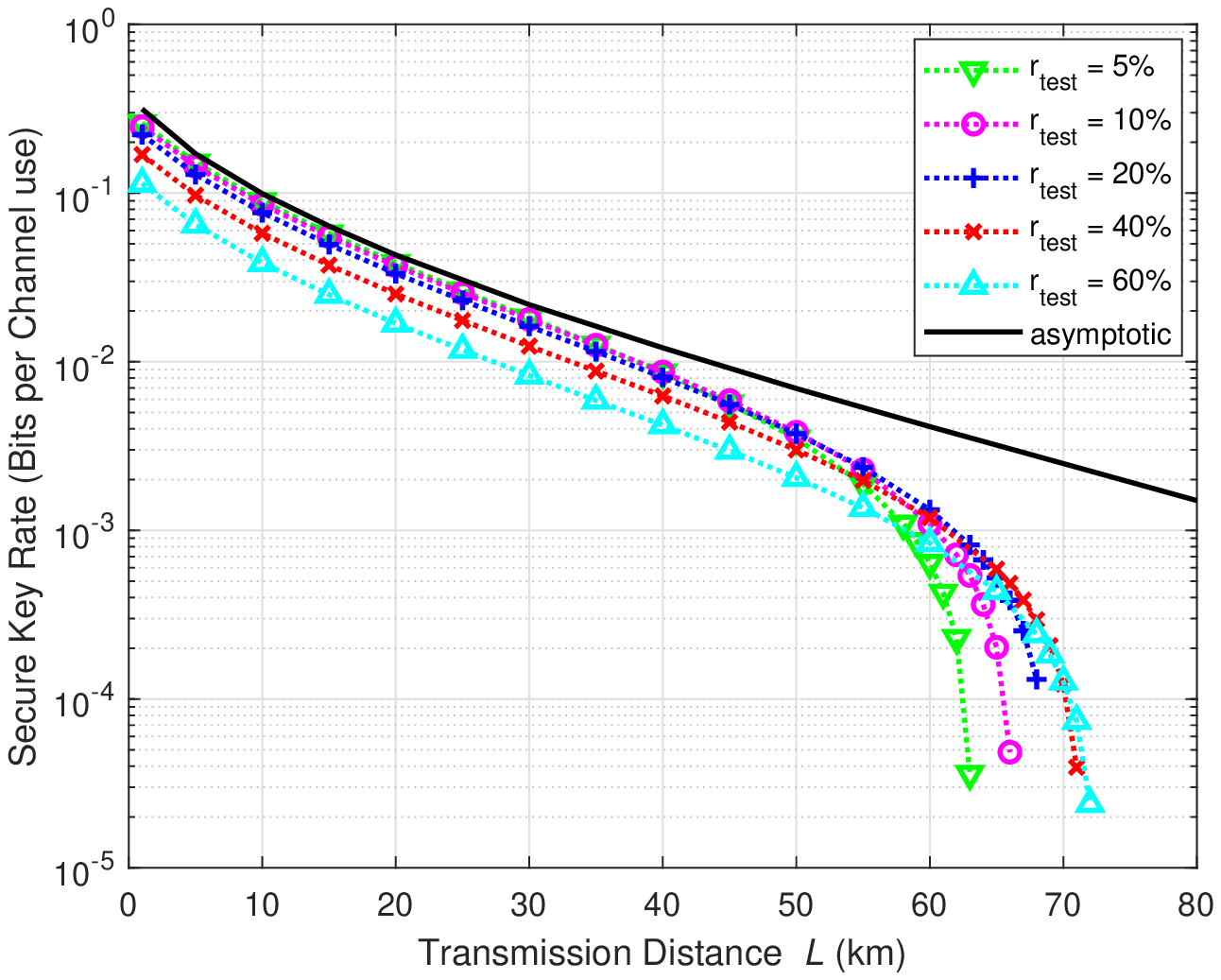}
\caption{Secure key rates over transmission distance $L$ for a trusted, nonideal detector with $\nu_{\mathrm{el}} = 0.04$ and $\eta_d = 0.72$. We fixed $N = 10^{12}$, optimised the coherent state amplitude $\alpha$ and the postselection parameter $\Delta_r$ and examined different testing ratios $r_{\mathrm{test}}$. \label{fig:KRoverL_diff_TR_trusted} }
\end{figure}

\subsubsection{Nonunique acceptance}\label{sec:nonUA}
While it is common in the literature to discuss secure key rates in the unique-acceptance (UA) scenario (where $\mathbf{t}$ in Theorem \ref{Thm:ParameterEstimation} is set to zero), we want to emphasize that the acceptance test of such protocols basically always fails, even in the absence of eavesdroppers. Consequently, although these protocols can achieve high key rates when successful, the expected key rate per key generation round is generally low in practice. Therefore, we turn our attention to the more practical scenario of nonunique acceptance (nonUA), where $\mathbf{t}>0$. Our goal is to investigate the relationship between secure key rate and acceptance probability, which leads to a more useful presentation of secure key rates in practical settings.

Therefore, recall the following results from Section~\ref{sec:parameter_estimation} to gain insights into how the choice of $\mathbf{t}$ influences the secure key rate and the acceptance probability. According to Eq. (\ref{eq:ATset}) the acceptance set grows larger when we choose $\mathbf{t}>0$. Consequently, the optimization performed when solving the key rate finding problem is carried out over a larger set, resulting in lower secure key rates compared to the unique-acceptance scenario. However, Proposition \ref{prop:completeness-bound} provides bounds on the failure probability of the energy test, acceptance test, and the entire QKD protocol (through the union bound). Intuitively, as the sample size increases, we can choose a smaller $\mathbf{t}$. Hence, for illustration purposes, we set $t_X = t_F \mu_X$ for different values of $t_F\geq0$, as this yields 
\begin{equation*}
    \Pr[\mathsf{AT~ Aborts}|\mathsf{Honest}]  \leq 2 |\Theta| \left(\frac{\epsilon_{AT}}{2}\right)^{\frac{t_F^2}{4}},
\end{equation*}
where $\Theta$ denotes the set of observables used in the protocol and $X\in\Theta$.
We want to highlight that this is only a choice and might not be optimal. Further optimizations are left for future work. Furthermore, for the second expression in Proposition \ref{prop:completeness-bound}, we use $D\left( P_{l_T+1}||Q_{\sigma} \right) \geq D(P_{l_T+1} ||Q_{\frac{w}{r}})$ and obtain
\begin{equation*}
\begin{aligned}
    &\Pr[\mathsf{ET~ Aborts}|\mathsf{Honest}]   \\
    &~\leq(k_T - l_T -1) \left( \frac{1-\frac{l_T+1}{k_T}}{1-\frac{w}{r}} \right)^{k_T-l_T-1} \left( \frac{\frac{l_T+1}{k_T}}{\frac{w}{r}}\right)^{l_T+1},
    \end{aligned}
\end{equation*}
where $l_T, k_T, w$ and $r$ are from Theorem \ref{thm:energy_test}. While this bound is sufficient for illustration purposes, we want to note that it is quite loose and we leave tighter bounds for future work. 

To summarize, we have observed that different choices of $\mathbf{t}$ simultaneously impact the acceptance set, and hence the secure key rate, and the acceptance probability. Consequently, in the nonunique acceptance scenario, direct comparisons of the secure key rate for different $\mathbf{t}$ values do not provide meaningful insights, as the expected secure key rate (weighted by the success probability) can vary significantly. Therefore, in this subsection, we introduce a slight modification in how we present our results. Instead of plotting the secure finite-size key rate in bits per channel use, denoted as $\frac{\ell}{N}$, which we obtained from our security proof and have used thus far, we now plot the expected secure key rate per channel use $(1-\nu^c_{\mathrm{QKD}})\times \frac{\ell}{N}$ on the $y$-axis. Thereby, the acceptance probability is calculated assuming that the adversary behaves honestly. We believe that this revised representation of secure key rates better captures the practical relevance, describing the usable and accessible secure key rate in implementations of the investigated protocol. Our intention is to encourage the community to adopt similar reporting methods in future work.

We are now prepared to present and discuss the key rate plots for the nonunique acceptance scenario. Similar to previous sections, we set $\epsilon_{EC} = \frac{1}{5}\times 10^{-10}$ and keep $M=5$ fixed. In what follows, we mainly use the more natural quantity $p_{\mathrm{succ}} := 1-\nu^c_{\mathrm{QKD}}$, which is the `success probability on honest runs' of the analysed protocol. First, we examine the impact of different parameters, specifically $t_F$ (and consequently different acceptance probabilities), on the expected secure key rate. To maintain consistency with Figure \ref{fig:KRoverN}, we set $L=10$km, $\alpha = 0.85$, $\Delta_r = 0.45$, and $r_{\mathrm{test}} = 2.5\%$. We investigate three values of $t_F$, namely ${0.760, 0.832, 1.110}$, which correspond to success probabilities exceeding $50\%$, $75\%$, and $99\%$, respectively. For comparison, we plot the unweighted unique acceptance ($t_F=0$) key rates, along with the asymptotic secure key rate provided in Figure \ref{fig:KRoverN}. Notably, as $N$ grows large, the expected secure key rates for $t_F=1.110$, corresponding to a protocol success probability of $99\%$, closely resemble the nonunique acceptance key rates and the asymptotic secure key rate. This observation underscores the tightness of our key rates even in the nonunique acceptance case.

\begin{figure}
\includegraphics[width=0.48\textwidth]{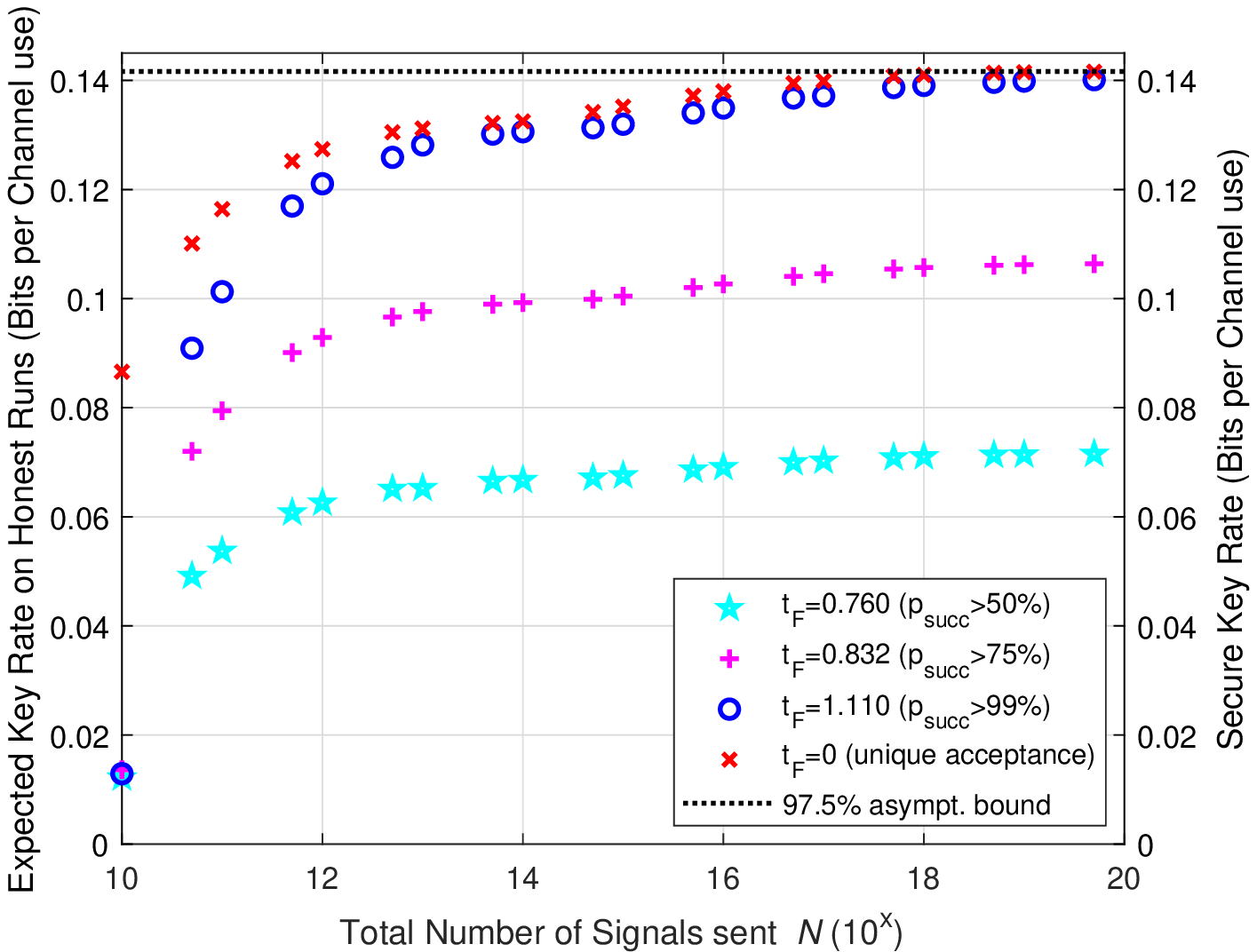}
\caption{Comparison of different expected nonunique acceptance key rates (left $y$-axis) over total number of signals sent for untrusted, ideal detectors and $L=10$km, $\alpha = 0.85$, $\Delta_r = 0.45$ and $r_{\mathrm{test}} = 2.5\%$. As explained in the main text, we plot the expected key rate $(1-\nu^c_{\mathrm{QKD}})\times \frac{\ell}{N}$ for the nonunique acceptance curves ($t_F \in {0.760, 0.832, 1.110}$). For comparison, we also plot the unique acceptance key rates, $t_F = 0$ and asymptotic key rate, (both right $y$-axis) known from Figure~\ref{fig:KRoverN}. \label{fig:nonUA_KR_over_N} }
\end{figure}

Next, we analyze the impact of the nonunique acceptance scenario on the achievable transmission distance. We set $N=10^{12}$ and $r_{\mathrm{test}} = 10\%$ and optimize over $\alpha$ as well as the postselection parameter $\Delta_r$. We consider four values of $t_F$, specifically $t_F \in {0.760, 0.832, 1.110, 1.270}$, which correspond to success probabilities exceeding $50\%$, $75\%$, $99\%$, and $99.9\%$ respectively. Additionally, we plot the unique acceptance key rates ($t_F=0$) and the asymptotic secure key rates from Figure \ref{fig:KRoverL_diff_N} for comparison.

We observe that the expected secure key rates for short to medium transmission distances are close to the unique acceptance key rate, particularly for $t_F = 1.110$ and $t_F=1.270$. While the expected secure key rates for low to medium transmission distances are close to the unique acceptance key rate, in particular for $t_F = 1.110$ and $t_F=1.270$, the achievable transmission distances drop slightly to $61$km for $t_F=1.270$, $62$km for $t_F = 1.110$, $63$km for $t_F = 0.832$ and $64.5$km for $t_F=0.714$, from $70$km in the unique acceptance case.

This demonstrates that, at the expense of lower expected secure key rates, it is possible to increase the maximum achievable transmission distance towards those of the unique acceptance key rate. We expect that a tighter bound on $\Pr[\mathsf{AT~ Aborts}|\mathsf{Honest}]$ would close this small remaining gap, allowing for smaller values of $t_F$ with equal success probabilities, as our current bound overestimates the protocol failure probability. This, in turn, would result in higher key rates and increased achievable transmission distances.

\begin{figure}
\includegraphics[width=0.48\textwidth]{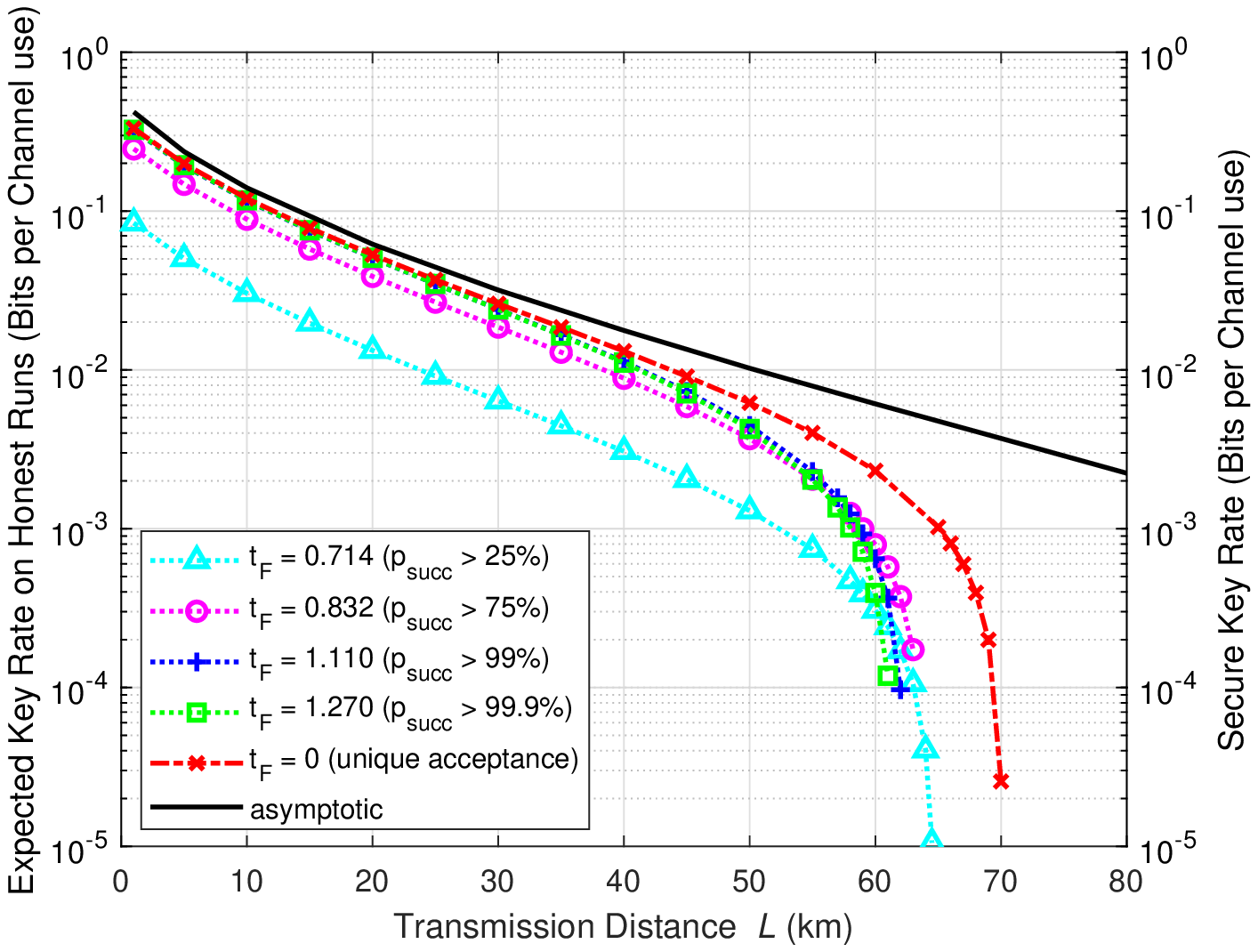}
\caption{Secure nonunique acceptance key rates over transmission distance $L$ for untrusted, ideal detectors. We fixed $N = 10^{12}$ and the testing ratio $r_{\mathrm{test}} = 10\%$ and optimised the coherent state amplitude $\alpha$ as well as the postselection parameter $\Delta_r$. As explained in the main text, for nonunique acceptance curves ($t_F \in {0.760, 0.832, 1.110, 1.270}$) we plot the expected secure key rate $(1-\nu^c_{\mathrm{QKD}})\times \frac{\ell}{N}$, while we report secure key rates for unique acceptance curves ($t_F=0$ and asymptotic). Thus, dotted curves refer to the left, while the dash-dot and the solid curves refer to the right $y$-axis. \label{fig:nonUA_KR_over_L} }
\end{figure}


\section{Conclusion}\label{sec:conclusion}
In our work, we established a composable security proof against i.i.d. collective attacks in the finite-size regime. We tackled the problem of infinite dimensions by introducing a new energy test (Theorem \ref{thm:energy_test}) to bound the weight outside a finite-dimensional subspace and applying the dimension reduction method \cite{Upadhyaya_2021} to take the influence of the weight correction term into account. Furthermore, we argued that in the finite-size regime acceptance testing is the suitable statistical treatment, rather than parameter estimation, known from asymptotic security analyses. We rigorously extended the epsilon security proof method of Ref. \cite{Renner_2005} to handle infinite dimensional side information and finally extend the numerical security proof framework in Refs. \cite{Coles_2016, Winick_2018} to obtain tight lower bounds on the finite-size key rates for a general DM CV-QKD protocol. Furthermore, our security analysis is capable of taking detector imperfections and limitations into account and offers the opportunity to trust Bob's detection devices. 

For illustration, we apply our security proof method to a four-state phase-shift keying protocol and calculate the achievable secure key rates in various scenarios. However, we emphasise that our approach is not limited to four signal states or phase-shift keying modulation but applies to general discrete modulation patterns. We show that under experimentally viable conditions one can obtain positive finite-size key rates up to at least $73$~km transmission distance for moderate to low noise. Through a comprehensive and detailed analysis of the success probability in an honest implementation, we are able to provide a clear and thorough examination of DM CV-QKD protocols. This enables us to report expected secure key rates, rather than solely focusing on achievable secure key rates in cases where the protocol does not abort. Additionally, it allows us to discuss the three crucial aspects of DM CV-QKD protocols, namely security, key rate, and success probability, together in a coherent manner.

Let us take this opportunity to discuss an alternative composable finite-size security proof for DM CV-QKD protocols against i.i.d. collective attacks given in Ref. \cite{Lupo_2022}. The authors of that work used a proof method based on the extremality of Gaussian states and developed an interesting way to leverage the finite detection range of realistic detectors to bound the dimension of the problem. While our work also considers the finite detection range of realistic detectors, we want to highlight that the weight, and hence the bound for the cutoff space comes from the energy test and does not directly rely on the detection limit. This gives us additional flexibility and allows us to achieve small weights and a smaller impact of the detection limit on the secure key rate. However, despite this shared aspect, the security argument is very different, making a direct comparison of the obtained key rates is not straightforward. It was already shown in Ref. \cite{Lin_2019} that the asymptotic key rates obtained using the framework of Refs. \cite{Coles_2016, Winick_2018} yield significantly better lower bounds than those in Ref. \cite{Ghorai_2019} which is another numerical approach employing Gaussian extremality. Lupo and Ouyang \cite{Lupo_2022} compared their QPSK key rates with the analytical key rates given in Ref. \cite{Denys_2021}, which are known not be tight for four signal states (and known to be lower than the key rates by Ref. \cite{Lin_2019}). As our key rates converge for large block sizes against the asymptotic key rates given by \cite{Lin_2019}, one can nevertheless conclude that our method achieves clearly higher secure key rates than the recently published finite-size security analysis in Ref. \cite{Lupo_2022}. Additionally, our work also takes the success probability of the examined protocol into account, allowing to report practically relevant expected secure key rates. However, a direct comparison of both methods to achieve bounded operators, and hence finite dimensional problems, using the same security proof framework and similar assumptions on the detectors and taking the success probabilities of the different statistical testing procedure into account would be interesting in the future. 

While we prove security against i.i.d. collective attacks, which are assumed to be optimal up to de Finetti correction terms that are massive in the small block length limit, a rigorous security proof against general attacks remains an open question. One issue is that known energy tests on almost i.i.d. states do not bound the weight outside a cutoff space in a way that is useful to apply our numerical method. Furthermore, we require a chain rule for smooth min-entropies to remove an infinite dimensional register, which is not straightforward. This is even a technical issue that applies to the work of Renner and Cirac \cite{Renner_Cirac_2009}. However, assuming a photon-number cutoff, our method is able to handle coherent attacks as well, applying methods developed in Ref. \cite{George_2020}. Therefore, a rigorous general attack security analysis for general DM CV-QKD protocols needs to solve multiple open problems; hence, a generalisation to coherent attacks is left for future work.

 \begin{acknowledgements}
 The authors thank Anthony Leverrier for discussions about the efficiency of error correction and Ignatius William Primaatmaja for spotting an error in an earlier version of our manuscript. This work was performed at the Institute for Quantum Computing, at the University of Waterloo, which is supported by Innovation, Science, and Economic Development Canada. This research has been supported by the NSERC under the Discovery Grants Program (Grant No. 341495).
 \end{acknowledgements}
 

\appendix
\onecolumngrid
\section{Proof of the energy testing theorem\label{apdx:proof_ET}}
In this appendix, we prove our energy testing theorem (Theorem \ref{thm:energy_test})  both for ideal detectors and trusted nonideal detectors. We begin with the proof for ideal detectors.
\begin{proof}
We start by proving an operator inequality related to heterodyne measurements, similarly to Lemma III.2 in Ref. \cite{Renner_Cirac_2009} for homodyne detection. We define the operators
\begin{align}
    W_1 &:= \Pi^{\frac{\hat{q}^2+\hat{p}^2-1}{2} \geq n_c},\\
    V_1 & := \frac{1}{\pi} \int_{|\alpha|^2 \geq \beta_{\mathrm{test}}^2} |\alpha\rangle\langle\alpha| ~d\mu_\alpha,
\end{align}
where $W_1$ is the projector onto the span of the eigenvectors of the operator $\frac{\hat{q}^2+\hat{p}^2-\mathbbm{1}}{2}$ corresponding to (generalized) eigenvalues greater or equal to $n_c$, and $V_1$ describes our test measurement, where the heterodyne detection gives outcomes with amplitudes greater than or equal to $\beta_{\mathrm{test}}$. Defining $W_0 := \mathbbm{1} - W_1$ and $V_0 := \mathbbm{1}-V_1$, it can be easily seen that $\{V_0, V_1\}$ and $\{W_0, W_1\}$ form POVMs.    

Recall that the photon-number operator is defined as $\hat{n} = \frac{1}{2}( \hat{q}^2 + \hat{p}^2 -\mathbbm{1})$. One observes $W_1 := \sum_{n \geq n_c} |n\rangle\langle n|$ and, using $\langle \gamma e^{i\theta} | n\rangle = \frac{e^{-\frac{\gamma^2}{2}} \gamma^n e^{-i \theta n}}{\sqrt{n!}}$ (see, for example \cite[p. 37]{Barnett_Radmore}), it can be seen that $V_1 = \sum_{n \in \mathbb{N}} \frac{\Gamma(n+1,\beta_{\mathrm{test}})}{\Gamma(n+1,0)} |n\rangle\langle n|$. Therefore, comparing the coefficients of $V_1$ and $W_1$ and recalling that, for fixed first argument, the incomplete gamma function is monotonically decreasing in its second argument, we conclude that $\langle n|W_1 | n\rangle \leq 1 \leq \frac{\Gamma(n_c+1,0)}{\Gamma(n_c+1,\beta_{\mathrm{test}})} \langle n |V_1 |n \rangle ~ \forall n \in \mathbb{N}$. Hence, we find that
\begin{equation}\label{eq:operatorRelation_VW}
    W_1 \leq \frac{\Gamma(n_c+1,0)}{\Gamma(n_c+1,\beta_{\mathrm{test}})} V_1.
\end{equation}
To ease notation, we define $r^{\mathrm{ideal}}(n_c,\beta_{\mathrm{test}}) := \frac{\Gamma(n_c+1,0)}{\Gamma(n_c+1,\beta_{\mathrm{test}})}$. The operator $W_0$ is the projector onto the cutoff space $\mathcal{H}^{n_c}$ and $W_1$ projects onto the orthogonal complement of the cutoff space. Therefore, $w = \mathbb{E}\left[W_1\right] = \mathrm{Tr}\left[\rho W_1 \right] \leq r^{\mathrm{ideal}}(n_c, \beta_{\mathrm{test}}) \mathrm{Tr}\left[\rho V_1 \right] = r^{\mathrm{ideal}}(n_c,\beta_{\mathrm{test}}) \mathbb{E}\left[ V_1\right]$. Hence, $\frac{w}{r^{\mathrm{ideal}}(n_c,\beta_{\mathrm{test}})} \leq \mathbb{E}\left[V_1\right]$. To ease notation, we use the short notation $r:= r^{\mathrm{ideal}}$. As it will turn out in the end, we actually do not need to distinguish between two different $r$ for ideal and nonideal detectors.\\

For our analysis, we consider an arbitrary density matrix $\rho$, whose weight outside a cutoff space of dimension $n_c$ can be either larger or smaller than some chosen real number $w \in [0,1]$,
\begin{enumerate}
    \item[1)] $\rho$ is such that $\Tr{\rho W_1} < w$;
    \item[2)] $\rho$ is such that $\Tr{\rho W_1} \geq w$.
\end{enumerate}
In the first case, the energy test accepts on a state which lies indeed with the acceptance set of the energy test. In that case, we can proceed with our security analysis. In the second case, the energy test accepts on a state that does not lie within the acceptance set of the energy test. We now need to make sure that this happens only with small probability $\epsilon_{\mathrm{ET}}$. 

Note that for fixed $\rho$ Born's rule induces a probability distribution in the probability space over outcomes; hence, the i.i.d. testing of it induces a probability distribution over the sequences. The fundamental error, denoted $\epsilon_{\mathrm{ET,~fund}}$, in the i.i.d. setting for our test strategy is the maximum probability of obtaining a sequence that passes the test even though the expected weight for the prototype $\rho$ is greater than or equal to $w$. We denote this probability for a fixed prototype as $\mathrm{Pr}\left[ \left|\left\{ Y_i:~ Y_i \leq \beta_T\right\}\right| \leq l_T ~|~ \rho \right]$. The maximum probability is then obtained by maximising this probability over all such prototypical $\rho$. Therefore, we derive the upper bound

\begin{align*}
    \epsilon_{\mathrm{ET,~fund}} &:= \max_{\rho \in \mathcal{D}(\mathcal{H})} \mathrm{Pr}\left[ \left|\left\{ Y_i:~ Y_i \leq \beta_T\right\}\right| \leq l_T ~|~ \Tr{\rho W_1} \geq w \right]\\
    &= \max_{\rho \in \mathcal{D}(\mathcal{H}):~ \Tr{\rho W_1}\geq w} \mathrm{Pr}\left[ \left|\left\{ Y_i:~ Y_i \leq \beta_T\right\}\right| \leq l_T ~|~ \rho \right]\\
    &\leq \max_{\rho \in \mathcal{D}(\mathcal{H}):~ \Tr{\rho V_1}\geq \frac{w}{r}} \mathrm{Pr}\left[ \left|\left\{ Y_i:~ Y_i \leq \beta_T\right\}\right| \leq l_T ~|~ \rho \right].
\end{align*}
While the first line defines $\epsilon_{\mathrm{ET,~fund}}$, for the second line we  recall that according to our testing strategy, we only have to deal with $\rho$ with expected weight larger than or equal to $w$, which allows us to rewrite the first line by including this condition into the set we maximise over. Density matrices $\rho$ with expected weight smaller than $w$ are not relevant in this part of our analysis. 

Finally, for the inequality in the last step, recall from the first part of the proof, that $W_1 \leq r V_1$; hence, 
\begin{equation*}
    \{\rho \in \mathcal{D}(\mathcal{H}):~ \Tr{\rho W_1}\geq w\} \subseteq \{\rho \in \mathcal{D}(\mathcal{H}):~ r \Tr{\rho V_1} \geq w\}.
\end{equation*}
Now let $\vec{f}_{k_T} \in \{0,1\}^{k_T}$ be a vector containing   `$0$' if $V_0$ was realised and `$1$' if $V_1$ was realised, i.e., for each of the test rounds we write `$0$' if the measurement result of the heterodyne measurement was within a circle of radius $\beta_T$ in the phase-space and `$1$' otherwise and define $f_{k_T}$ be the type induced by $\vec{f}_{k_T}$. Furthermore, define $\Tilde{Q}_{\frac{w}{r}} := \left\{ \begin{pmatrix}
    1-y \\ y
\end{pmatrix}:~ y \in \left[ \frac{w}{r}, 1 \right] \right\}$ and $P_{j} := \begin{pmatrix}
    1-\frac{j}{k_T}\\
    \frac{j}{k_T}
\end{pmatrix}$.
Then, the set we are maximising over reads $\left\{\rho \in \mathcal{D}(\mathcal{H}):~ \begin{pmatrix}
    \Tr{\rho V_0}\\
    \Tr{\rho V_1}
\end{pmatrix} \in \Tilde{Q}_{\frac{w}{r}}\right\}$. Recalling that $V_0 = \mathbbm{1}-V_1$, we introduce $Q_{\rho} := \begin{pmatrix}
    1-\Tr{\rho V_1}\\
    \Tr{\rho V_1}
\end{pmatrix}$.

We observe that
\begin{align*}
    &\mathrm{Pr}\left[ \left|\left\{ Y_i:~ Y_i^2 \leq \beta_T^2\right\}\right| = j ~|~ \rho \right]= \mathrm{Pr}\left[ f_{k_T} = P_j ~|~ \rho\right],
\end{align*}
which is given by the product of the size of the corresponding type class and its probability
\begin{align}\label{eq:types_1}
    \mathrm{Pr}\left[ f_{k_T} = P_j ~|~ \rho\right] =  |T(P_j)| ~ Q_{\rho}^{k_T},
\end{align}
where $|T(P_j)|$ denotes the size of type class $P_j$ and by $Q_{\rho}^{k_T}$ we denote the product distribution $Q_{\rho}^{k_T}:=\Pi_{j=0}^{k_T} Q_{\rho}$.

Next, we use two theorems from Ref. \cite{Cover_Thomas}. The first one, Theorem 11.1.2 of\cite{Cover_Thomas}, tells us that, for $n$ i.i.d. random variables $X_1, ..., X_n$ drawn according to $Q(x)$, the probability of a certain $n$-sequence $\vec{x}$ only depends on its type $P_{\vec{x}}$, $Q^{n}(\vec{x}) = 2^{-n \left(H(P_{\vec{x}}) + D(P_{\vec{x}}||Q) \right)}$. The second one, Theorem 11.1.3 of \cite{Cover_Thomas}, gives  an upper bound for the size of a type class of type $P \in \mathcal{P}_n$ (so a type with denominator $n$), $|T(P)|\leq 2^{n H(P)}$.  Applying both to Eq. (\ref{eq:types_1}) yields
\begin{align}\label{eq:type-prob-bound}
    \mathrm{Pr}\left[ f_{k_T} = P_j ~|~ \rho\right] \leq  2^{-k_T D(P_j || Q_{\rho})},
\end{align}
where $D$ is the Kullback-Leibler divergence.
Collecting what we found so far, we arrive at
\begin{align*}
    \epsilon_{\mathrm{ET,~fund}} \leq \max_{\rho \in \mathcal{D}(\mathcal{H}):~ \Tr{\rho V_1} \geq \frac{w}{r}} \sum_{j=0}^{l_T} 2^{-k_T D(P_j || Q_{\rho})}.
\end{align*}
We assume that $\frac{l_T}{k_T} < \frac{w}{r}$; hence, $Q_\frac{w}{r}:= \begin{pmatrix}
    1-\frac{w}{r} \\ \frac{w}{r}
\end{pmatrix}$ will always be the closest to each of the $P_j$ among all $y \in \left[\frac{w}{r}, 1\right]$. Furthermore, choosing $j=l_T$ minimises the relative entropy between $P_j$ and $Q_{\frac{w}{r}}$,
\begin{align*}
\forall j \leq l_T~ \forall y \in \left[\frac{w}{r}, 1\right]: ~D(P_j || Q_{\rho}) \geq D(P_j || Q_{\frac{w}{r}}) \geq D(P_{l_T} || Q_{\frac{w}{r}}).
\end{align*}
Therefore, we conclude that
\begin{align*}
    \epsilon_{\mathrm{ET,~fund}} &\leq \max_{\rho \in \mathcal{D}(\mathcal{H}):~ \Tr{\rho V_1} \geq \frac{w}{r}} \sum_{j=0}^{l_T} 2^{-k_T D(P_j || Q_{\rho})}\\
    & \leq \sum_{j=0}^{l_T} 2^{-k_T D\left(P_{l_T} || Q_{\frac{w}{r}}\right)}\\
    & = (l_T+1) \cdot 2^{-k_T D\left(P_{l_T} || Q_{\frac{w}{r}}\right)} =: \epsilon_{\mathrm{ET}}.
\end{align*}
This completes the proof.
\end{proof}

It remains to prove the energy testing theorem for trusted, nonideal detectors. The second part of the proof follows the arguments of the proof for ideal detectors. However, the measurement operator for trusted, nonideal detectors differs from the measurement operator $V_1$ for the ideal detector. Therefore it remains to show that the measurement operator for the trusted, nonideal case dominates $W_1$ as well (possibly with another constant $r(n_c, \beta_{\mathrm{test}})$.
\begin{proof}
According to \cite{Lin_2020} the POVM elements for the trusted, nonideal heterodyne measurement with efficiency $\eta_d$ and electronic noise $\nu_{\mathrm{el}}$ are given by
\begin{equation}
    G_y = \frac{1}{\eta_d \pi} \hat{D}\left( \frac{y}{\sqrt{\eta_d}} \right) \hat{\rho}_{\mathrm{th}}\left( \overline{n}_d \right) \hat{D}^{\dagger}\left( \frac{y}{\sqrt{\eta_d}} \right),
\end{equation}
where $\overline{n}_d := \frac{1-\eta_d + \nu_{\mathrm{el}}}{\eta_d}$. Therefore, the modified measurement operator is $\tilde{V}_1 := \int_{y^2 \geq \beta_{\mathrm{test}}^2} G_y ~d\mu_{y}$. We use \cite[Eq.~(6.13) and (6.14)]{Glauber_1967} to express $G_y$ in the number basis. For simplification, we define $C_{n,m} := \frac{1}{\pi \eta_d{\frac{m-n}{2}+1}} \sqrt{\frac{n!}{m!}} \frac{\overline{n}_d^n}{(1+\overline{n}_d)^{m+1}}$, $a := \frac{1}{\eta_d(1+\overline{n}_d)}$ and $b:= \eta_d \overline{n}_d(1+\overline{n}_d)$,  and obtain for $n\leq m$
\begin{equation}\label{eq:Gy_numberbasis}
    \langle n |G_y | m \rangle = C_{n,m} e^{-a |y|^2} (y^*)^{m-n} L_n^{(m-n)}\left( - \frac{|y|^2}{b} \right),
\end{equation}
where 
\begin{equation}\label{eq:laguerre}
  L_k^{\alpha}(x) = \sum_{j=0}^{k} (-1)^j \binom{k+\alpha}{k-j} \frac{x^j}{j!}   
\end{equation}
is the generalised Laguerre polynomial of degree $k$ and with parameter $\alpha$ \cite{Oldham_2008}. The following calculation is a special case of the derivation in \cite[Appendix F]{Kanitschar_2021a} and \cite[Appendix 5.1]{Kanitschar_Thesis_2021}.
\begin{align*}
    \tilde{V}_1  = \int_{y^2 \geq \beta_{\mathrm{test}}^2} G_y ~d\mu_{y} &= \sum_{m,n} C_{n,m} |n\rangle\langle n| \int_{y^2 \geq \beta_{\mathrm{test}}^2} y^{m-n+1} e^{-a y^2} L_n^{(m-n)}\left( -\frac{y^2}{b} \right) ~dy~\int_{\theta = 0}^{2 \pi} e^{-i \theta}~d\theta \\
    &=  \sum_{m,n} C_{n,m} |n\rangle\langle m| \int_{y^2 \geq \beta_{\mathrm{test}}^2} y^{m-n+1} e^{-a y^2} L_n^{(m-n)}\left( -\frac{y^2}{b} \right) ~dy~ 2\pi \delta_{n,m}\\
    &= 2\pi \sum_{n} C_{n,n} |n\rangle\langle n| \int_{y^2 \geq \beta_{\mathrm{test}}^2} y e^{-a y^2} L_n\left( -\frac{y^2}{b} \right) ~dy\\
    &= \pi \sum_{n} C_{n,n} |n\rangle\langle n| \int_{z \geq \beta_{\mathrm{test}}} e^{-a z} L_n\left( -\frac{z}{b} \right) ~dz\\
    &= \pi \sum_{n} C_{n,n} |n\rangle\langle n| \sum_{j=0}^{n} \binom{n}{n-j}\frac{1}{a^{j+1} b^j} \frac{\Gamma(j+1, a \beta_{\mathrm{test}})}{ \Gamma(j+1)} \\
\end{align*}
Note that we substituted $y^2 \mapsto z$ for the fifth equality and that we used the definition of the Laguerre polynomials to obtain the last line. Inserting $C_{n,n}$ and simplifying the obtained expression yields
\begin{equation*}
    \tilde{V}_1 = \sum_{n} \left( \frac{\overline{n}_d}{1+ \overline{n}_d}\right)^n  \sum_{j=0}^{n} \binom{n}{j} \left(\frac{1}{\overline{n}_d} \right)^j \frac{\Gamma(j+1, a \beta_{\mathrm{test}})}{ \Gamma(j+1)} |n\rangle\langle n|.
\end{equation*}
We define and simplify
\begin{align*} 
    U &:= \sum_{n=0}^{n_c-1} \left( \frac{\overline{n}_d}{1+ \overline{n}_d}\right)^n  \sum_{j=0}^{n} \binom{n}{j} \left(\frac{1}{\overline{n}_d} \right)^j \frac{\Gamma(j+1, a\beta_{\mathrm{test}})}{ \Gamma(j+1)} |n\rangle\langle n| + \frac{\Gamma(n_c+1, \beta_{\mathrm{test}})}{ \Gamma(n_c+1)}  \sum_{n=n_c}^{\infty}\left( \frac{\overline{n}_d}{1+ \overline{n}_d}\right)^n \sum_{j=0}^{n} \binom{n}{j} \left(\frac{1}{\overline{n}_d} \right)^j  |n\rangle\langle n| \\
&= \sum_{n=0}^{n_c-1} \left( \frac{\overline{n}_d}{1+ \overline{n}_d}\right)^n  \sum_{j=0}^{n} \binom{n}{j} \left(\frac{1}{\overline{n}_d} \right)^j \frac{\Gamma(j+1, a\beta_{\mathrm{test}})}{ \Gamma(j+1)} |n\rangle\langle n| + \frac{\Gamma(n_c+1, \beta_{\mathrm{test}})}{ \Gamma(n_c+1)} \sum_{n=n_c}^{\infty} \left( \frac{\overline{n}_d}{1+ \overline{n}_d}\right)^n  \left(\frac{1}{\overline{n}_d} +1 \right)^n |n\rangle\langle n|\\
&= \sum_{n=0}^{n_c-1} \left( \frac{\overline{n}_d}{1+ \overline{n}_d}\right)^n  \sum_{j=0}^{n} \binom{n}{j} \left(\frac{1}{\overline{n}_d} \right)^j \frac{\Gamma(j+1, a\beta_{\mathrm{test}})}{ \Gamma(j+1)} |n\rangle\langle n| + \frac{\Gamma(n_c+1, \beta_{\mathrm{test}})}{ \Gamma(n_c+1)} \sum_{n=n_c}^{\infty} |n\rangle\langle n|
\end{align*}    

Note that the quotient $\frac{\Gamma(j+1, a\beta_{\mathrm{test}})}{ \Gamma(j+1)}$ is monotonically increasing in $j$, therefore $\forall j \geq n_c:~ \frac{\Gamma(n_c+1, a\beta_{\mathrm{test}})}{ \Gamma(n_c+1)} \leq \frac{\Gamma(j+1,a \beta_{\mathrm{test}})}{ \Gamma(j+1)}$. Hence, $U \leq \tilde{V}_1$.
Based on the structure of $W_1$, we observe $W_1 \leq \frac{\Gamma(n_c+1)}{ \Gamma(n_c+1, \beta_{\mathrm{test}})} U$. Defining $r^{\mathrm{nonideal}}(\beta_{\mathrm{test}}) := \frac{\Gamma(n_c+1)}{\Gamma(n_c+1, a\beta_{\mathrm{test}})}$ and combining our operator relations, we obtain
\begin{equation}
    W \leq r^{\mathrm{nonideal}}(\beta_{\mathrm{test}}) U \leq r^{\mathrm{nonideal}}(\beta_{\mathrm{test}}) \tilde{V}_1.
\end{equation}
The rest of the proof is identical to the ideal case.
\end{proof}

\section{Bounded measurements}\label{APDX:Bounded_Measurement}
Here, we discuss how bounded measurements affect observations as well as the energy test and the key map. First, we discuss modifications of the observables.
\subsection{Observables}
In this section, we derive $[\hat{n}]_r'$ and $[\hat{n}^2]_r'$, the noisy and restricted observables used in the present protocol. We start by writing $\hat{n}$ and $\hat{n}^2$ in anti-normal ordering and replacing the ladder operators $\hat{a} \leftrightarrow \zeta$ and $\hat{a}^{\dagger} \leftrightarrow \zeta^*$ and obtain
\begin{align}
    f_{\hat{n}} &= |\zeta|^2-1,\\
    f_{\hat{n}^2} &= |\zeta|^4 - 3 |\zeta|^2 + 1.
\end{align}

We will now adopt the approach described in \cite[Appendix  D]{Upadhyaya_2021}, but replace $f_{\hat{n}}$ by $g_{\hat{n}}$ and $f_{\hat{n}^2}$ by $g_{\hat{n}^2}$. While $f_{\hat{n}}(\zeta_x, \zeta_y)$ coincides with $g_{\hat{n}}(\zeta_x, \zeta_y)$ for $(\zeta_x, \zeta_y) \in [-M,M]^2$, we have $g_{\hat{n}} = M^2+\zeta_y^2-1$ for $|\zeta_x| \geq M$ and $\zeta_y \in [-M,M]$. Furthermore, for $\zeta_x, \zeta_y \geq M$, we obtain $g_{\hat{n}}(\zeta_x, \zeta_y) = 2M^2-1$. Similar results can be derived for the other regions, and the same principle applies to $g_{\hat{n}^2}$. Following the method in \cite[Appendix D]{Upadhyaya_2021} for this modified setup yields (applying an asymptotic expansion [as $M$ is large compared to the other appearing quantities] and keeping leading correction terms) the following expression for the Q-function (up to prefactor $\frac{1}{\pi}$) of the restricted, noisy, trusted operators
\begin{equation}
    \begin{aligned}
    \bra{\alpha}[\hat{n}_{\beta}]_r'\ket{\alpha} &= \eta_d |\gamma|^2 \left(1-\frac{1}{\sqrt{\pi} \tilde{M}} e^{-\tilde{M}^2} \right)+\eta_d c^2 \left(1 - \frac{2\tilde{M}+2\tilde{M}^2\sqrt{\pi}+2\sqrt{\pi}}{\pi \tilde{M}} e^{-\tilde{M}^2}\right) \\&+ 2  \eta_d c \frac{\tilde{M}^2\sqrt{\pi}\eta_d+1}{\sqrt{\pi} \tilde{M}}e^{-\tilde{M}^2} - 1 \left( 1 - \frac{2}{\sqrt{\pi}\tilde{M}} e^{-\tilde{M}^2}\right) + \mathcal{O}\left( e^{-2 \tilde{M}^2} \right),
    \end{aligned}
\end{equation}
\begin{equation}
    \begin{aligned}
    \bra{\alpha}[\hat{n}^2_{\beta}]_r'\ket{\alpha} &= \eta_d^2\left[ 1-\frac{1}{\sqrt{\pi} \tilde{M}} e^{-\tilde{M}^2}\right] |\gamma|^4 \\
    &+\eta_d^2\left[4c^2 \left( 1-\frac{8\tilde{M}^2+8\tilde{m}+1}{4\sqrt{\pi} \tilde{M}}e^{-\tilde{M}^2}\right)-\frac{3 c^2}{\eta_d} \left( 1-\frac{9c^2+4\tilde{M}^2}{6\sqrt{\pi}\tilde{M}c^2}\right) e^{-\tilde{M}^2}\right]  |\gamma|^2\\
    &+2\eta_d^2 c^2\left(1-\frac{13\sqrt{\pi}-3\tilde{M}-4\tilde{M}^2\sqrt{\pi}c^2}{8 \pi \tilde{M}} \right)e^{-\tilde{M}^2} - 3 \eta_d c^2 \left(1-\frac{3-c^2}{2 \sqrt{\pi} \tilde{M}} \right)e^{-\tilde{M}^2} \\&+1 \left(1-\frac{\eta_d^2 c^4 \tilde{M}^4 + 3 \eta_d c^2 \tilde{M}^2-3+2\eta_d^2}{\sqrt{\pi}\tilde{M} \eta_d^2} \right)e^{-\tilde{M}^2} + \mathcal{O}\left(e^{-2\tilde{M}^2}\right),
    \end{aligned}
\end{equation}
where $\tilde{M}:=\frac{M}{\sqrt{\eta_d}c}$, $c^2:=1+\bar{n} = 1+\frac{1-\eta_d+\nu_{el}}{\eta_d}=\frac{1+\nu_{el}}{\eta_d}$ and $\gamma :=\alpha-\frac{\beta}{\sqrt{\eta_d}}$. 
We observe that when $M$ is chosen to be sufficiently large, the neglected terms become extremely small, often (depending on the particular choice of $M$) even below the level of machine precision. It is important to highlight that the numerical method we employ to obtain accurate lower bounds on the secure key rate \cite{Coles_2016, Winick_2018} accounts for small violations of constraints and finite-precision errors in the representation of operators, which may have a magnitude of $\epsilon'$. Therefore, as long as we ensure that the neglected terms remain below this threshold, the resulting lower bounds remain reliable. For more details about handling numerical imprecisions in the used security proof framework, we refer to \cite[Section 3.3]{Winick_2018}. This shows that the effect of restricting our measurement to only a finite detection range has negligible impact on our implementation. Furthermore, notice that for $\tilde{M} \rightarrow \infty:$ 
\begin{align*}
    &\bra{\alpha}[\hat{n}_{\beta}]_r'\ket{\alpha} \stackrel{\tilde{M} \rightarrow \infty}{\rightarrow}\bra{\alpha}[\hat{n}_{\beta}]'\ket{\alpha}\\
    &\bra{\alpha}[\hat{n}^2_{\beta}]_r'\ket{\alpha} \stackrel{\tilde{M} \rightarrow \infty}{\rightarrow} \bra{\alpha}[\hat{n}^2_{\beta}]'\ket{\alpha},
\end{align*}
i.e., as expected, we recover the results for the unbounded (noisy, nonideal) measurement from \cite{Upadhyaya_2021}.

By the uniqueness of the Q-function, we obtain then
\begin{equation}
    \begin{aligned}
[\hat{n}_{\beta}]_r' &\simeq A(\tilde{M}) \hat{n}_{\frac{\beta}{\sqrt{\eta_d}}} + B(\tilde{M}) \mathbbm{1},
    \end{aligned}
\end{equation}

\begin{equation}
    \begin{aligned}
[\hat{n}^2_{\beta}]_r' &\simeq C(\tilde{M}) \hat{n}^2_{\frac{\beta}{\sqrt{\eta_d}}} + D(\tilde{M}) \hat{n}_{\frac{\beta}{\sqrt{\eta_d}}}+E(\tilde{M}) \mathbbm{1},
    \end{aligned}
\end{equation}
where
\begin{align*}
    A(\tilde{M}) :=& \eta_d\left( 1-\frac{1}{\sqrt{\pi} \tilde{M}} e^{-\tilde{M}^2}\right)\\
    B(\tilde{M}) :=& \eta_d c^2 \left(1 - \frac{2\tilde{M}+2\tilde{M}^2\sqrt{\pi}+2\sqrt{\pi}}{\pi \tilde{M}} e^{-\tilde{M}^2}\right) \\&+ 2  \eta_d c \frac{\tilde{M}^2\sqrt{\pi}\eta_d+1}{\sqrt{\pi} \tilde{M}}e^{-\tilde{M}^2} - 1 \left( 1 - \frac{2}{\sqrt{\pi}\tilde{M}} e^{-\tilde{M}^2}\right)\\
    C(\tilde{M}) :=&\eta_d^2\left( 1-\frac{1}{\sqrt{\pi} \tilde{M}} e^{-\tilde{M}^2}\right)\\
    D(\tilde{M}):=&\eta_d^2\left[4c^2 \left( 1-\frac{8\tilde{M}^2+8\tilde{m}+1}{4\sqrt{\pi} \tilde{M}}e^{-\tilde{M}^2}\right)-\frac{3 c^2}{\eta_d} \left( 1-\frac{9c^2+4\tilde{M}^2}{6\sqrt{\pi}\tilde{M}c^2}\right) e^{-\tilde{M}^2}\right. \\ &\left. ~-\left(1-\frac{1}{\sqrt{\pi}\tilde{M}} e^{-\tilde{M}^2}\right)\right] \\
    E(\tilde{M}):=& +2\eta_d^2 c^2\left(1-\frac{13\sqrt{\pi}-3\tilde{M}-4\tilde{M}^2\sqrt{\pi}c^2}{8 \pi \tilde{M}} \right)e^{-\tilde{M}^2} - 3 \eta_d c^2 \left(1-\frac{3-c^2}{2 \sqrt{\pi} \tilde{M}} \right)e^{-\tilde{M}^2} \\&+1 \left(1-\frac{\eta_d^2 c^4 \tilde{M}^4 + 3 \eta_d c^2 \tilde{M}^2-3+2\eta_d^2}{\sqrt{\pi}\tilde{M} \eta_d^2} \right)e^{-\tilde{M}^2} + \mathcal{O}\left(e^{-2\tilde{M}^2}\right)
\end{align*}
and, again, the restricted operators converge to the unrestricted operators given in \cite{Upadhyaya_2021} for $\tilde{M} \rightarrow \infty$.

\subsection{Energy test}
After having clarified our observables, we can proceed with the energy test. Therefore, let us review the purpose of the energy test. When performing the energy test, we take some fraction of all rounds and check if $q^2+p^2$ is smaller or larger than some arbitrary but fixed value $\beta_{\mathrm{test}}^2$. As long as we choose $\beta_{\mathrm{test}} \leq M$, this binary measurement is not affected by the finite detection range (note that we do not need to know the exact value but only need to know if it is smaller than our testing parameter) as can be seen from the definition of the measurement operator $V_1$,

\begin{equation}
    V_1 := \frac{1}{\pi} \int_{|\alpha|^2 \geq \beta_{\mathrm{test}}^2} |\alpha\rangle\langle\alpha| ~d\mu_\alpha,
\end{equation}
here for the ideal heterodyne measurement POVM, but the same applies if we replace $\frac{1}{\pi} \ket{\alpha}\!\!\bra{\alpha}$ by $G_{\alpha}$. Comparing to Eq. (\ref{eq:OperatorXideal}) in the ideal case or to Eq. (\ref{eq:OperatorXnonIdeal}) for the nonideal detector, we see that $f_{V_1} = 1$. As a result, the integral remains the same even for the bounded operator.
Summing up, the energy test remains completely unaffected by this modification, providing we select a value for $M$ that is not smaller than $\beta_{\mathrm{test}}$.

\subsection{Modified key map}\label{sec:APDX:ModifiedKeymap}
As we only want to use unambiguous measurement results, we restrict our key regions to the area between the postselection circle in the middle of the phase space and the detection-range bound at $M$. For $z=0$ and $z=2$, we obtain
\begin{equation}
    R_B^z := \frac{1}{\pi}  \int_{\frac{2z-1}{N_{\mathrm{St}}} \pi}^{\frac{2z+1}{N_{\mathrm{St}}} \pi} \int_{\Delta_r}^{\frac{r}{cos(\theta)}} r |r e^{i \phi}\rangle \langle r e^{i \phi}|~d\phi ~dr,
\end{equation}
and for $z=1$ and $z=3$, we obtain
\begin{equation}
    R_B^z := \frac{1}{\pi}  \int_{\frac{2z-1}{N_{\mathrm{St}}} \pi}^{\frac{2z+1}{N_{\mathrm{St}}} \pi} \int_{\Delta_r}^{\frac{r}{\sin(\theta)}} r |r e^{i \phi}\rangle \langle r e^{i \phi}|~d\phi ~dr.
\end{equation}
We note that this integral cannot be computed analytically anymore which increases the computation time extensively. One possible solution is to slightly modify the key map by discarding not only results lying outside $\mathcal{M}=[-M,M]^2$ but outside a circle with radius $M$. Then the region operators read
\begin{equation}
    R_B^z := \frac{1}{\pi} \int_{\Delta_r}^{M} \int_{\frac{2z-1}{N_{\mathrm{St}}} \pi}^{\frac{2z+1}{N_{\mathrm{St}}} \pi}  r |r e^{i \phi}\rangle \langle r e^{i \phi}|~d\phi ~dr,
\end{equation}
which can be calculated analytically. Although we increase the region corresponding to $\perp$, as the removed areas are close to the corners of $[-M,M]^2$ we do not expect a significant impact on the key rate, while speeding up the calculation considerably. Thus, we modify the key map accordingly for our simulations.

\section{Technical lemmas}\label{apdx:TechnicalLemmas}
In this section, we present technical lemmas we use in the security proof to generalise existing finite-dimensional statements to their infinite dimensional counterparts.

\begin{proposition}[\textbf{Relation between $\epsilon$-balls}] \label{prop:epsilonBalls}
For $\rho \in \mathcal{D}_{\leq}(\mathcal{H})$ we have $\mathcal{B}_{\mathrm{PD}}^{\epsilon}(\rho) \subseteq \mathcal{B}_{\mathrm{TD}}^{2\epsilon}(\rho) \subseteq \mathcal{B}_{\mathrm{PD}}^{\sqrt{2\epsilon}}(\rho)$. 
\end{proposition}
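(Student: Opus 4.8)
The plan is to prove the two set inclusions $\mathcal{B}_{\mathrm{PD}}^{\epsilon}(\rho) \subseteq \mathcal{B}_{\mathrm{TD}}^{2\epsilon}(\rho)$ and $\mathcal{B}_{\mathrm{TD}}^{2\epsilon}(\rho) \subseteq \mathcal{B}_{\mathrm{PD}}^{\sqrt{2\epsilon}}(\rho)$ separately, in each case starting from an arbitrary element of the smaller ball and verifying that it satisfies the defining conditions of the larger one. The whole argument rests on the Fuchs--van de Graaf inequalities~\eqref{eq:FvdG}, namely $\Delta(\rho,\sigma) \leq \mathcal{P}(\rho,\sigma) \leq \sqrt{2\Delta(\rho,\sigma)}$, together with a careful bookkeeping of the (sub)normalisation constraints built into the two ball definitions, since the trace-distance ball $\mathcal{B}_{\mathrm{TD}}^{\epsilon}$ scales its radius by $\Tr{\rho}$ whereas the purified-distance ball $\mathcal{B}_{\mathrm{PD}}^{\epsilon}$ does not.

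\textbf{First inclusion.} Let $\tilde{\rho} \in \mathcal{B}_{\mathrm{PD}}^{\epsilon}(\rho)$, so $\tilde{\rho} \in \mathcal{D}_{\leq}(\mathcal{H})$ and $\mathcal{P}(\rho,\tilde{\rho}) \leq \epsilon$. I would first apply the left Fuchs--van de Graaf inequality to obtain $\Delta(\rho,\tilde{\rho}) \leq \mathcal{P}(\rho,\tilde{\rho}) \leq \epsilon$, which gives $||\rho - \tilde{\rho}||_1 \leq 2\epsilon$. To land inside $\mathcal{B}_{\mathrm{TD}}^{2\epsilon}(\rho)$ I then need $||\rho-\tilde{\rho}||_1 \leq \Tr{\rho}\cdot 2\epsilon$ and $\Tr{\rho} \geq \Tr{\tilde{\rho}}$. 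Here is where I expect the subtlety: since $\Tr{\rho} \leq 1$, the bound $||\rho-\tilde{\rho}||_1 \leq 2\epsilon$ is a priori \emph{weaker} than what the trace-distance ball demands, so one cannot conclude the inclusion verbatim as written for general subnormalised $\rho$. I would resolve this by noting that the generalised fidelity $F_*$ and hence the purified distance are defined via embeddings into a larger space with the projector $\Pi$, so the natural reading is that $\rho$ is the reference state and the relevant normalisation is with respect to it; the cleanest route is to argue the inclusion in the regime where $\rho$ is normalised ($\Tr{\rho}=1$), which is exactly how the ball is used in the security proof, so that $\Tr{\rho}\cdot 2\epsilon = 2\epsilon$ and the trace-distance condition follows directly. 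The monotonicity-of-trace condition $\Tr{\rho}\geq\Tr{\tilde\rho}$ is immediate from $\tilde\rho \in \mathcal{D}_{\leq}(\mathcal{H})$ when $\Tr{\rho}=1$.

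\textbf{Second inclusion.} Let $\tilde{\rho} \in \mathcal{B}_{\mathrm{TD}}^{2\epsilon}(\rho)$, so $\Tr{\rho}\geq\Tr{\tilde{\rho}}$ and $||\rho-\tilde{\rho}||_1 \leq 2\epsilon\,\Tr{\rho}$. To place $\tilde\rho$ in $\mathcal{B}_{\mathrm{PD}}^{\sqrt{2\epsilon}}(\rho)$ I need $\tilde\rho \in \mathcal{D}_{\leq}(\mathcal{H})$ and $\mathcal{P}(\rho,\tilde\rho)\leq\sqrt{2\epsilon}$. Subnormalisation of $\tilde\rho$ follows since $\Tr{\tilde\rho}\leq\Tr{\rho}\leq 1$ and $\tilde\rho\in\mathrm{Pos}(\mathcal{H})$. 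For the distance bound I apply the right Fuchs--van de Graaf inequality, $\mathcal{P}(\rho,\tilde\rho)\leq\sqrt{2\Delta(\rho,\tilde\rho)}$, and insert $\Delta(\rho,\tilde\rho)=\tfrac12||\rho-\tilde\rho||_1 \leq \epsilon\,\Tr{\rho}\leq\epsilon$, yielding $\mathcal{P}(\rho,\tilde\rho)\leq\sqrt{2\epsilon}$ as required. The one point needing care is that the Fuchs--van de Graaf inequalities are stated for normalised states, so I would remark (or cite the relevant generalisation, e.g.\ Tomamichel's formulation for subnormalised operators) that \eqref{eq:FvdG} extends to the subnormalised setting via the generalised fidelity $F_*$ appearing in the definition of $\mathcal{P}$; this is the main technical obstacle, since it requires confirming that both the trace-distance and purified-distance ball definitions are compatible through the embedding used to define $F_*$. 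With that extension in hand, both inclusions follow by the above chains of inequalities, completing the proof.
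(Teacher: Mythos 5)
Your proof is correct and takes essentially the same route as the paper: both inclusions are read off directly from the two Fuchs--van de Graaf inequalities, exactly as in the paper's own argument. In fact you are more careful than the paper, whose proof silently ignores the $\Tr{\rho}$ scaling in the trace-distance ball and the trace-ordering condition $\Tr{\rho} \geq \Tr{\tilde{\rho}}$; the subtleties you flag for strictly subnormalised $\rho$ are genuine, and the paper's proof (like your resolution) implicitly works in the normalised case.
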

\begin{proof}
   Consider $\rho \in \mathcal{D}_{\leq}(\mathcal{H})$ and $\sigma \in \mathcal{B}_{\mathrm{PD}}^{\epsilon}(\rho)$. 
   
   For the first inclusion, by one of the Fuchs-van de Graaf inequalities (Eq.~(\ref{eq:FvdG})), we have $\Delta(\rho,\sigma) \leq \mathcal{P}(\rho,\sigma) \leq \epsilon$, hence if $\sigma \in \mathcal{B}_{\mathrm{PD}}^{\epsilon}(\rho)$, we have $2 \Delta(\rho, \sigma) \leq 2\epsilon$. Thus, due to the definition of the trace distance ball (without a factor $\frac{1}{2}$), every $\sigma \in \mathcal{B}_{\mathrm{PD}}^{\epsilon}(\rho)$ is contained in $\mathcal{B}_{\mathrm{TD}}^{2\epsilon}(\rho)$. 
   
   For the second inclusion, assume $\sigma \in \mathcal{B}_{\mathrm{TD}}^{2\epsilon}(\rho)$. Then, by the other Fuchs-van de Graaf inequality in Eq. (\ref{eq:FvdG}), we have $\mathcal{P}(\rho, \sigma) \leq \sqrt{2 \Delta(\rho, \sigma)} \leq \sqrt{2 \epsilon}$. Hence, if $\sigma \in \mathcal{B}_{\mathrm{TD}}^{2\epsilon}(\rho)$ it is as well contained in $\mathcal{B}_{\mathrm{PD}}^{\sqrt{2\epsilon}}$.
\end{proof}

\begin{lemma}[\textbf{Data-processing inequality for the trace distance under CPTNI maps}]\label{lemma:DPI}
Let $\mathcal{H}$ be a separable Hilbert space and let $\rho, \sigma$ be compact, self-adjoint trace-class-1 operators over the separable Hilbert space $\mathcal{H}$ and let $\mathcal{E}$ be a completely positive trace non-increasing (CPTNI) map.\\
Then,
\begin{align*}
    \left|\left| \mathcal{E}(\rho) - \mathcal{E}(\sigma)  \right|\right|_1 \leq  \left|\left| \rho - \sigma \right|\right|_1.
\end{align*}
\end{lemma}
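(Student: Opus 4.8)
The plan is to reduce the statement to the elementary fact that, among all ways of writing a self-adjoint trace-class operator as a difference of two positive operators, the Jordan decomposition minimizes the sum of the traces. First I would set $X := \rho - \sigma$, which is again self-adjoint, compact, and trace-class since $\rho$ and $\sigma$ are. Its spectral (Jordan) decomposition gives $X = X_+ - X_-$ with $X_+, X_- \geq 0$ of mutually orthogonal support, so that $\lVert X \rVert_1 = \Tr{X_+} + \Tr{X_-}$. Because $\mathcal{E}$ is linear, $\mathcal{E}(X) = \mathcal{E}(X_+) - \mathcal{E}(X_-)$, and because $\mathcal{E}$ is positive, both $\mathcal{E}(X_+)$ and $\mathcal{E}(X_-)$ are positive; they are trace-class because $\mathcal{E}$ is trace non-increasing and $X_\pm$ are positive trace-class.

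The core step is an auxiliary inequality: for any self-adjoint trace-class operator $Z$ and any decomposition $Z = A - B$ with $A, B \geq 0$, one has $\lVert Z \rVert_1 \leq \Tr{A} + \Tr{B}$. I would prove this from the variational characterization $\Tr{Z_+} = \sup_{0 \leq P \leq \mathbbm{1}} \Tr{P Z}$, where $P$ ranges over self-adjoint operators with $0 \leq P \leq \mathbbm{1}$ and the supremum is attained at the spectral projector of $Z$ onto its positive part. For any such $P$, $\Tr{P Z} = \Tr{P A} - \Tr{P B} \leq \Tr{P A} \leq \Tr{A}$, whence $\Tr{Z_+} \leq \Tr{A}$; applying the same argument to $-Z = B - A$ gives $\Tr{Z_-} \leq \Tr{B}$, and adding yields $\lVert Z \rVert_1 = \Tr{Z_+} + \Tr{Z_-} \leq \Tr{A} + \Tr{B}$.

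With this in hand, I would apply the auxiliary inequality to $Z = \mathcal{E}(X)$, $A = \mathcal{E}(X_+)$, $B = \mathcal{E}(X_-)$ — noting that $\mathcal{E}(X_+)$ and $\mathcal{E}(X_-)$ need \emph{not} have orthogonal supports, which is exactly why the inequality rather than an equality is required. This gives $\lVert \mathcal{E}(X) \rVert_1 \leq \Tr{\mathcal{E}(X_+)} + \Tr{\mathcal{E}(X_-)}$. The trace non-increasing property then yields $\Tr{\mathcal{E}(X_\pm)} \leq \Tr{X_\pm}$ (valid precisely because $X_\pm \geq 0$), so that $\lVert \mathcal{E}(\rho) - \mathcal{E}(\sigma) \rVert_1 = \lVert \mathcal{E}(X) \rVert_1 \leq \Tr{X_+} + \Tr{X_-} = \lVert X \rVert_1 = \lVert \rho - \sigma \rVert_1$, which is the claim.

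I expect the only genuine subtlety to be infinite-dimensional bookkeeping rather than any conceptual obstacle: one must confirm that the Jordan decomposition exists with $X_\pm$ trace-class (it does, since $X$ is compact with summable spectrum), that $\mathcal{E}(X_\pm)$ remain trace-class (guaranteed by positivity together with trace non-increase), and that the variational formula for $\Tr{Z_+}$ holds here with the supremum attained by the positive spectral projector. I would also remark that the argument uses only positivity and trace non-increase of $\mathcal{E}$, never full complete positivity and never trace preservation, so it applies to the subnormalized maps arising after conditioning on non-aborting events in the security proof.
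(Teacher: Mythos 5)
Your proof is correct, and it shares its skeleton with the paper's: both start from the Jordan decomposition $\rho - \sigma = X_+ - X_-$, use linearity and positivity of $\mathcal{E}$, and finish by applying trace non-increase to the positive parts. Where you diverge is the middle step. The paper bounds $\lVert \mathcal{E}(X_+) - \mathcal{E}(X_-)\rVert_1$ by the trace-norm triangle inequality together with the fact that the trace norm of a positive operator equals its trace, and to set this up it passes through a Kraus representation, absorbing the diagonalizing unitary into modified Kraus operators $\tilde{K}_i := K_i U$. You instead prove the same inequality, $\lVert A - B \rVert_1 \leq \Tr{A} + \Tr{B}$ for $A, B \geq 0$, from the variational formula $\Tr{Z_+} = \sup_{0 \leq P \leq \mathbbm{1}} \Tr{PZ}$; the two steps are mathematically equivalent (your auxiliary lemma is exactly the triangle inequality specialized to positive operators), but your route buys something concrete. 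It never invokes a Kraus decomposition, so, as you observe, only positivity and trace non-increase of $\mathcal{E}$ are ever used, not complete positivity. It also sidesteps a small imprecision in the paper, which asserts that a CPTNI map admits Kraus operators satisfying $\sum_i K_i^{\dagger} K_i = \mathbbm{1}$; for a trace non-increasing map the correct condition is $\sum_i K_i^{\dagger} K_i \leq \mathbbm{1}$, and it is this inequality that the paper's final step $\Tr{\tilde{\mathcal{E}}(P)} \leq \Tr{P}$ actually relies on. Your infinite-dimensional bookkeeping — trace-class of $X_\pm$ from summability of the spectrum, trace-class of $\mathcal{E}(X_\pm)$ from positivity plus trace non-increase, and attainment of the supremum at the positive spectral projector — is all in order.
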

\begin{proof}
Consider $\rho, \sigma$ compact, self-adjoint and trace-class operators, as in the statement. Then, trivially, $\rho-\sigma$ is self-adjoint as well. Furthermore, compact operators form a vector space, so $\rho-\sigma$ is compact, too.

Now we may apply the spectral theorem for compact, self-adjoint operators on $\rho-\sigma$ and find an orthonormal basis diagonalising $\rho-\sigma$. Let $P$ be the positive part and $Q$ the negative part of the diagonal form of $\rho-\sigma$, $\rho-\sigma = U(P+Q)U^\dagger$, where $P \perp Q$. Note that we found $P, Q$ diagonal, $P \perp Q$ with $||\rho - \sigma||_1 = ||P+Q||_1$.

Since $\mathcal{E}$ is a CPTNI map, we can find a Kraus representation $\mathcal{E}(\tau) = \sum_i K_i \tau K_i^{\dagger}$ where $\sum_i K_i^{\dagger} K_i = \mathbbm{1}$. Inserting $\tau = UDU^{\dagger}$, where $D$ is the diagonal form and $U$ the corresponding transformation, we obtain
\begin{align*}
    \mathcal{E}(\tau) = \mathcal{E}(U \tau U^{\dagger}) = \sum_{i} K_i U D U^{\dagger} K_i^{\dagger} =  \sum_{i} K_i U D \left(K_i  U\right)^{\dagger} = \sum_{i} \tilde{K}_i D \tilde{K}_i^{\dagger}.
\end{align*}
Note that we defined $\tilde{K}_i := K_i U$ and observe 
\begin{align*}
    \sum_i \tilde{K}_i^{\dagger} \tilde{K}_i \tilde{K}_i^{\dagger} = \sum_{i} \left(K_i U\right)^{\dagger} K_i U = U^{\dagger} \left(\sum_i K_i^{\dagger} K_i\right) U = U^{\dagger}U = \mathbbm{1}.
    \end{align*}
Define the new channel $\tilde{\mathcal{E}}(\tau) = \sum_i \tilde{K}_i \tau \tilde{K}_i^{\dagger}$.
Finally, we conclude
\begin{align*}
    || \mathcal{E}(\rho) - \mathcal{E}(\sigma)||_1 =& || \mathcal{E}(\rho-\sigma)||_1 = || \tilde{\mathcal{E}}(P+Q)||_1 = || \tilde{\mathcal{E}}(P) + \tilde{\mathcal{E}}(Q)||_1 \\
    \leq& ||\tilde{\mathcal{E}}(P)||_1 +  ||\tilde{\mathcal{E}}(Q)||_1 = \Tr{\tilde{\mathcal{E}}(P)} + \Tr{\tilde{\mathcal{E}}(Q)} \\
    \leq&  \Tr{P} + \Tr{Q} = \Tr{P+Q} = ||P+Q||_1 \\
    =& ||\rho - \sigma||_1,
\end{align*}
which proves the claim.

\end{proof}

\begin{lemma}[\textbf{Leftover hashing lemma against infinite dimensional side information}]\label{lemma:leftoverHashing}
Let $\rho_{XE} \in \mathcal{D}_{\leq}(\ell^{\infty}_{X} \otimes \mathcal{H}_E)$, where $X$ is finite. Let $K$ and $X$ be finite sets with $|K|=2^{\ell} \leq |X|$ and let $\{\mathcal{F}, \mathcal{P}_{\mathcal{F}}\}$ be a family of two-universal $\{X, K\}$-hash functions. Let $\epsilon'>0$ and $\epsilon_{\mathrm{PA}} := 2(\epsilon_{\mathrm{sec}} - 2\epsilon')$, where $\epsilon_{\mathrm{sec}} \geq 2 \epsilon' + \frac{1}{2} \sqrt{2^{\ell - H_{\mathrm{min(PD)}}^{\epsilon'}(X|E)_{\rho}}}$ in case of purified distance smoothing and in case of trace distance smoothing $\epsilon_{\mathrm{sec}} \geq 2 \epsilon' + \frac{1}{2} \sqrt{2^{\ell - H_{\mathrm{min(TD)}}^{2\epsilon'}(X|E)_{\rho}}}$ .\\
Then,
\begin{equation*}
  \frac{1}{2} || \rho_{F(x)EF} - \pi_k \otimes \rho_{EF}||_1 \leq 2\epsilon' + \frac{1}{2} \sqrt{2^{\ell - H_{\mathrm{min(PD)}}^{\epsilon'}(X|E)_{\rho}}} \leq \epsilon_{\mathrm{sec}}.   
 \end{equation*}

This implies that for the purified distance smoothing ball, if
\begin{equation*}
    \ell \leq H_{\mathrm{min(PD)}}^{\epsilon'}(X|E)_{\rho} - 2 \log_2\left(\frac{1}{\epsilon_{\mathrm{PA}}}\right),
\end{equation*}
or, for the trace distance smoothing ball, if 
\begin{equation*}
    \ell \leq H_{\mathrm{min(TD)}}^{2\epsilon'}(X|E)_{\rho} - 2 \log_2\left(\frac{1}{\epsilon_{\mathrm{PA}}}\right),
\end{equation*}
the obtained key is $\epsilon_{\mathrm{sec}}$-secure. 
\end{lemma}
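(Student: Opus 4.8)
The plan is to reduce the statement to the standard leftover hashing lemma together with a smoothing argument, taking care that every step survives when $\mathcal{H}_E$ is infinite dimensional. First I would establish a \emph{non-smoothed} version: for a fixed classical-quantum state $\rho_{XE}$ with finite classical register $X$ and separable side information $E$, averaging over the two-universal family $\{\mathcal{F},\mathcal{P}_{\mathcal{F}}\}$ yields $\frac{1}{2}\|\rho_{F(X)EF}-\pi_k\otimes\rho_{EF}\|_1 \le \frac{1}{2}\sqrt{2^{\ell-H_{\mathrm{min}}(X|E)_{\rho}}}$. In finite dimensions this is Renner's argument \cite{Renner_2005} (bound the $1$-norm by a weighted $2$-norm, identify the $2$-norm term with the conditional collision entropy, then use $H_2\ge H_{\mathrm{min}}$). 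The crucial point is that this non-smoothed inequality already holds for infinite-dimensional $E$: this is exactly the randomness-extraction statement of \cite[Proposition 21]{Berta_2016}, proved in the von Neumann algebra formalism, which I would invoke directly rather than re-derive. Because $X$ is finite, the hashing channel and the maximally mixed target are unproblematic, and the only place infinite dimensionality could bite is handled once and for all by that reference.

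Next I would perform the smoothing step. Let $\tilde{\rho}_{XE}\in\mathcal{B}^{\epsilon'}_{\mathrm{PD}}(\rho)$ be a (near-)optimiser of the smooth min-entropy, so that $H_{\mathrm{min}}(X|E)_{\tilde{\rho}}=H^{\epsilon'}_{\mathrm{min(PD)}}(X|E)_{\rho}$, and let $\mathcal{E}$ denote the CPTNI map that applies the hash and records the function choice, so $\mathcal{E}(\rho)=\rho_{F(X)EF}$. Applying the non-smoothed bound to $\tilde{\rho}$ and inserting $\mathcal{E}(\tilde{\rho})$ via the triangle inequality gives
\begin{align*}
\tfrac{1}{2}\|\mathcal{E}(\rho)-\pi_k\otimes\rho_{EF}\|_1
&\le \tfrac{1}{2}\|\mathcal{E}(\rho)-\mathcal{E}(\tilde{\rho})\|_1
+ \tfrac{1}{2}\|\mathcal{E}(\tilde{\rho})-\pi_k\otimes\tilde{\rho}_{EF}\|_1 \\
&\quad + \tfrac{1}{2}\|\pi_k\otimes(\tilde{\rho}_{EF}-\rho_{EF})\|_1 .
\end{align*}
The middle term is $\le\frac{1}{2}\sqrt{2^{\ell-H^{\epsilon'}_{\mathrm{min(PD)}}(X|E)_{\rho}}}$ by construction. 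For the two outer terms I would use Lemma~\ref{lemma:DPI} (data processing of the trace distance under the CPTNI maps $\mathcal{E}$ and the partial trace) to bound each by $\Delta(\rho,\tilde{\rho})$, and then the Fuchs--van de Graaf inequality (\ref{eq:FvdG}) together with $\mathcal{P}(\rho,\tilde{\rho})\le\epsilon'$ to get $\Delta(\rho,\tilde{\rho})\le\mathcal{P}(\rho,\tilde{\rho})\le\epsilon'$; since $\pi_k$ is normalised, the third term reduces to $\Delta(\rho_{EF},\tilde{\rho}_{EF})$, again contractive under the partial trace. This contributes $2\epsilon'$ in total, yielding the first displayed inequality of the lemma.

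Closing the argument is then bookkeeping: if $\ell\le H^{\epsilon'}_{\mathrm{min(PD)}}(X|E)_{\rho}-2\log_2(1/\epsilon_{\mathrm{PA}})$, the square-root term is at most $\frac{1}{2}\epsilon_{\mathrm{PA}}=\epsilon_{\mathrm{sec}}-2\epsilon'$ by the definition $\epsilon_{\mathrm{PA}}:=2(\epsilon_{\mathrm{sec}}-2\epsilon')$, so the total is $\le\epsilon_{\mathrm{sec}}$, establishing $\epsilon_{\mathrm{sec}}$-security. The trace-distance-smoothed version follows from the identical computation after using Proposition~\ref{prop:epsilonBalls} to relate the balls, which is precisely why the exponent there carries radius $2\epsilon'$ rather than $\epsilon'$.

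I expect the main obstacle to be the non-smoothed infinite-dimensional bound, not the smoothing: the textbook proof manipulates $\rho_{XE}^{1/2}(\mathbbm{1}\otimes\sigma_E^{-1/2})$ and Hilbert--Schmidt inner products, and $\sigma_E^{-1/2}$ need not be bounded when $E$ is infinite dimensional. My strategy sidesteps re-proving this by leaning on \cite[Proposition 21]{Berta_2016}; were one to insist on a self-contained derivation, the delicate part would be justifying the collision-entropy estimate through a finite-rank approximation of $\sigma_E$ together with a continuity and lower-semicontinuity argument for $H_{\mathrm{min}}$ and $\|\cdot\|_1$ under such truncations.
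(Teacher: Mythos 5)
Your proposal is correct and lands on the paper's bound, but it is organised differently from the paper's proof, and your key citation is mischaracterised in a way that happens to be harmless. The paper does not prove a non-smoothed lemma and then smooth it: it invokes \cite[Proposition 21]{Berta_2016} as the \emph{already smoothed}, already infinite-dimensional statement, namely $\mathbb{E}_{\mathcal{F}}\|(T_f\otimes\mathrm{id}_E)(\rho_{XE})-\pi_K\otimes\rho_E\|_1\le\sqrt{2^{\ell-H_{\mathrm{min(PD)}}^{\epsilon'}(X|E)_{\rho}}}+4\epsilon'$, so that the only remaining work is to rewrite the average over hash functions as a classical register $F$ (giving $\|\rho_{F(X)EF}-\pi_K\otimes\rho_{EF}\|_1$), divide by two, run the algebra yielding the key-length condition, and pass to trace-distance smoothing via Proposition~\ref{prop:epsilonBalls}. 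Your decomposition into a non-smoothed LHL plus a hand-made smoothing step (triangle inequality, Lemma~\ref{lemma:DPI}, Fuchs--van de Graaf) is essentially the argument that lives \emph{inside} that proposition; describing the non-smoothed inequality as ``exactly'' Proposition 21 is therefore inaccurate, though nothing breaks: the smoothed statement subsumes the non-smoothed one (by monotonicity of $H_{\mathrm{min}}^{\epsilon}$ in $\epsilon$), and you could equally well delete your smoothing step and quote the proposition directly, as the paper does. Two genuine merits of your longer route are worth recording. First, you are forced to notice that elements of either smoothing ball are subnormalised, so the non-smoothed bound must be applied to $\tilde\rho\in\mathcal{D}_{\le}$; you should say this explicitly, while the paper's route sidesteps it because the citation already optimises over the ball. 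Second, and more usefully, your computation run verbatim with the trace-distance ball (radius $2\epsilon'$ in the paper's convention means $\Delta(\rho,\tilde\rho)\le\epsilon'$, so the two outer terms again contribute $2\epsilon'$) proves the TD-smoothed claim \emph{directly}, whereas deducing it from the PD claim via Proposition~\ref{prop:epsilonBalls} --- which is what both your closing sentence and the paper gesture at --- is delicate: that proposition yields $H_{\mathrm{min(TD)}}^{2\epsilon'}\ge H_{\mathrm{min(PD)}}^{\epsilon'}$, an inequality pointing the wrong way for transferring the key-length condition. So for the trace-distance variant, prefer the direct computation your own framework already supports rather than the ball-conversion argument.
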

\begin{proof}
 We start the proof with \cite[Proposition 21]{Berta_2016} for the case $|K| = 2^{\ell}$ since we are interested in bit-strings. Then, Proposition 21 states that for $X,K$, two sets of finite cardinality with $|K| = 2^{\ell} \leq |X|$, $\{\mathcal{F}, \mathcal{P}_{\mathcal{F}}\}$, a family of two-universal $\{X,K\}$-hash functions, $\rho_{XE} = (\rho_E^x)_{x \in X} \in \mathcal{D}_{\leq}(\ell_X^{\infty} \otimes \mathcal{M}_E)$ and $\epsilon' > 0$
 \begin{equation*}
     \mathbb{E}_{\mathcal{F}}|| (T_f \otimes \mathrm{id}_E)(\rho_{XE}) - \pi_K \otimes \rho_E ||_1 \leq \sqrt{ 2^{\ell - H_{\mathrm{min}}^{\epsilon'}(X|E)_{\rho}}}+ 4 \epsilon',
 \end{equation*}
 holds. Here $\mathbb{E}_{\mathcal{F}}$ denotes the expectation with respect to $\mathcal{P}_{\mathcal{F}}$, $T_{f}$ is the map applying the hash function and $\pi_K = \frac{1}{|K|} \sum_{s \in K} |s\rangle\langle s|$. Note that $K$ denotes the alphabet the hash function map into and that Ref. \cite{Berta_2016} uses the purified distance in the smooth min-entropy definition. 

 First, we rewrite the left-hand side
 \begin{align*}
   \mathbb{E}_{\mathcal{F}}|| (T_f \otimes \mathrm{id}_E)(\rho_{XE}) - \pi_k \otimes \rho_E ||_1 &= \sum_{f} p(f) || (T_f \otimes \mathrm{id}_E)(\rho_{XE}) - \pi_K \otimes \rho_E ||_1  \\
   &= \left|\left| \sum_{f} p(f) \left[ (T_f \otimes \mathrm{id}_E)(\rho_{XE}) - \pi_K \otimes \rho_E\right] \otimes |f\rangle\langle f|  \right|\right|_1 \\
   &= || \rho_{F(X)EF} - \pi_K \otimes \rho_{EF}||_1.
 \end{align*}
 
 We replace the left-hand side of the original statement with what we just derived and divide by two to obtain a statement in trace distance and obtain
 \begin{equation*}
  \frac{1}{2} || \rho_{F(X)EF} - \pi_k \otimes \rho_{EF}||_1 \leq 2\epsilon' + \frac{1}{2} \sqrt{2^{\ell - H_{\mathrm{min(PD)}}^{\epsilon'}(X|E)_{\rho}}} \leq \epsilon_{\mathrm{sec}}.   
 \end{equation*}
Let $\epsilon_{\mathrm{PA}} := 2(\epsilon_{\mathrm{sec}} - 2\epsilon') > 0.$ Then, we derive
\begin{align*}
2^{\ell - H_\mathrm{min(PD)}^{\epsilon'}(X|E)_{\rho}} \leq  \epsilon_{\mathrm{PA}}^2 = 4 (\epsilon_{\mathrm{sec}}-2\epsilon')^2 \Rightarrow \ell \leq H_{\mathrm{min(PD)}}^{\epsilon'}(X|E)_{\rho} - 2\log_2\left( \frac{1}{\epsilon_{\mathrm{PA}}} \right),
\end{align*}
where $F \in \mathcal{F}$.
This gives us the statement in purified distance smoothing. By Proposition \ref{prop:epsilonBalls}, we yield the proposed statement in trace distance smoothing.
\end{proof}

\begin{lemma}[\textbf{Chain rule for smooth min-entropies}]\label{lemma:chainRule}
Let $\mathcal{H}_A, \mathcal{H}_B, \mathcal{H}_C$ be separable Hilbert spaces with $|\mathcal{H}_B| = n$.\\
Then for smoothing in trace distance,
\begin{align*}
    H_{\mathrm{min(TD)}}^{\epsilon}(AB|C)_{\rho} - \log_2(n) \leq H_{\mathrm{min(TD)}}^{\epsilon}(A|BC)_{\rho},
\end{align*}
as well as for smoothing in purified distance
\begin{align*}
    H_{\mathrm{min(PD)}}^{\epsilon}(AB|C)_{\rho} - \log_2(n) \leq H_{\mathrm{min(PD)}}^{\epsilon}(A|BC)_{\rho},
\end{align*}
\end{lemma}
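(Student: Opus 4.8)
The plan is to first establish the \emph{non-smooth} chain rule $H_{\mathrm{min}}(AB|C)_{\rho} - \log_2(n) \leq H_{\mathrm{min}}(A|BC)_{\rho}$ directly from the definition of the min-entropy, and then lift it to the smoothed quantities by exploiting the fact that the smoothing ball around $\rho_{ABC}$ depends only on the joint state, not on how one partitions the registers into ``numerator'' and ``conditioning'' systems.

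For the non-smooth statement I would argue as follows. Fix $\delta > 0$ and choose a $\sigma_C \in \mathcal{D}(\mathcal{H}_C)$ that is $\delta$-close to optimal in the supremum defining $H_{\mathrm{min}}(AB|C)_{\rho}$, so that, with $\lambda := 2^{-H_{\mathrm{min}}(AB|C)_{\rho} + \delta}$, the operator inequality $\lambda\, \mathbbm{1}_{AB} \otimes \sigma_C \geq \rho_{ABC}$ holds. The key observation is that, since $\mathcal{H}_B$ is $n$-dimensional, the maximally mixed state $\tfrac{1}{n}\mathbbm{1}_B$ is a legitimate density operator, so $\sigma_{BC} := \tfrac{1}{n}\mathbbm{1}_B \otimes \sigma_C \in \mathcal{D}(\mathcal{H}_B \otimes \mathcal{H}_C)$. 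Rewriting $\mathbbm{1}_{AB} \otimes \sigma_C = \mathbbm{1}_A \otimes (\mathbbm{1}_B \otimes \sigma_C) = n\, \mathbbm{1}_A \otimes \sigma_{BC}$ converts the previous inequality into $\lambda n\, \mathbbm{1}_A \otimes \sigma_{BC} \geq \rho_{ABC}$. By the definition of the conditional min-entropy as a supremum over the conditioning state, this yields $H_{\mathrm{min}}(A|BC)_{\rho} \geq -\log_2(\lambda n) = H_{\mathrm{min}}(AB|C)_{\rho} - \delta - \log_2(n)$, and letting $\delta \to 0$ gives the claim. This step is purely algebraic and survives in infinite dimensions precisely because $B$ is finite: $\mathbbm{1}_B$ is bounded, so the tensor refactorisation causes no difficulty with the possibly unbounded $\mathbbm{1}_A$.

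To pass to the smoothed version, I note that both $\mathcal{B}_{\mathrm{TD}}^{\epsilon}(\rho_{ABC})$ and $\mathcal{B}_{\mathrm{PD}}^{\epsilon}(\rho_{ABC})$ are sets of operators close to $\rho_{ABC}$ on $\mathcal{H}_A \otimes \mathcal{H}_B \otimes \mathcal{H}_C$, and their defining conditions are insensitive to the bipartition used for the entropy. Fixing $\delta > 0$, I would pick $\tilde{\rho} \in \mathcal{B}^{\epsilon}(\rho)$ with $H_{\mathrm{min}}(AB|C)_{\tilde{\rho}} \geq H_{\mathrm{min}}^{\epsilon}(AB|C)_{\rho} - \delta$, apply the non-smooth chain rule to $\tilde{\rho}$ to get $H_{\mathrm{min}}(A|BC)_{\tilde{\rho}} \geq H_{\mathrm{min}}(AB|C)_{\tilde{\rho}} - \log_2(n)$, and then observe that this same $\tilde{\rho}$ is a feasible point of the supremum defining $H_{\mathrm{min}}^{\epsilon}(A|BC)_{\rho}$. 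Combining these gives $H_{\mathrm{min}}^{\epsilon}(A|BC)_{\rho} \geq H_{\mathrm{min}}(A|BC)_{\tilde{\rho}} \geq H_{\mathrm{min}}^{\epsilon}(AB|C)_{\rho} - \delta - \log_2(n)$, and $\delta \to 0$ finishes the proof. The reasoning is identical for both smoothing balls, establishing the two stated inequalities at once.

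The main obstacle I anticipate is not the algebra but the infinite-dimensional bookkeeping: I must ensure the operator inequalities $\lambda\, \mathbbm{1}_A \otimes \sigma \geq \rho_{ABC}$ are well-defined (with $\mathbbm{1}_A$ possibly unbounded) and that the suprema in the definitions of the min-entropies are handled through $\delta$-approximate optimisers rather than by assuming attainment, which may fail in infinite dimensions. Working throughout with such approximate maximisers and only taking $\delta \to 0$ at the very end sidesteps these issues. The finiteness of $\mathcal{H}_B$ is used in exactly one place, namely that $\tfrac{1}{n}\mathbbm{1}_B$ is a valid density operator, which is precisely what allows trading the full identity $\mathbbm{1}_B$ for a normalised state at the cost of the factor $n$, i.e.\ $\log_2(n)$ in entropy.
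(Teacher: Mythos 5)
Your proof is correct, but it takes a different route from the paper: the paper does not prove this lemma at all, it simply cites the purified-distance version from Furrer's thesis \cite[Lemma 4.5.6]{Furrer_2012} and remarks that the same proof goes through for trace-distance smoothing. Your argument is the standard two-step proof made self-contained and dimension-safe: (i) the non-smooth chain rule via the substitution $\mathbbm{1}_{AB}\otimes\sigma_C = n\,\mathbbm{1}_A\otimes\bigl(\tfrac{1}{n}\mathbbm{1}_B\otimes\sigma_C\bigr)$, which trades the identity on the finite register $B$ for a normalised conditioning state at the cost of $\log_2(n)$, and (ii) the lift to the smoothed quantities using the fact that both $\mathcal{B}^{\epsilon}_{\mathrm{TD}}(\rho)$ and $\mathcal{B}^{\epsilon}_{\mathrm{PD}}(\rho)$ are defined on the joint space and are therefore blind to the bipartition, so a single $\delta$-approximate optimiser serves both sides. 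Two points strengthen rather than undermine your argument: first, your non-smooth step never uses $\Tr{\rho}=1$, so it applies verbatim to the subnormalised operators populating the smoothing balls, which is exactly what step (ii) needs; second, working with $\delta$-approximate maximisers rather than assuming attainment is the right precaution in infinite dimensions. One small correction: $\mathbbm{1}_A$ is always a \emph{bounded} operator (operator norm $1$) even when $\mathcal{H}_A$ is infinite dimensional — what fails is trace-classness, not boundedness — so the operator inequality $\lambda\,\mathbbm{1}_A\otimes\sigma_{BC}\geq\rho_{ABC}$ is well-defined without any extra care; your worry there is misplaced, though harmless. What your approach buys is a self-contained proof treating both smoothing balls uniformly; what the paper's citation buys is brevity, at the price of asking the reader to verify that Furrer's von-Neumann-algebra argument transfers to the trace-distance ball.
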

\begin{proof}
The proof in purified distance smoothing can be found in \cite[Lemma 4.5.6]{Furrer_2012} and it is straightforward to show that the proof given there works for trace distance smoothing as well.
\end{proof}

\begin{lemma}[\textbf{Strong subadditivity of smooth min-entropy}]\label{lemma:strongSubadditivity}
Let $\mathcal{H}_A, \mathcal{H}_{B}$ and $\mathcal{H}_C$ be separable Hilbert spaces and $\rho \in \mathcal{D}_{\leq}(\mathcal{H}_A \otimes \mathcal{H}_B \otimes \mathcal{H}_C)$.\\
Then, for either smoothing ball
\begin{equation*}
    H_{\mathrm{min(TD)}}^{\epsilon}(A|BC)_{\rho} \leq H_{\mathrm{min(TD)}}^{\epsilon}(A|B)_{\rho},
\end{equation*}
\begin{equation*}
    H_{\mathrm{min(PD)}}^{\epsilon}(A|BC)_{\rho} \leq H_{\mathrm{min(PD)}}^{\epsilon}(A|B)_{\rho}.
\end{equation*}
\end{lemma}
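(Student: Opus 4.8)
The plan is to prove the statement directly from the definitions of the min-entropy and its smoothed version, reducing the claim to two elementary facts about the partial trace $\mathrm{Tr}_{C}$ over $\mathcal{H}_{C}$: that it preserves the Loewner order, and that it does not increase the smoothing distance. This is exactly the finite-dimensional argument (see, e.g., \cite{Renner_2005, Furrer_2012}), so the only thing to verify is that each step survives on separable Hilbert spaces. Throughout I would use that $\mathrm{Tr}_{C}$ is a completely positive trace-preserving map on the trace-class operators, and hence in particular a CPTNI map sending $\mathcal{D}_{\leq}(\mathcal{H}_{A}\otimes\mathcal{H}_{B}\otimes\mathcal{H}_{C})$ into $\mathcal{D}_{\leq}(\mathcal{H}_{A}\otimes\mathcal{H}_{B})$.

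First I would treat the nonsmooth operator inequality. Fix any $\tilde{\rho}_{ABC}$ in the $\epsilon$-ball around $\rho$ and any $\sigma_{BC}\in\mathcal{D}(\mathcal{H}_{B}\otimes\mathcal{H}_{C})$, and set $\lambda := 2^{-H_{\mathrm{min}}(\tilde{\rho}_{ABC}\|\sigma_{BC})}$, so that $\lambda\,\mathbbm{1}_{A}\otimes\sigma_{BC}\geq\tilde{\rho}_{ABC}$ by definition. Applying $\mathrm{Tr}_{C}$ to both sides, using that a positive map is order-preserving together with the identity $\mathrm{Tr}_{C}[\mathbbm{1}_{A}\otimes\sigma_{BC}]=\mathbbm{1}_{A}\otimes\sigma_{B}$ with $\sigma_{B}:=\mathrm{Tr}_{C}[\sigma_{BC}]$, I obtain $\lambda\,\mathbbm{1}_{A}\otimes\sigma_{B}\geq\tilde{\rho}_{AB}$ where $\tilde{\rho}_{AB}:=\mathrm{Tr}_{C}[\tilde{\rho}_{ABC}]$. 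Thus $\sigma_{B}$ is a feasible witness for the $AB$ problem at the same value $\lambda$, giving $H_{\mathrm{min}}(\tilde{\rho}_{AB}\|\sigma_{B})\geq-\log_{2}\lambda=H_{\mathrm{min}}(\tilde{\rho}_{ABC}\|\sigma_{BC})$, and a fortiori $H_{\mathrm{min}}(A|B)_{\tilde{\rho}_{AB}}\geq H_{\mathrm{min}}(\tilde{\rho}_{ABC}\|\sigma_{BC})$ after taking the supremum over $\sigma_{B}$.

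Next I would check that the reduced state remains inside the smoothing ball, which is what allows the supremum over smoothed states on the right to dominate. For trace-distance smoothing, Lemma~\ref{lemma:DPI} gives $\|\tilde{\rho}_{AB}-\rho_{AB}\|_{1}\leq\|\tilde{\rho}_{ABC}-\rho_{ABC}\|_{1}\leq\epsilon\,\Tr{\rho_{ABC}}=\epsilon\,\Tr{\rho_{AB}}$, while trace-preservation of $\mathrm{Tr}_{C}$ keeps the subnormalisation constraint $\Tr{\tilde{\rho}_{AB}}\leq\Tr{\rho_{AB}}$, so $\tilde{\rho}_{AB}\in\mathcal{B}_{\mathrm{TD}}^{\epsilon}(\rho_{AB})$. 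For purified-distance smoothing I would instead invoke monotonicity of the generalised fidelity under trace-non-increasing completely positive maps, $F_{*}(\mathrm{Tr}_{C}\tilde{\rho},\mathrm{Tr}_{C}\rho)\geq F_{*}(\tilde{\rho},\rho)$, which yields $\mathcal{P}(\tilde{\rho}_{AB},\rho_{AB})\leq\mathcal{P}(\tilde{\rho}_{ABC},\rho_{ABC})\leq\epsilon$ and hence $\tilde{\rho}_{AB}\in\mathcal{B}_{\mathrm{PD}}^{\epsilon}(\rho_{AB})$. In either case $H_{\mathrm{min}}(A|B)_{\tilde{\rho}_{AB}}\leq H_{\mathrm{min}}^{\epsilon}(A|B)_{\rho}$. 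Combining this with the bound from the previous paragraph and taking the supremum over $\sigma_{BC}$ and then over $\tilde{\rho}_{ABC}\in\mathcal{B}^{\epsilon}(\rho)$ on the left delivers $H_{\mathrm{min}}^{\epsilon}(A|BC)_{\rho}\leq H_{\mathrm{min}}^{\epsilon}(A|B)_{\rho}$. I expect the main obstacle to be purely technical rather than conceptual: one must confirm that the defining operator inequality and the partial trace interact correctly on the possibly infinite-dimensional supports, and that the monotonicity of $F_{*}$ (equivalently of $\mathcal{P}$) under partial trace is cited in a form valid for subnormalised states on separable Hilbert spaces, since the standard finite-dimensional proofs of those monotonicity statements do not transfer verbatim.
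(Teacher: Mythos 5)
Your proof is correct, and for trace-distance smoothing it is essentially the paper's own argument: the paper cites Renner's finite-dimensional strong subadditivity (\cite[Lemma 3.2.7]{Renner_2005}), observing that its two ingredients---the nonsmooth operator-inequality step (\cite[Lemma 3.1.7]{Renner_2005}, whose proof carries over verbatim to separable spaces) and monotonicity of the trace distance under CPTNI maps (Lemma~\ref{lemma:DPI})---survive in infinite dimensions; you simply write that argument out explicitly instead of citing it. Where you genuinely diverge is the purified-distance case. The paper does not re-run the two-step argument there; it invokes the data-processing inequality for smooth min-entropy on von Neumann algebras (\cite[Proposition 4.5.1]{Furrer_2012}), applied to the unital embedding $\omega_B \mapsto \omega_B \otimes \mathbbm{1}_C$ whose dual map on states is $\mathrm{Tr}_C$. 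You instead keep the unified structure---nonsmooth strong subadditivity plus preservation of the smoothing ball---and reduce the PD case to monotonicity of the generalised fidelity $F_{*}$ (equivalently of $\mathcal{P}$) under partial trace for subnormalised states on separable Hilbert spaces. That monotonicity is indeed the crux, and you are right to flag it as the citation to be supplied; it is available, since $\mathrm{Tr}_C$ is trace preserving, from Uhlmann-type monotonicity of the ordinary fidelity in separable Hilbert spaces together with the closed form $F_{*}(\rho,\sigma) = F(\rho,\sigma) + \sqrt{(1-\Tr{\rho})(1-\Tr{\sigma})}$, or directly from the infinite-dimensional treatment in Ref.~\cite{Furrer_2011}. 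One further pedantic point in your nonsmooth step: the infimum defining $H_{\mathrm{min}}$ is attained when the feasible set is nonempty (it is a closed subset of $\mathbb{R}$ bounded below), and when it is empty the inequality holds trivially with $H_{\mathrm{min}} = -\infty$, so your choice of $\lambda$ is legitimate. Overall, your route buys a single self-contained argument covering both smoothing balls at the price of supplying that one fidelity-monotonicity fact; the paper's route buys brevity by outsourcing the entire PD case to an existing general data-processing result.
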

\begin{proof}
The proof for the trace distance follows from \cite[Lemma 3.2.7]{Renner_2005} which states the strong subadditivity for finite-dimensional Hilbert spaces since this proof only relies on \cite[Lemma 3.1.7]{Renner_2005} (its proof is identical for separable Hilbert spaces) and the fact that the trace distance is monotonic under CPTNI maps (which we have established in Lemma~\ref{lemma:DPI}). Therefore, it remains to prove the statement in purified distance smoothing.

Consider the map $\mathcal{E}(\omega_B) := \omega_B \otimes \mathbbm{1}_C$. By the data-processing inequality \cite[Proposition 4.5.1]{Furrer_2012} for $\mathcal{E}: \mathcal{M}_{\overline{C}}\rightarrow \mathcal{M}_{\overline{B}} $ and $\omega \in \mathcal{D}_{\leq}(\mathcal{M}_{AB})$, where $\mathcal{M}$ stands for a von Neumann algebra and $\mathcal{E}^*$ denotes the dual map of $\mathcal{E}$, we obtain
\begin{equation}
    H_{\mathrm{min(PD)}}^{\epsilon}(A|\overline{B})_{\omega} \leq H_{ \mathrm{min(PD)}}^{\epsilon}(A|\overline{C})_{\mathrm{id}_A \otimes \mathcal{E}^{*}(\omega)}.
\end{equation}
Letting $\mathcal{M}_{\overline{B}} := \mathcal{B}(\mathcal{H}_{B}) \otimes \mathcal{B}(\mathcal{H}_C)$ and $\mathcal{M}_{\overline{C}} = \mathcal{B}(\mathcal{H}_B)$ and $\omega = \rho$, we obtain
\begin{equation}
    H_{\mathrm{min(PD)}}^{\epsilon}(A|BC)_{\rho} \leq H_{\mathrm{min(PD)}}^{\epsilon}(A|B)_{\mathrm{id}_{AB} \otimes \mathrm{Tr}_C[\rho]} = H_{\mathrm{min(PD)}}^{\epsilon}(A|B)_{\rho_{AB}}.
\end{equation}
This completes the proof in the purified distance. 
\end{proof}

\begin{lemma}[\textbf{Conditioning on classical register}]\label{lemma:CondClassRegister}
Let $\mathcal{H}_A$ and $\mathcal{H}_B$ be separable Hilbert spaces and $Z$ a classical register. Consider $\rho_{ABZ} \in \mathcal{D}_{\leq}(\mathcal{H}_A \otimes \mathcal{H}_B \otimes \ell_Z^{\infty})$.\\
Then we have
\begin{equation}\label{eq:CondClassReg1}
    H_{\mathrm{min(TD)}}^{\epsilon}(AB|Z)_{\rho} \geq \inf_{z \in (\lambda_z)_z} H_{\mathrm{min(TD)}}^{\epsilon}(A|B)_{\rho_{AB}^z}
\end{equation}
in trace distance and
\begin{equation}\label{eq:CondClassReg2}
    H_{\mathrm{min(PD)}}^{\epsilon}(AB|Z)_{\rho} \geq \inf_{z \in (\lambda_z)_z} H_{\mathrm{min(PD)}}^{\frac{\epsilon^2}{2}}(A|B)_{\rho_{AB}^z}
\end{equation}
in purified distance.
\end{lemma}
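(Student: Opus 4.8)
The plan is to reduce the whole statement to one non-smooth operator inequality and then lift it to the two smoothing balls. Throughout I use that $Z$ is classical, so the state is block-diagonal, $\rho_{ABZ} = \sum_z \lambda_z\, \rho_{AB}^z \otimes |z\rangle\langle z|$, with $\lambda_z := \Tr{(\mathbbm{1}_{AB}\otimes|z\rangle\langle z|)\,\rho_{ABZ}}$ and $\rho_{AB}^z$ the normalised conditional states. The orthogonality of the $Z$-blocks is the structural fact that makes everything work cleanly.

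First I would establish the non-smooth bound $H_{\mathrm{min}}(AB|Z)_\rho \geq \inf_z H_{\mathrm{min}}(A|B)_{\rho^z_{AB}}$. Set $h := \inf_z H_{\mathrm{min}}(A|B)_{\rho^z_{AB}}$ and fix $\delta>0$. For each $z$ in the support, the definition of the conditional min-entropy yields a normalised $\sigma_B^z\in\mathcal{D}(\mathcal{H}_B)$ with $2^{-(h-\delta)}\,\mathbbm{1}_A\otimes\sigma_B^z \geq \rho_{AB}^z$. The key observation is simply that $\sigma_B^z \leq \mathbbm{1}_B$ for any density operator, so this upgrades to $\rho_{AB}^z \leq 2^{-(h-\delta)}\,\mathbbm{1}_{AB}$. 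Taking the diagonal state $\sigma_Z := (\Tr{\rho_{ABZ}})^{-1}\sum_z \lambda_z |z\rangle\langle z|$ and using block-diagonality, the inequality $2^{-(h-\delta)}\Tr{\rho_{ABZ}}\,\mathbbm{1}_{AB}\otimes\sigma_Z \geq \rho_{ABZ}$ then holds blockwise, giving $H_{\mathrm{min}}(AB|Z)_\rho \geq (h-\delta) - \log_2 \Tr{\rho_{ABZ}} \geq h-\delta$; letting $\delta\to 0$ proves the claim.

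For trace-distance smoothing, Eq.~(\ref{eq:CondClassReg1}), I would combine near-optimal per-block smoothings: for each $z$ choose $\tilde\rho^z_{AB}\in\mathcal{B}_{\mathrm{TD}}^{\epsilon}(\rho^z_{AB})$ with $H_{\mathrm{min}}(A|B)_{\tilde\rho^z}\geq H_{\mathrm{min(TD)}}^{\epsilon}(A|B)_{\rho^z}-\delta$, and assemble $\tilde\rho_{ABZ}:=\sum_z\lambda_z\,\tilde\rho^z_{AB}\otimes|z\rangle\langle z|$. Because the $Z$-blocks are orthogonal, the Schatten-$1$ norm is additive across them, so $\|\rho_{ABZ}-\tilde\rho_{ABZ}\|_1 = \sum_z\lambda_z\|\rho^z-\tilde\rho^z\|_1 \leq \epsilon\,\Tr{\rho_{ABZ}}$ and $\Tr{\tilde\rho_{ABZ}}\leq\Tr{\rho_{ABZ}}$, i.e.\ $\tilde\rho_{ABZ}\in\mathcal{B}_{\mathrm{TD}}^{\epsilon}(\rho_{ABZ})$. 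Applying the non-smooth bound to $\tilde\rho_{ABZ}$ (equivalently, rerunning the operator inequality with $\tilde\rho^z\leq 2^{-(g-\delta)}\mathbbm{1}_{AB}$ for $g:=\inf_z H_{\mathrm{min(TD)}}^{\epsilon}(A|B)_{\rho^z}$) and using that $H_{\mathrm{min(TD)}}^{\epsilon}(AB|Z)_\rho$ is the supremum over the $\epsilon$-ball yields Eq.~(\ref{eq:CondClassReg1}) \emph{with the same $\epsilon$ on both sides}; it is exactly the exactness of the trace-norm decomposition over orthogonal blocks that avoids any loss in the smoothing parameter. The purified-distance statement, Eq.~(\ref{eq:CondClassReg2}), then follows by sandwiching via the ball inclusions of Proposition~\ref{prop:epsilonBalls}: from $\mathcal{B}_{\mathrm{TD}}^{\epsilon^2}\subseteq\mathcal{B}_{\mathrm{PD}}^{\epsilon}$ one gets $H_{\mathrm{min(PD)}}^{\epsilon}(AB|Z)_\rho \geq H_{\mathrm{min(TD)}}^{\epsilon^2}(AB|Z)_\rho$, the trace-distance result at parameter $\epsilon^2$ bounds this below by $\inf_z H_{\mathrm{min(TD)}}^{\epsilon^2}(A|B)_{\rho^z}$, and from $\mathcal{B}_{\mathrm{PD}}^{\epsilon^2/2}\subseteq\mathcal{B}_{\mathrm{TD}}^{\epsilon^2}$ one has $H_{\mathrm{min(TD)}}^{\epsilon^2}(A|B)_{\rho^z}\geq H_{\mathrm{min(PD)}}^{\epsilon^2/2}(A|B)_{\rho^z}$, which produces the factor $\epsilon^2/2$.

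The genuine subtleties are all in the infinite-dimensional bookkeeping rather than in a deep new idea: the relevant suprema and infima need not be attained, which is why I keep the $\delta$-arguments; the sum over a possibly countably infinite $Z$ must be controlled (harmless, since all blocks are positive and trace-class and the weights $\lambda_z$ are summable); and the operator inequalities must be read in the separable Hilbert space, where $\mathbbm{1}_B$ is bounded but not trace-class, which is fine for the min-entropy definition. The only lossy step is the purified-to-trace-distance conversion, and this loss is precisely the source of the asymmetric $\epsilon^2/2$ appearing on the right of Eq.~(\ref{eq:CondClassReg2}); the trace-distance case, by contrast, is lossless because of the orthogonality of the classical blocks.
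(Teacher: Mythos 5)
Your proposal is correct and follows essentially the same route as the paper's own proof: a non-smooth blockwise operator-inequality bound exploiting the orthogonality of the classical $Z$-blocks, assembly of per-block near-optimal smoothings into a global smoothing whose membership in the trace-distance ball follows from additivity of the Schatten-$1$ norm over orthogonal blocks, and finally the purified-distance statement via the ball inclusions of Proposition~\ref{prop:epsilonBalls}. If anything, your write-up is more explicit than the paper's at two points — the upgrade $\sigma_B \leq \mathbbm{1}_B$ connecting the per-block $H_{\mathrm{min}}(A|B)$ to the global $H_{\mathrm{min}}(AB|Z)$, and the summation (rather than infimum) over blocks in the trace-norm estimate — so no gaps to report.
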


\begin{proof}
Since $Z$ is a classical register, $z$'s are mutually orthogonal. By the definition of the min-entropy (see Section~\ref{sec:SmoothMinEntropy}) we have $\forall z$
\begin{align*}
    &\lambda \Tr{\rho^z_{AB}} \sum_z \mathbbm{1}_A \otimes |z\rangle\langle z| - \sum_{z} \rho_{AB}^z \otimes |z\rangle \langle z| \geq 0\\
    \Leftrightarrow & \lambda \Tr{ \rho_{AB}^z} \cdot \mathbbm{1}_A  - \rho_{AB}^{z} \geq 0.
\end{align*}
Therefore, again recalling the definition of the min-entropy, we obtain
\begin{equation}\label{eq:condClassRegProof}
    H_{\mathrm{min(TD)}}(A|BZ) = \inf_{z} H_{\mathrm{min(TD)}}(A|B)_{\rho_{AB}^z}.
\end{equation}
Using the definition of smoothed min-entropies, we know that for every $\delta >0$ and for every $z\in Z$ there exists $\tilde{\rho}_{AB}^{z} \in \mathcal{B}^{\epsilon}_{TD}(\rho_{AB}^z)$ such that
\begin{align*}
    H_{\mathrm{min(TD)}}(\tilde{\rho}_{AB}^{z}||\rho_B^z) = \inf_{z} H_{\mathrm{min(TD)}}(\rho_{AB}^{z}||\rho_{B}^{z}) - \delta,
\end{align*}
for example, if we let $\tilde{\rho}_{AB}^{z}$ be the optimiser for the smooth min-entropy. Then, defining $\tilde{\rho}_{ABZ} := \sum_z \tilde{\rho}_{AB}^{z}$, we obtain from Eq.~(\ref{eq:condClassRegProof})
\begin{align*}
    H_{\mathrm{min(TD)}}(\tilde{\rho}_{ABZ}||\rho_{BZ}) = \inf_z H_{\mathrm{min(TD)}}(\tilde{\rho}_{AB}^{z} ||\rho_B^z) \geq H_{\mathrm{min(TD)}}^{\epsilon}(\rho_{AB}^{z} ||\rho_{B}^{z})-\delta.
\end{align*}
It remains to show that $\tilde{\rho}_{ABZ}$ is in the smoothing ball of $\rho_{ABZ}$. We use the trace distance ball, where $\tilde{\rho}_{ABZ}$ is guaranteed to be a subnormalized state. Therefore, following \cite{Renner_2005}, we first prove Eq. (\ref{eq:CondClassReg1})
\begin{equation*}
    \left|\left| \tilde{\rho}_{ABZ} - \rho_{ABZ} \right|\right|_1 = \inf_z \left|\left| \tilde{\rho}_{AB}^{z} - \rho_{AB}^{z} \right|\right|_1 \leq \sum_z \Tr{\rho_{AB}^{z}} \epsilon \leq \epsilon,
\end{equation*}
which concludes the proof in trace distance smoothing. Using Proposition \ref{prop:epsilonBalls}, we obtain Eq. (\ref{eq:CondClassReg2}).

\end{proof}

\begin{lemma}[\textbf{Removing a classical communication register}]\label{lemma:removingClassRegister}
Let $\mathcal{H}_C, \mathcal{H}_{E'}$ and $\mathcal{H}_X$ be separable Hilbert spaces and $\dim(\mathcal{H}_C) < \infty$ as well as $\dim(\mathcal{H}_{X}) < \infty$, where $X$ is the raw key and $C$ the transcript of the communication between Alice and Bob. Let $\rho \in \mathcal{D}(\mathcal{H}_X \otimes \mathcal{H}_{E'} \otimes \mathcal{H}_C)$ and let $\rho_{XE'} \in \mathcal{D}(\mathcal{H}_X \otimes \mathcal{H}_{E'})$ be the state after tracing out the register $C$.\\
Then for smoothing in trace distance,
\begin{equation*}
    H_{\mathrm{min(TD)}}^{\epsilon}(X|E'C)_{\rho} \geq H_{\mathrm{min(TD)}}^{\epsilon}(X|E')_{\rho} - \mathrm{leak}_{\mathrm{EC}}.
\end{equation*}
\end{lemma}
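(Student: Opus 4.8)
The plan is to reduce the statement to the chain rule for smooth min-entropies (Lemma~\ref{lemma:chainRule}) together with the structural observation that the reconciliation transcript $C$ is reversibly generated from the raw key $X$. Throughout I use that the transcript occupies at most $\mathrm{leak}_{\mathrm{EC}}$ classical bits, so $\log_2 \dim(\mathcal{H}_C) \leq \mathrm{leak}_{\mathrm{EC}}$, and that $\dim(\mathcal{H}_C) < \infty$ while $\mathcal{H}_{E'}$ may be infinite dimensional (this is precisely the generalisation over the finite-dimensional statement in \cite{Renner_2005}, since the chain rule in Lemma~\ref{lemma:chainRule} only requires the \emph{removed} register $C$ to be finite dimensional).

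First I would record the key fact that in a one-way information-reconciliation step the transcript is a deterministic function $g$ of Alice's raw key once the public randomness of the protocol is fixed, i.e. $C = g(X)$. Hence the classical map $\ket{x}_X \mapsto \ket{x}_X \otimes \ket{g(x)}_C$ extends to an isometry $V : \mathcal{H}_X \to \mathcal{H}_X \otimes \mathcal{H}_C$, and the joint state is $\rho_{XCE'} = (V \otimes \mathbbm{1}_{E'})\, \rho_{XE'}\, (V \otimes \mathbbm{1}_{E'})^{\dagger}$. The next step is to argue isometric invariance of the smooth min-entropy on the conditioned system: since $V^{\dagger}V = \mathbbm{1}_X$ and $P := VV^{\dagger}$ is a projector, the defining operator inequality $\lambda\, \mathbbm{1}_X \otimes \sigma_{E'} \geq \rho_{XE'}$ holds iff $\lambda\, \mathbbm{1}_{XC} \otimes \sigma_{E'} \geq (V \otimes \mathbbm{1})\rho_{XE'}(V \otimes \mathbbm{1})^{\dagger}$ (conjugating by $V$ and by $V^{\dagger}$, and using $P \leq \mathbbm{1}_{XC}$). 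Because $V$ preserves the trace norm, it also maps the trace-distance smoothing ball of $\rho_{XE'}$ bijectively onto that of $\rho_{XCE'}$, so the optimal $\lambda$ and the optimal smoothing are in one-to-one correspondence. This yields the equality $H_{\mathrm{min(TD)}}^{\epsilon}(XC|E')_{\rho} = H_{\mathrm{min(TD)}}^{\epsilon}(X|E')_{\rho}$.

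Finally I would apply the chain rule of Lemma~\ref{lemma:chainRule} with $A = X$, $B = C$ (so that $n = \dim(\mathcal{H}_C) < \infty$) and conditioning system $E'$, which gives $H_{\mathrm{min(TD)}}^{\epsilon}(XC|E')_{\rho} - \log_2 \dim(\mathcal{H}_C) \leq H_{\mathrm{min(TD)}}^{\epsilon}(X|CE')_{\rho}$. Substituting the equality established above and using $\log_2 \dim(\mathcal{H}_C) \leq \mathrm{leak}_{\mathrm{EC}}$ produces $H_{\mathrm{min(TD)}}^{\epsilon}(X|E'C)_{\rho} \geq H_{\mathrm{min(TD)}}^{\epsilon}(X|E')_{\rho} - \mathrm{leak}_{\mathrm{EC}}$, which is the claim.

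I expect the only delicate point to be the middle paragraph, namely justifying the equality $H_{\mathrm{min(TD)}}^{\epsilon}(XC|E') = H_{\mathrm{min(TD)}}^{\epsilon}(X|E')$: appending the transcript must not change the entropy, and this relies essentially on $C$ being \emph{reversibly} (isometrically) obtainable from $X$. Without this functional dependence one would only have one-sided monotonicity and could not cleanly recover the additive $\mathrm{leak}_{\mathrm{EC}}$ penalty. Checking that isometric invariance survives the smoothing in the infinite-dimensional (separable) setting is the technical heart; once that is in place, the chain rule and the bit-counting identification $\log_2 \dim(\mathcal{H}_C) \leq \mathrm{leak}_{\mathrm{EC}}$ are routine.
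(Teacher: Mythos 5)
Your core technical machinery is sound, and where it overlaps with the paper it coincides: the paper's first step is exactly your last one, namely the chain rule of Lemma~\ref{lemma:chainRule} with the finite-dimensional register $C$ playing the role of $B$, giving $H_{\mathrm{min(TD)}}^{\epsilon}(X|E'C)_{\rho} \geq H_{\mathrm{min(TD)}}^{\epsilon}(XC|E')_{\rho} - \log_2(|C|)$, followed by replacing $\log_2(|C|)$ with $\mathrm{leak}_{\mathrm{EC}}$. Your conjugation argument for the isometry $V:\ket{x}\mapsto\ket{x}\ket{g(x)}$ is also correct as far as it goes (one overstatement: conjugation by $V\otimes\mathbbm{1}_{E'}$ does \emph{not} map the smoothing ball of $\rho_{XE'}$ \emph{bijectively} onto that of $\rho_{XCE'}$, since the latter contains states supported outside the range of $V\otimes\mathbbm{1}_{E'}$; but this is harmless, because the direction you actually use, $H_{\mathrm{min(TD)}}^{\epsilon}(XC|E')_{\rho}\geq H_{\mathrm{min(TD)}}^{\epsilon}(X|E')_{\rho}$, only needs the ball to map \emph{into} the ball).

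The genuine gap is the structural assumption $C=g(X)$. The paper does not assume the transcript is a function of the raw key $X$ alone: its proof introduces $Y$, the \emph{other} party's local data used during information reconciliation, and only uses that $C$ is a classical string computed deterministically from the pair $(X,Y)$. This generality matters: error correction may be interactive, and even in one-way reverse reconciliation the transcript contains messages that depend on the other party's data (e.g., the verification accept/reject response). When $C$ depends on $Y$, then conditioned on $X=x$ the register $C$ is still random and correlated with $E'$ (through $Y$), so $\rho_{XCE'}$ is \emph{not} of the form $(V\otimes\mathbbm{1}_{E'})\,\rho_{XE'}\,(V\otimes\mathbbm{1}_{E'})^{\dagger}$ for any isometry out of $X$; your equality $H_{\mathrm{min(TD)}}^{\epsilon}(XC|E')=H_{\mathrm{min(TD)}}^{\epsilon}(X|E')$ then has no justification (and is in fact false in general — take $C$ independent of $X$ and $E'$ trivial). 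The inequality that survives, and is all that is needed, is $H_{\mathrm{min(TD)}}^{\epsilon}(XC|E')_{\rho}\geq H_{\mathrm{min(TD)}}^{\epsilon}(X|E')_{\rho} + H_{\mathrm{min(TD)}}(C|XE')_{\rho}$, an infinite-dimensional extension of Renner's Lemma 3.2.10 whose proof in this setting rests on Lemma~\ref{lemma:CondClassRegister}; the paper then applies strong subadditivity (Lemma~\ref{lemma:strongSubadditivity}) to pass to $H_{\mathrm{min(TD)}}(C|XYE')_{\rho}$, which is non-negative precisely because $C$ is a classical deterministic function of $(X,Y)$. So your route proves the lemma only for reconciliation schemes whose transcript is computed from $X$ alone; to prove it in the generality in which the paper states and uses it, you need the superadditivity-plus-strong-subadditivity argument (or some other proof that appending the classical transcript cannot decrease the smooth min-entropy) in place of the isometric-invariance step.
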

\begin{proof}
This proof follows closely the proof of \cite[Lemma 2]{Scarani_2008}. We define $Y$ to be the other party's local information used during information reconciliation and start with the left-hand side of the statement,
\begin{align*}
    H_{\mathrm{min(TD)}}^{\epsilon}(X|E'C)_{\rho} &\geq H_{\mathrm{min(TD)}}^{\epsilon}(XC|E')_{\rho} - \log_2(|C|)\\
    & \geq H_{\mathrm{min(TD)}}^{\epsilon}(X|E')_{\rho} + H_{\mathrm{min(TD)}}(C|XE')_{\rho} - \log_2(|C|) \\
    & \geq H_{\mathrm{min(TD)}}^{\epsilon}(X|E')_{\rho} +H_{\mathrm{min(TD)}}(C|XYE')_{\rho} - \log_2(|C|) \\
    & \geq H_{\mathrm{min(TD)}}^{\epsilon}(X|E')_{\rho} +H_{\mathrm{min(TD)}}(C|XYE')_{\rho} - \log_2(|C|).
\end{align*}
The first inequality follows from the chain rule for smooth-min entropies (Lemma \ref{lemma:chainRule}) and the second inequality is an extension of \cite[Lemma 3.2.10]{Renner_2005} for an infinite dimensional register $C \rightarrow E'$. We remark that proving this extension requires extending the min-entropy part of \cite[Lemma 3.1.8]{Renner_2005} which we have done in Lemma \ref{lemma:CondClassRegister} and \cite[Lemma 3.1.1]{Renner_2005} where the proof for the infinite dimensional case is identical to the proof given there. The third line is obtained by the strong subadditivity property of the smooth min-entropy (Lemma \ref{lemma:strongSubadditivity}) and the last inequality comes from the fact that $E' \leftrightarrow (X,E') \leftrightarrow C$ forms a Markov-chain since $C$ is computed by Alice and Bob as a function of $XY$. Finally, since $\log_2(|C|)$ stands for the number of all possible information-reconciliation transcripts, we may replace it with the actual leakage $\mathrm{leak}_{\mathrm{EC}}$ giving the number of bits needed to implement the used information-reconciliation scheme.
\end{proof}

\section{Generalisation of the asymptotic equipartition property}\label{apdx:GeneralisationAEP}

In this appendix, we generalise the asymptotic equipartition property \cite[Corollary 3.3.7]{Renner_2005} to infinite dimensions. The proof there requires an ordering on the eigenvalues as well as the Birkhoff-von Neumann theorem, so it needs some care to generalise the AEP statement to infinite dimensions. We note that the fully quantum asymptotic equipartition property was extended to infinite dimensions in Refs. \cite{Furrer_2012,Furrer_2011, Khartri_2019, Fawzi_2022}. However, as noted in Ref. \cite{George_2020} this version is harder to apply numerically. The basic idea of our proof relies on the fact that the infinite dimensional min-entropy can be converged via projections \cite{Furrer_2011}. Before we come to the actual proof, it requires some preparations.

We start by extending the definition of the max-relative entropy to infinite dimensions.

\begin{definition}[\textbf{infinite dimensional max-relative entropy}]\label{def:maxRelEntr}
Let $\mathcal{H}_A$ be a Hilbert space, and let $P,Q \in \mathrm{Pos}(\mathcal{H}_A)$. Then the max-relative entropy is defined by 
\begin{equation*}\label{eq:maxRelEntropy}
    D_{\mathrm{max}}(P||Q) = \inf\{\lambda:~ P \leq 2^\lambda Q\}.
\end{equation*}
\end{definition}

Next, we prove that $D_{\mathrm{max}}$ is a Rényi divergence just as in finite dimensions.
\begin{proposition}
For the max-relative entropy, as defined in Eq.~(\ref{eq:maxRelEntropy}) the following statement holds
\begin{itemize}
    \item[(1)] Normalisation: $D_{\mathrm{max}}(aP||bQ) = D_{\mathrm{max}}(P||Q) + \log_2(a) - \log_2(b)$
    \item[(2)] Dominance: For $P,Q,Q' \in\mathrm{Pos}(\mathcal{H}_A)$ and $Q\leq Q'$, we have $D_{\mathrm{max}}(P||Q) \geq D_{\mathrm{max}}(P||Q').$
\end{itemize}
\end{proposition}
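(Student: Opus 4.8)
The plan is to prove both parts directly from Definition~\ref{def:maxRelEntr}, observing that the only structural facts used are the properties of the Löwner (operator) order and multiplication by positive scalars, both of which are insensitive to the dimension of $\mathcal{H}_A$. Hence the standard finite-dimensional arguments transfer verbatim, and the real content of the proposition is that nothing goes wrong in the infinite-dimensional setting. In both parts I would argue at the level of the feasible sets $\{\lambda : P \leq 2^{\lambda} Q\}$ entering the infimum, rather than manipulating $D_{\mathrm{max}}$ as a closed-form quantity.

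For the normalisation property (1), I would first rewrite the defining feasibility condition. Since $a,b>0$, the operator inequality $aP \leq 2^{\lambda} bQ$ is equivalent to $P \leq 2^{\lambda}(b/a)\, Q = 2^{\lambda + \log_2 b - \log_2 a}\, Q$. Thus the affine map $\lambda \mapsto \mu := \lambda + \log_2 b - \log_2 a$ is a bijection between the feasible set $\{\lambda : aP \leq 2^{\lambda} bQ\}$ and the feasible set $\{\mu : P \leq 2^{\mu} Q\}$. Taking infima on both sides and using that a constant affine shift commutes with the infimum yields $D_{\mathrm{max}}(aP\|bQ) = D_{\mathrm{max}}(P\|Q) + \log_2 a - \log_2 b$, as claimed.

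For the dominance property (2), I would argue by inclusion of feasible sets. Suppose $\lambda$ satisfies $P \leq 2^{\lambda} Q$. Since $Q \leq Q'$ and $2^{\lambda} > 0$, multiplying the order relation $Q \leq Q'$ by the positive scalar $2^{\lambda}$ preserves it, giving $2^{\lambda} Q \leq 2^{\lambda} Q'$; combining this with $P \leq 2^{\lambda} Q$ by transitivity of the Löwner order yields $P \leq 2^{\lambda} Q'$. Hence $\{\lambda : P \leq 2^{\lambda} Q\} \subseteq \{\lambda : P \leq 2^{\lambda} Q'\}$, and since the infimum over a larger set can only decrease, $D_{\mathrm{max}}(P\|Q') \leq D_{\mathrm{max}}(P\|Q)$.

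Finally, I would note that there is no genuine obstacle here: the arguments use only that $A \geq 0$ together with $c>0$ implies $cA \geq 0$, and that the Löwner order is transitive, both of which hold for positive bounded operators on any separable Hilbert space. In particular, no spectral decomposition, eigenvalue ordering, or finite-rank approximation is required, in contrast to the generalisation of the asymptotic equipartition property that follows, where such tools do become essential. The only point worth stating carefully is therefore that the infimum in Definition~\ref{def:maxRelEntr} is taken over a nonempty set so that the two sides are genuinely comparable, which is ensured whenever $Q$ (respectively $Q'$) has support containing that of $P$.
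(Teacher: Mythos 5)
Your proof is correct and takes essentially the same approach as the paper: part (1) rests in both cases on the equivalence $aP \leq 2^{\lambda} bQ \Leftrightarrow P \leq 2^{\lambda + \log_2 b - \log_2 a}\, Q$, and part (2) on transitivity of the operator order making any $\lambda$ feasible for $Q$ also feasible for $Q'$. If anything, your feasible-set formulation (bijection for (1), inclusion for (2)) is marginally more careful than the paper's, since it never needs the paper's implicit assumption that the infimum defining $\lambda^{*}$ is attained.
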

\begin{proof}
We prove the two points separately.
\begin{itemize}
    \item[(1)] Let $\lambda^* := D_{\mathrm{max}}(P||Q)$ and $\lambda := D_{\mathrm{max}}(aP||bQ)$. We show two directions.
    \begin{enumerate}
        \item[$\geq$] Using the definition of the max-relative entropy yields $aP \leq 2^{\lambda} bQ$, which implies that $P \leq 2^{\lambda} \frac{b}{a}Q$. According to the definition, $\lambda^*$ is the infimum of all $\mu$ such that $P \leq 2^{\mu}Q$, hence $2^{\lambda^*} \leq 2^{\lambda} \frac{b}{a}$. Taking the logarithm and rearranging yields $\lambda \geq \lambda^* +\log_2(a) - \log_2(b)$, which concludes the first direction.
    \item[$\leq$] Using the max-relative entropy yields $P\leq 2^{\lambda^*}Q$. This is equivalent to $aP \leq 2^{\lambda^*} \frac{a}{b} bQ$. According to definition, $\lambda$ is the infimum of all $\mu$ such that $aP \leq 2^{\mu} bQ$; hence $2^{\lambda} \leq 2^{\lambda^*} \frac{a}{b}$. Taking the logarithm and rearranging yields $\lambda \leq \lambda^* + \log_2(a) - \log_2(b)$.
    \end{enumerate}

    \item[(2)] Again, for $\lambda^* := D_{\mathrm{max}}(P||Q)$, we have $P \leq 2^{\lambda^*} Q$. Since $Q \leq Q'$, we have $P \leq 2^{\lambda^*} Q \leq 2^{\lambda^*} Q'$. So, $\lambda^*$ is feasible for $D_{\mathrm{max}}(P||Q')$. Hence, it is an upper bound. This proves the claim.
\end{itemize}
\end{proof}

Defining $H_{\mathrm{min}}(\rho_{AB}||\sigma_B) = -D_{\mathrm{max}}(\rho_{AB} || \mathbbm{1}_A \otimes \sigma_B)$ gives us the following corollary.

\begin{corollary}\label{cor:NormAndDom}
    Let $\rho \in \mathrm{Pos}(\mathcal{H}_A \otimes \mathcal{H}_B)$ and $\sigma, \sigma' \in\mathrm{Pos}(\mathcal{H}_A)$ such that $\sigma \leq \sigma'$.\\
    Then the following statements hold:
    \begin{itemize}
        \item[(1)] Normalisation: $H_{\mathrm{min}}(a \rho ||b\sigma) = H_{\mathrm{min}}(\rho||\sigma) - \log_2(a) + \log_2(b)$

        \item[(2)] Dominance: $H_{\mathrm{min}}(\rho||\sigma) \leq H_{\mathrm{min}}(\rho|| \sigma')$.
    \end{itemize}
\end{corollary}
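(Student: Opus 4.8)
The plan is to reduce both claims directly to the two properties of the max-relative entropy established in the preceding Proposition, using only the defining identity $H_{\mathrm{min}}(\rho||\sigma) = -D_{\mathrm{max}}(\rho || \mathbbm{1}_A \otimes \sigma)$. Since this corollary is precisely the specialisation of that Proposition to operators of the tensor form $\mathbbm{1}_A \otimes \sigma$, essentially no new work beyond careful bookkeeping with the sign and the tensor factor is required.

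For the normalisation statement (1), I would set $P := \rho$ and $Q := \mathbbm{1}_A \otimes \sigma$, and note that scalars factor through the tensor product, so $\mathbbm{1}_A \otimes (b\sigma) = b\,(\mathbbm{1}_A \otimes \sigma) = bQ$. Hence $H_{\mathrm{min}}(a\rho || b\sigma) = -D_{\mathrm{max}}(aP || bQ)$, and applying part (1) of the Proposition gives $D_{\mathrm{max}}(aP || bQ) = D_{\mathrm{max}}(P||Q) + \log_2(a) - \log_2(b)$. Negating this identity and re-inserting the definition of $H_{\mathrm{min}}$ yields exactly $H_{\mathrm{min}}(a\rho || b\sigma) = H_{\mathrm{min}}(\rho||\sigma) - \log_2(a) + \log_2(b)$, as claimed.

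For the dominance statement (2), the key preliminary observation is that $\sigma \leq \sigma'$ implies $\mathbbm{1}_A \otimes \sigma \leq \mathbbm{1}_A \otimes \sigma'$, because $\sigma' - \sigma \geq 0$ and tensoring a positive operator on the left with $\mathbbm{1}_A$ preserves positivity (i.e. $\mathbbm{1}_A \otimes (\cdot)$ is a positive map). I would then apply part (2) of the Proposition with $P := \rho$, $Q := \mathbbm{1}_A \otimes \sigma$ and $Q' := \mathbbm{1}_A \otimes \sigma'$ to obtain $D_{\mathrm{max}}(\rho || \mathbbm{1}_A \otimes \sigma) \geq D_{\mathrm{max}}(\rho || \mathbbm{1}_A \otimes \sigma')$. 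Multiplying through by $-1$ reverses the inequality, and substituting the definition of $H_{\mathrm{min}}$ gives $H_{\mathrm{min}}(\rho||\sigma) \leq H_{\mathrm{min}}(\rho||\sigma')$.

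I do not expect any substantial obstacle: the corollary is a pair of one-line consequences of the corresponding properties of $D_{\mathrm{max}}$. The only point meriting explicit care is verifying that the tensor structure $\mathbbm{1}_A \otimes \sigma$ behaves correctly under scalar multiplication and under the Löwner order in the separable, possibly infinite-dimensional, setting; both facts are immediate, so no convergence or spectral-theoretic subtleties enter here.
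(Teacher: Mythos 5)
Your proof is correct and is exactly the route the paper takes: the paper presents this corollary as an immediate consequence of the preceding proposition via the identity $H_{\mathrm{min}}(\rho_{AB}\|\sigma_B) = -D_{\mathrm{max}}(\rho_{AB}\,\|\,\mathbbm{1}_A\otimes\sigma_B)$, and your write-up simply makes that one-line reduction (scalar factoring through the tensor product for (1), positivity of $\mathbbm{1}_A\otimes(\cdot)$ preserving the Löwner order plus the sign flip for (2)) explicit.
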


\begin{definition}
    Let $\mathcal{H}_A$ and $\mathcal{H}_B$ be separable Hilbert spaces, and let $\rho\in\mathrm{Pos}(\mathcal{H}_A \otimes \mathcal{H}_B)$ as well as $\sigma \in\mathrm{Pos}(\mathcal{H}_B)$. \\

    For $\epsilon \in (0, \sqrt{\Tr{\rho}})$ the smooth min-entropy is given by
    \begin{align*}
        H^{\epsilon}_{\mathrm{min(TD)}}(\rho||\sigma) := \sup_{\tilde{\rho} \in \mathcal{B}_{\mathrm{TD}}^{\epsilon}(\rho)} H_{\mathrm{min(TD)}}(\tilde{\rho} ||\sigma).
    \end{align*}
\end{definition}
Note that this coincides with the definition given in the main text (Section \ref{sec:Notation}). Next, we want to generalise \cite[Lemma 2]{Furrer_2011}. Therefore, we introduce sequences of projectors $\{\Pi^k\}_{k \in \mathbb{N}}$ onto finite-dimensional subspaces $U \subseteq \mathcal{H}$ of the relevant Hilbert space $\mathcal{H}$, that converge to the identity $\mathbbm{1}_{\mathcal{H}}$ with respect to $||\cdot||_1$. Then we define a sequence of non-normalised projected states as $\hat{\rho}^k := \Pi^k \hat{\rho} \Pi^k$. For a more detailed description, we refer the reader to \cite[Section II]{Furrer_2011}. We note that the following could be trivially further generalised to a continuity claim for the smoothed max-relative entropy.

\begin{lemma}\label{lemma:AEPhelpLemma}
    Let $\rho_{B} \in \mathcal{D}(\mathcal{H}_A\otimes \mathcal{H}_B)$ and let $\{ \hat{\rho}_{AB}^{k}\}_{k=1}^{\infty}$ a sequence of normalised projected states converging to $\rho_{AB}$ in the $||\cdot||_1$-norm. Let $\sigma_{B} \in\mathcal{D}(\mathcal{H}_B)$ and $\{ \hat{\sigma}_{B}^{k}\}_{k=1}^{\infty}$ be a sequence of normalised projected states that converge to $\sigma_B$.\\
    For any fixed $t\in(0,1)$ there exists $k_0 \in \mathbb{N}$ such that $\forall k\geq k_0$ we have
    \begin{align*}
        H_{\mathrm{min(TD)}}^{\epsilon}(\rho_{B}||\sigma_B) \geq H_{\mathrm{min(TD)}}^{t\epsilon}(\hat{\rho}_{AB}^k || \hat{\sigma}_B^k) + \log_2 \left( \mathrm{Tr}\left[\Pi_B^k \sigma \Pi_B^k\right] \right).
    \end{align*}
\end{lemma}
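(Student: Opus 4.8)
The plan is to lower-bound the infinite-dimensional smooth min-entropy on the left by exhibiting one explicit admissible smoothed state, which I extract from the optimiser of the finite-dimensional quantity on the right. Let $\tilde{\rho}^{k}\in\mathcal{B}_{\mathrm{TD}}^{t\epsilon}(\hat{\rho}_{AB}^{k})$ be a state attaining (or, up to an arbitrarily small $\delta>0$, nearly attaining) $H_{\mathrm{min(TD)}}^{t\epsilon}(\hat{\rho}_{AB}^{k}\|\hat{\sigma}_{B}^{k})$. Since $\hat{\sigma}_{B}^{k}$ is supported on the finite-dimensional range of $\Pi_{B}^{k}$, any competitor with support outside that range yields min-entropy $-\infty$, so the optimiser may be taken to have finite-dimensional support and hence to exist. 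The first step is to show that for $k$ large enough this $\tilde{\rho}^{k}$ is itself an admissible smoothing of $\rho_{AB}$.

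First I would fix $k_{0}$ so that $\|\hat{\rho}_{AB}^{k}-\rho_{AB}\|_{1}\le(1-t)\epsilon$ for all $k\ge k_{0}$, which is possible because $\hat{\rho}_{AB}^{k}\to\rho_{AB}$ in trace norm and $t<1$. As $\hat{\rho}_{AB}^{k}$ and $\rho_{AB}$ are normalised and $\tilde{\rho}^{k}\in\mathcal{B}_{\mathrm{TD}}^{t\epsilon}(\hat{\rho}_{AB}^{k})$, the triangle inequality gives $\|\tilde{\rho}^{k}-\rho_{AB}\|_{1}\le t\epsilon+(1-t)\epsilon=\epsilon$, together with $\Tr{\tilde{\rho}^{k}}\le\Tr{\hat{\rho}_{AB}^{k}}=1=\Tr{\rho_{AB}}$, so that $\tilde{\rho}^{k}\in\mathcal{B}_{\mathrm{TD}}^{\epsilon}(\rho_{AB})$. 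By the definition of the smooth min-entropy this yields $H_{\mathrm{min(TD)}}^{\epsilon}(\rho_{AB}\|\sigma_{B})\ge H_{\mathrm{min(TD)}}(\tilde{\rho}^{k}\|\sigma_{B})$.

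It then remains to account for the mismatch between the reference states $\sigma_{B}$ and $\hat{\sigma}_{B}^{k}=\Pi_{B}^{k}\sigma_{B}\Pi_{B}^{k}/\Tr{\Pi_{B}^{k}\sigma_{B}\Pi_{B}^{k}}$. Writing $c_{k}:=\Tr{\Pi_{B}^{k}\sigma_{B}\Pi_{B}^{k}}$ and using the dominance property (Corollary~\ref{cor:NormAndDom}(2)) with $\Pi_{B}^{k}\sigma_{B}\Pi_{B}^{k}\le\sigma_{B}$, followed by the normalisation property (Corollary~\ref{cor:NormAndDom}(1)), I would obtain
\begin{align*}
    H_{\mathrm{min(TD)}}(\tilde{\rho}^{k}\|\sigma_{B}) &\ge H_{\mathrm{min(TD)}}(\tilde{\rho}^{k}\|\Pi_{B}^{k}\sigma_{B}\Pi_{B}^{k}) \\
    &= H_{\mathrm{min(TD)}}(\tilde{\rho}^{k}\|c_{k}\hat{\sigma}_{B}^{k}) \\
    &= H_{\mathrm{min(TD)}}(\tilde{\rho}^{k}\|\hat{\sigma}_{B}^{k})+\log_{2}c_{k}.
\end{align*}
Since $H_{\mathrm{min(TD)}}(\tilde{\rho}^{k}\|\hat{\sigma}_{B}^{k})=H_{\mathrm{min(TD)}}^{t\epsilon}(\hat{\rho}_{AB}^{k}\|\hat{\sigma}_{B}^{k})$ (up to the negligible $\delta$), chaining this with the bound from the previous paragraph gives exactly the claimed inequality.

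The hard part will be the operator inequality $\Pi_{B}^{k}\sigma_{B}\Pi_{B}^{k}\le\sigma_{B}$ underpinning the dominance step, which carries the entire reference-state comparison. This is false for an arbitrary finite-rank projection, since cutting off the off-diagonal blocks of $\sigma_{B}$ can destroy positivity of $\sigma_{B}-\Pi_{B}^{k}\sigma_{B}\Pi_{B}^{k}$; indeed, the naive estimate with a generic projection produces the \emph{opposite} sign. The inequality becomes valid precisely when the projection sequence $\{\Pi_{B}^{k}\}$ is chosen to commute with $\sigma_{B}$, i.e.\ to project onto $\sigma_{B}$-invariant subspaces. This is automatic for spectral projections of $\sigma_{B}$, and in our application $\sigma_{B}$ is diagonal in the (displaced) number basis while the $\Pi_{B}^{k}$ are the corresponding Fock-space cutoffs, so the hypothesis holds. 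I would therefore state this compatibility of the projections with $\sigma_{B}$ explicitly as part of the setup inherited from \cite{Furrer_2011}. The remaining ingredients, namely existence of the smoothing optimiser and the trace-norm convergence fixing $k_{0}$, are then routine given the finite-dimensional support argument above.
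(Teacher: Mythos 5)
Your proposal follows essentially the same route as the paper's proof: establish that, for $k$ large enough, smoothing states of the projected state are admissible smoothing states of $\rho_{AB}$ (the paper cites the ball inclusion $\mathcal{B}_{\mathrm{TD}}^{t\epsilon}(\hat{\rho}_{AB}^k)\subseteq\mathcal{B}_{\mathrm{TD}}^{\epsilon}(\rho_{AB})$ from the proof technique of Furrer et al., you reprove it by the triangle inequality), and then trade $\hat{\sigma}_B^k$ for $\sigma_B$ using exactly the normalisation and dominance properties of Corollary~\ref{cor:NormAndDom}. Working with a near-optimiser of the right-hand side instead of the smoothed entropies themselves is a cosmetic difference; the chain of inequalities is the same.

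The substantive point is the one you flag yourself, and you are right to flag it: the dominance step rests on the operator inequality $\Pi_B^k\sigma_B\Pi_B^k\leq\sigma_B$, which the paper simply asserts (``noting that $\sigma\geq\Pi_B^k\sigma\Pi_B^k$'') without justification, and which is false for a generic projection. Concretely, take $\sigma_B=\ket{\psi}\!\bra{\psi}$ pure and $\Pi$ any projection with $0\neq\Pi\ket{\psi}$ not proportional to $\ket{\psi}$: then $\Pi\sigma_B\Pi$ is rank one with support outside $\mathrm{span}(\ket{\psi})$, so $\Pi\sigma_B\Pi\leq\sigma_B$ would violate the support-containment property of the operator order (equivalently, for $\sigma_B=\frac{1}{2}\bigl(\begin{smallmatrix}1&1\\1&1\end{smallmatrix}\bigr)$ and $\Pi=\bigl(\begin{smallmatrix}1&0\\0&0\end{smallmatrix}\bigr)$, the difference $\sigma_B-\Pi\sigma_B\Pi$ has negative determinant). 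Your repair---requiring $[\Pi_B^k,\sigma_B]=0$, e.g.\ spectral projections of $\sigma_B$, so that $\sigma_B-\Pi_B^k\sigma_B\Pi_B^k=(\mathbbm{1}-\Pi_B^k)\sigma_B(\mathbbm{1}-\Pi_B^k)\geq 0$---is the natural one, and it is compatible with how Lemma~\ref{lemma:AEPhelpLemma} is used downstream: in the AEP proof one is free to choose the $B$-side projection sequence adapted to the reference state, since only strong convergence (hence trace-norm convergence of the projected states) is needed. So your proof is correct under the added commutation hypothesis, and in identifying that hypothesis it closes a gap that the paper's own write-up leaves open.
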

\begin{proof}
For fixed $\sigma$ the statement can be established by showing $\forall k\geq k_0:~\mathcal{B}_{\mathrm{TD}}^{t \epsilon}\left( \hat{\rho}_{AB}^k\right) \subseteq \mathcal{B}_{\mathrm{TD}}^{\epsilon}(\rho_{AB})$, where the proof is then identical to the proof of \cite[Lemma 2]{Furrer_2011}. Therefore, we take this result as established, so $\exists k_0$ such that \begin{equation}\label{eq:AEPprf1}
  \forall k\geq k_0:~ H_{\mathrm{min(TD)}}^{\epsilon}(\rho_{AB}||\sigma) \geq H_{\mathrm{min(TD)}}^{\epsilon}(\hat{\rho}_{AB}^k ||\sigma).  
\end{equation}
We are using this result and Corollary \ref{cor:NormAndDom} to prove the general case. We deduce
\begin{align*}
    H_{\mathrm{min(TD)}}^{t\epsilon}(\hat{\rho}_{AB}^{k} || \hat{\sigma}_B^k) = H_{\mathrm{min(TD)}}^{t\epsilon}\left(\hat{\rho}_{AB}^{k} \left|\left| \frac{\sigma_B^k}{\Tr{\Pi_B^k \sigma \Pi_B^k}} \right.\right.\right) = H_{\mathrm{min(TD)}}^{t\epsilon}\left(\hat{\rho}_{AB}^{k} || \sigma_B^k \right) + \log_2\left( \frac{1}{\Tr{\Pi_B^k \sigma \Pi_B^k}}\right),
\end{align*}
where we applied the normalisation property in Corollary \ref{cor:NormAndDom} for the second equality. Then, using the dominance property in Corollary \ref{cor:NormAndDom} and noting that $\sigma \geq \Pi_B^k \sigma \Pi_B^k = \sigma^k$, we obtain
\begin{align*}
    H_{\mathrm{min(TD)}}^{t\epsilon}\left(\hat{\rho}_{AB}^{k} || \sigma_B^k \right) \leq H_{\mathrm{min(TD)}}^{t\epsilon}\left(\hat{\rho}_{AB}^{k} || \sigma_B\right).
\end{align*}
Putting things together, we showed
\begin{align*}
 H_{\mathrm{min(TD)}}^{t\epsilon}(\hat{\rho}_{AB}^{k} || \hat{\sigma}_B^k) \leq    H_{\mathrm{min(TD)}}^{t\epsilon}\left(\hat{\rho}_{AB}^{k} || \sigma_B\right) - \log_2\left( \Tr{\Pi_B^k \sigma \Pi_B^k}\right).
\end{align*}
Thanks to Eq.~(\ref{eq:AEPprf1}) we know already that there exists such a $k_0$ to bound $H_{\mathrm{min(TD)}}^{\epsilon}(\hat{\rho}_{AB}^k ||\sigma)$. This completes the proof.
\end{proof}

In the next Lemma, we extend Renner's AEP \cite[Theorem 3.3.6]{Renner_2005} to infinite dimensional side information. Note that we cannot generalise register $A$ to infinite dimensions, as the correction term is a function of the dimension of this register. However, this generalisation is not required for QKD anyways.

\begin{lemma}
Let $\mathcal{H}_A$ and $\mathcal{H}_B$ be separable Hilbert spaces, where $\mathcal{H}_A$ is finite-dimensional, $\mathrm{dim}\left(\mathcal{H}_A\right) < \infty$. Let $\rho_{AB} \in\mathcal{D}(\mathcal{H}_A \otimes \mathcal{H}_B)$ and $n \in \mathbb{N}$.\\
Then, for any $\epsilon \in (0,1)$
\begin{equation*}
    \frac{1}{n} H_{\mathrm{min(TD)}}^{\epsilon}(\rho_{AB}^{\otimes n}||\sigma_B^{\otimes n}) \geq H(AB)_{\rho} - H(B)_{\rho} - D(\rho_B||\sigma_B) - \delta,
\end{equation*}
where $\delta = 2 \log_2\left( \mathrm{rank}(\rho_A) + \Tr{\rho_{B}^2 \left( \mathbbm{1}_A \otimes \sigma_B^{-1}) +2 \right)}  \right) \sqrt{\frac{\log_2\left(\frac{1}{\epsilon}\right)}{n} +1}$. In terms of purified distance, we replace $\epsilon \mapsto \sqrt{\epsilon}$.    
\end{lemma}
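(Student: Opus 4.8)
The plan is to reduce the infinite-dimensional statement to Renner's finite-dimensional AEP (\cite[Theorem 3.3.6]{Renner_2005}) by approximating $\rho_{AB}^{\otimes n}$ and $\sigma_B^{\otimes n}$ with finite-rank projected states and then passing to the limit. Fix a sequence of projectors $\{\Pi^k\}$ onto finite-dimensional subspaces converging to the identity in trace norm, and let $\hat\rho^k_{AB}$, $\hat\sigma^k_B$ be the corresponding normalised projected states exactly as in the setup of Lemma~\ref{lemma:AEPhelpLemma}; since $\mathcal{H}_A$ is finite-dimensional one may in fact truncate only the $B$-register, taking the $AB$-projector to be $\mathbbm{1}_A\otimes\Pi_B^k$. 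The key observation is that $(\hat\rho^k_{AB})^{\otimes n}=(\Pi^k)^{\otimes n}\rho_{AB}^{\otimes n}(\Pi^k)^{\otimes n}$ up to normalisation, so it is itself a finite-rank projected version of $\rho_{AB}^{\otimes n}$, and $(\Pi^k)^{\otimes n}\to\mathbbm{1}$ for each fixed $n$. Hence the family $\{(\hat\rho^k_{AB})^{\otimes n}\}_k$ converges to $\rho_{AB}^{\otimes n}$ in trace norm, and likewise for $\sigma$.

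First I would apply Lemma~\ref{lemma:AEPhelpLemma} with the replacements $\rho_{AB}\mapsto\rho_{AB}^{\otimes n}$ and $\sigma_B\mapsto\sigma_B^{\otimes n}$. For any fixed $t\in(0,1)$ this produces a $k_0$ so that for all $k\geq k_0$,
\begin{equation*}
H_{\mathrm{min(TD)}}^{\epsilon}(\rho_{AB}^{\otimes n}\|\sigma_B^{\otimes n}) \geq H_{\mathrm{min(TD)}}^{t\epsilon}\big((\hat\rho^k_{AB})^{\otimes n}\,\|\,(\hat\sigma^k_B)^{\otimes n}\big) + n\log_2\!\big(\Tr{\Pi_B^k\sigma_B\Pi_B^k}\big),
\end{equation*}
where I used the tensor factorisation $\Tr{(\Pi_B^k)^{\otimes n}\sigma_B^{\otimes n}(\Pi_B^k)^{\otimes n}}=(\Tr{\Pi_B^k\sigma_B\Pi_B^k})^n$. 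Because the projected states live in a finite-dimensional space, I can now invoke the finite-dimensional AEP on $(\hat\rho^k_{AB})^{\otimes n}$ to lower-bound the first term on the right by $n[H(AB)_{\hat\rho^k}-H(B)_{\hat\rho^k}-D(\hat\rho^k_B\|\hat\sigma^k_B)]-n\delta_k$, with $\delta_k$ the finite-dimensional correction evaluated on the projected state. Dividing by $n$ and combining yields, for every $k\geq k_0$, a lower bound on $\tfrac1n H_{\mathrm{min(TD)}}^{\epsilon}(\rho_{AB}^{\otimes n}\|\sigma_B^{\otimes n})$ purely in terms of projected quantities; the normalisation and dominance properties of Corollary~\ref{cor:NormAndDom} are what keep the relative-entropy bookkeeping between $\hat\sigma^k_B$ and $\sigma_B$ consistent.

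The final step is to let $k\to\infty$. Since the left-hand side is independent of $k$, it suffices to show the projected right-hand side converges to the claimed expression: $\log_2(\Tr{\Pi_B^k\sigma_B\Pi_B^k})\to\log_2\Tr{\sigma_B}=0$ because $\Pi_B^k\to\mathbbm{1}$ and $\sigma_B$ is normalised; $H(AB)_{\hat\rho^k}-H(B)_{\hat\rho^k}\to H(AB)_\rho-H(B)_\rho$; $D(\hat\rho^k_B\|\hat\sigma^k_B)\to D(\rho_B\|\sigma_B)$; and $\delta_k\to\delta$, the latter using that $\mathrm{rank}(\hat\rho^k_A)\le\mathrm{rank}(\rho_A)$ is finite and that the $\sigma_B^{-1}$-weighted trace term inside $\delta$ converges. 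Taking the limit in the $\limsup$ direction then preserves the inequality and gives the result.

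The hard part will be the continuity claims in this last step. Von Neumann entropy is only lower semicontinuous on infinite-dimensional systems and can be discontinuous, so the convergences $H(AB)_{\hat\rho^k}\to H(AB)_\rho$ and $H(B)_{\hat\rho^k}\to H(B)_\rho$ do not follow from trace-norm convergence alone. I expect to control them jointly as the conditional quantity $H(AB)-H(B)=-H(A|B)$ with $A$ finite-dimensional, leveraging finiteness of $D(\rho_B\|\sigma_B)$ to tame the $B$-marginal, and arranging the limit so that any semicontinuity jump works in the favourable direction. Once the trace-distance version is in hand, the purified-distance statement follows immediately by feeding it through the ball inclusions of Proposition~\ref{prop:epsilonBalls}, which is precisely what produces the $\epsilon\mapsto\sqrt{\epsilon}$ replacement.
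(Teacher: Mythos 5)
Your proposal follows essentially the same route as the paper's own proof: truncate onto finite-dimensional subspaces via projectors, apply Lemma~\ref{lemma:AEPhelpLemma} to the $n$-fold states $\rho_{AB}^{\otimes n}$ and $\sigma_B^{\otimes n}$, invoke Renner's finite-dimensional AEP on the projected states, pass to the limit $k\to\infty$, and obtain the purified-distance version from Proposition~\ref{prop:epsilonBalls}. The only differences are minor: the paper additionally takes an explicit limit $t\to 1$ at the end to turn the correction $\delta(t\epsilon)$ into $\delta(\epsilon)$ (a step you leave implicit in ``$\delta_k\to\delta$''), and it simply asserts that the right-hand side ``recovers the true states'' in the $k\to\infty$ limit, i.e., the entropic-continuity subtlety you rightly flag as the hard part is glossed over rather than resolved in the paper's own argument.
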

\begin{proof}
We follow the proof of \cite[Proposition 8]{Furrer_2011}. Let $\left(\Pi_A^k, \Pi_B^k \right)$ be sequences of projectors such that $\forall k' \geq k:~\Pi_A^k \leq \Pi_A^{k'}$ that converges to the identity in the weak operator topology and similarly for the projectors in $B$. Then, the $n$-fold projectors $\left( \left( \Pi_A^k\right)^{\otimes n}, \left( \Pi_B^k\right)^{\otimes n} \right)$ satisfy these conditions as well. 

Fix $t \in (0,1)$. Then, by Lemma \ref{lemma:AEPhelpLemma} there $\exists k_0 \in \mathbb{N}$ such that $\forall k \geq k_0$
\begin{equation*}
    H_{\mathrm{min(TD)}}^{\epsilon}\left( \rho_{AB}^{\otimes n} \right|\!\left| \sigma_{B}^{\otimes n}\right) \geq H_{\mathrm{min(TD)}}^{t \epsilon}\left( \left(\hat{\rho}_{AB}\right)^{\otimes n} \right|\!\left| \left(\hat{\sigma}_{B}\right)^{\otimes n}\right) - n \log_2\left( \Tr{\Pi^k \sigma \Pi^k} \right)
\end{equation*}
holds. We used that the trace is multiplicative over tensor products. Next, since we are working on projections, we can apply \cite[Theorem 3.3.6]{Renner_2005} and obtain
\begin{equation}
    \frac{1}{n} H_{\mathrm{min(TD)}}^{\epsilon}\left( \rho_{AB}^{\otimes n} \right|\!\left| \sigma_{B}^{\otimes n}\right)  \geq H\left(\hat{\rho}_{AB}^k \right) - H\left(\hat{\rho}_{B}^k \right) - D\left(\hat{\rho}_{B}^k \right|\!\left|\hat{\sigma}_{B}^k \right) - \delta(t \epsilon) - \log_2\left( \Tr{\Pi^k \sigma \Pi^k} \right).
\end{equation}
When we take the limit of $k\rightarrow \infty$ the left-hand side doesn't change, while the right-hand side, by our assumptions on the projections, recovers the true states. The $\log_2$-term drops, as $\log_2\left( \Tr{\sigma}\right) = \log_2(1) = 0$. Hence, we obtain
\begin{equation*}
  \frac{1}{n} H_{\mathrm{min(TD)}}^{\epsilon}\left( \rho_{AB}^{\otimes n} \right|\!\left| \sigma_{B}^{\otimes n}\right)  \geq H\left(\rho_{AB} \right) - H\left(\rho_{B} \right) - D\left(\rho_{B} \right|\!\left|\sigma_{B} \right) - \delta(t \epsilon). 
\end{equation*}
Finally, taking the limit $t\rightarrow 1$ completes the proof.
\end{proof}

We obtain the final result of this section, the generalised Asymptotic Equipartition Property, as a corollary.

\begin{corollary}[\textbf{Asymptotic Equipartition Property}]\label{cor:AEP}
Let $\mathcal{H}_X$ and $\mathcal{H}_E$ be separable Hilbert spaces, where $\mathcal{H}_X$ is finite-dimensional. Let $\rho_{XE}$ be a classical-quantum state.\\
Then, for smoothing in terms of trace distance
\begin{equation*}
    \frac{1}{n} H_{\mathrm{min(TD)}}^{\epsilon}(X|E)_{\rho_{XE}^{\otimes n}} \geq H(X|E) - \delta(\epsilon),
\end{equation*}
where $\delta(\epsilon) := 2 \log_2(\mathrm{rank}(\rho_X)+3) \sqrt{\frac{\log_2(2/\epsilon)}{n}}$. For smoothing in terms of purified distance every $\epsilon$ needs to be replaced by $\sqrt{\epsilon}$.
\end{corollary}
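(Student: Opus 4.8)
The plan is to obtain the corollary as a direct specialisation of the preceding Lemma, setting $A = X$ (a finite-dimensional classical register, so that $\mathrm{rank}(\rho_A) = \mathrm{rank}(\rho_X)$) and $B = E$. Since the conditional smooth min-entropy is defined as a supremum over $\sigma_E \in \mathcal{D}(\mathcal{H}_E)$, I would first lower-bound it by fixing the particular admissible choice $\sigma_E := \rho_E$, giving
\[
\frac{1}{n}H_{\mathrm{min(TD)}}^{\epsilon}(X|E)_{\rho^{\otimes n}} \geq \frac{1}{n}H_{\mathrm{min(TD)}}^{\epsilon}\!\left(\rho_{XE}^{\otimes n} \,\middle\|\, \rho_E^{\otimes n}\right).
\]
With $\sigma_E = \rho_E$ the relative-entropy term vanishes, $D(\rho_E\|\rho_E) = 0$, and the two remaining leading terms of the Lemma collapse to the conditional von Neumann entropy via $H(XE)_\rho - H(E)_\rho = H(X|E)_\rho$. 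Hence the leading term is exactly $H(X|E)_\rho$, and it only remains to massage the correction term into the stated form $\delta(\epsilon)$.

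The key simplification is to control the trace appearing inside the logarithm of the Lemma's correction term. For the classical--quantum state $\rho_{XE} = \sum_x p_x |x\rangle\!\langle x| \otimes \rho_E^x$ with $\rho_E = \sum_x p_x \rho_E^x$, classicality of $X$ removes all cross terms, so
\[
\Tr{\rho_{XE}^2\,(\mathbbm{1}_X \otimes \rho_E^{-1})} = \sum_x p_x^2 \,\Tr{(\rho_E^x)^2 \rho_E^{-1}}.
\]
Introducing the operators $T_x := \rho_E^{-1/2}(p_x \rho_E^x)\rho_E^{-1/2}$, which satisfy $0 \leq T_x \leq \mathbbm{1}$ and $\sum_x T_x = \mathbbm{1}$ on the support of $\rho_E$, a short cyclicity computation (using $p_x\rho_E^x = \rho_E^{1/2}T_x\rho_E^{1/2}$) rewrites each summand as $\Tr{\rho_E\, T_x^2}$. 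Since $0 \le T_x \le \mathbbm{1}$ implies $T_x^2 \leq T_x$, the sum is bounded by $\sum_x \Tr{\rho_E T_x} = \Tr{\rho_E} = 1$. Therefore the argument of the logarithm obeys $\mathrm{rank}(\rho_X) + \Tr{\rho_{XE}^2(\mathbbm{1}_X\otimes\rho_E^{-1})} + 2 \leq \mathrm{rank}(\rho_X) + 3$, which produces the prefactor $2\log_2(\mathrm{rank}(\rho_X)+3)$.

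Finally I would reconcile the square-root factor: the correction factor of the Lemma scales as $1/\sqrt{n}$ and reads $\sqrt{(\log_2(1/\epsilon)+1)/n}$, which equals $\sqrt{\log_2(2/\epsilon)/n}$ by the identity $\log_2(2/\epsilon) = \log_2(1/\epsilon)+1$. Combining this with the prefactor gives precisely $\delta(\epsilon) = 2\log_2(\mathrm{rank}(\rho_X)+3)\sqrt{\log_2(2/\epsilon)/n}$. The purified-distance version then follows verbatim by invoking the purified-distance form of the Lemma and substituting $\epsilon \mapsto \sqrt{\epsilon}$ throughout. I expect the only genuinely nontrivial ingredient to be the operator bound $\Tr{\rho_{XE}^2(\mathbbm{1}_X\otimes\rho_E^{-1})} \le 1$; everything else is constant bookkeeping and the identification of the leading term with $H(X|E)_\rho$. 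The remaining subtlety is to confirm that fixing $\sigma_E = \rho_E$ is legitimate in the infinite-dimensional setting, i.e.\ that $\rho_E$ is admissible in the supremum and that $\rho_E^{-1}$ is understood on the support of $\rho_E$, but this is already guaranteed by the definitions of Section~\ref{sec:SmoothMinEntropy} and the projector-limit machinery underlying the Lemma.
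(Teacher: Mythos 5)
Your proposal is correct and takes essentially the same route as the paper: the paper's proof simply defers to Renner's Corollary 3.3.7, which is precisely your specialisation of the preceding Lemma to $\sigma_E = \rho_E$ (killing the $D(\rho_B\|\sigma_B)$ term and collapsing $H(XE)-H(E)$ to $H(X|E)$) combined with the classical--quantum bound $\Tr{\rho_{XE}^2\left(\mathbbm{1}_X\otimes\rho_E^{-1}\right)}\leq 1$, which your $T_x$ computation establishes correctly. Your handling of the purified-distance case via the Lemma's own PD version is equivalent to the paper's direct appeal to Proposition~\ref{prop:epsilonBalls}, since that proposition is exactly how the Lemma's PD form is obtained.
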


\begin{proof}
The proof is now identical to that of Corollary 3.3.7 of Ref. \cite{Renner_2005}, where we omit the simplifications at the end of the proof. The purified distance bound can be obtained by Proposition \ref{prop:epsilonBalls}.
\end{proof}

\section{Derivation of the finite-dimensional optimisation problem} \label{APDX:FinDimOptProb}
In this section, we are motivating and deriving the primal and dual SDP we have to solve in order to obtain a lower bound on the secure key rate. Our starting point is the infinite dimensional optimisation problem, given in Eq.~(\ref{eq:HighDimOptProb}), which we obtain based on Bob's observations. By introducing slack variables, the inequality constraints can be turned into equality constraints.

\begin{align*}
   \begin{aligned}
    \min~&f(\rho)\\
    \text{subject to }& \\
    & \mathrm{Tr}_{B}\left[\rho\right] = \rho_A\\
     & \left| \Tr{\hat{\Gamma}_j\rho } - \gamma_j \right| \leq \mu_j\\
    & \Tr{\rho} = 1\\
    & \rho \geq 0    
\end{aligned} 
\hspace{5mm }\Leftrightarrow  \hspace{5mm}
\begin{aligned}
    \min~& f(\rho)\\
    \text{subject to }& \\
    & \mathrm{Tr}_{B}\left[\rho\right] = \rho_A\\
     & \Tr{\hat{\Gamma}_j\rho } \leq \mu_j +  \gamma_j \\
     & -\Tr{\hat{\Gamma}_j\rho } \leq 
     \mu_j -  \gamma_j \\
    & \Tr{\rho} = 1\\
    & \rho \geq 0
\end{aligned}
\end{align*}
Next, we apply the dimension reduction method \cite{Upadhyaya_2021} and obtain the expanded finite-dimensional optimisation
\begin{align*}
\begin{aligned}
    \min~& f(\overline{\rho})\\
    \text{subject to }& \\
    & \frac{1}{2} \left|\left|\mathrm{Tr}_{B}\left[\bar{\rho}\right] - \rho_A \right|\right|_1 \leq \sqrt{w}\\
     & \mu_j + \gamma_j - w \left|\left| \hat{\Gamma}_j \right|\right|_{\infty} \leq \Tr{\hat{\Gamma}_j \overline{\rho}}  \leq \mu_j +  \gamma_j\\
    & 1-w \leq \Tr{\overline{\rho}} \leq 1\\
    & \overline{\rho} \geq 0
\end{aligned}
\end{align*}
where we replaced the infinite dimensional $\rho$ by the finite-dimensional $\overline{\rho}$ and used the improved bound $\sqrt{w}$ for the trace-norm constraint from \cite[page 59]{Upadhyaya_Thesis_2021}. Furthermore, we can rewrite the trace-norm constraint (see, for example,~\cite{Watrous_2018}). We obtain
\begin{align}\label{eq:primalOptAPDX}
\begin{aligned}
    \min~& f(\overline{\rho})\\
    \text{subject to }& \\
    & \Tr{P}+\Tr{N} \leq 2 \sqrt{w}\\
    & P \geq \mathrm{Tr}_B\left[ \overline{\rho} \right] - \rho_A\\
    & N \geq -\left( \mathrm{Tr}_B\left[ \overline{\rho} \right] - \rho_A \right)\\
    & \Tr{\hat{\Gamma}_j \overline{\rho}}  \leq \mu_j +  \gamma_j\\
    & \Tr{-\hat{\Gamma}_j \overline{\rho}}  \leq \mu_j -  \gamma_j + w \left|\left| \hat{\Gamma}_j \right|\right|_{\infty}\\
    & 1-w \leq \Tr{\overline{\rho}} \leq 1\\
    & \overline{\rho}, N, P \geq 0
\end{aligned}
\end{align}

The numerical method in \cite{Winick_2018} lower bounds the minimum of the objective function as follows. Let $\overline{\rho}^*$ minimise $f$ over the feasible set $\mathcal{S}$. Then, we have
\begin{align}
f(\overline{\rho}^*) &\geq f(\overline{\rho}) + \Tr{(\overline{\rho}^* - \overline{\rho}) \nabla f(\overline{\rho}} \geq f(\overline{\rho}) + \min_{\sigma \in \mathcal{S}} \Tr{(\sigma-\overline{\rho}) \nabla f(\overline{\rho}} \\
&= f(\overline{\rho}) - \Tr{\overline{\rho} \nabla f(\overline{\rho}}) - \min_{\sigma \in \mathcal{S}} \Tr{\sigma \nabla f(\overline{\rho})}.
\end{align}
Therefore, in what follows, we consider this linearised problem. The feasible set is given by the constraints in Eq.~(\ref{eq:primalOptAPDX}). Furthermore, for ease of notation, we denote all measurement operators by the label $\hat{\Lambda}$ and call the right-hand sides of the constraints related to measurements and the trace-condition $\lambda_j$ to obtain a more abstract form of our optimisation problem. Then, the problem reads
\begin{align}
\begin{aligned}
    \min~& \langle \nabla f(\overline{\rho}), \sigma \rangle\\
    \text{subject to }& \\
    &\lambda_k - \Tr{\hat{\Lambda}_k \sigma} \geq 0\\
    &2 \sqrt{w} - \Tr{P}-\Tr{N} \geq 0 \\
    &P - \mathrm{Tr}_B\left[ \sigma \right]+ \rho_A \geq 0 \\
    &N + \mathrm{Tr}_B\left[ \overline{\rho} \right] - \rho_A \geq 0 \\
    &\overline{\rho}, N, P \geq 0
\end{aligned}
\end{align}

The standard form of a semi-definite program is 
\begin{itemize}
    \item[(P)] Primal problem: 
\begin{align*}
    \alpha := \inf~&\langle X, H_1\rangle_{\mathcal{H}_1}\\
    \text{subject to }&\\
    & \mathcal{N}(X) - H_2 \in \mathcal{K}_2\\
    & X \in \mathcal{K}_1
\end{align*}
\item[(D)] Dual problem:
\begin{align*}
    \beta := \sup~&\langle Y, H_2\rangle_{\mathcal{H}_2}\\
    \text{subject to }&\\
    &  H_1 - \mathcal{N}^{*}(X) \in \mathcal{K}_1^*\\
    & Y \in \mathcal{K}_2^*
\end{align*}
\end{itemize}

Note that $\mathcal{K}_1$ denotes the cone 
\begin{equation*}
    \mathcal{K}_1 := \left\{\begin{pmatrix}
    x_1 \\ x_2 \\ x_3
\end{pmatrix}: x_1, x_2, x_3 \in \mathcal{H}_1 \land x_1, x_2, x_3 \geq 0 \right\},
\end{equation*}
where $\mathcal{K}_1^*$ denotes the dual cone of $\mathcal{K}_1$ and $\langle \cdot, \cdot \rangle_{\mathcal{H}_1}$ and $\langle \cdot, \cdot \rangle_{\mathcal{H}_2}$ are the inner products on the Hilbert spaces $\mathcal{H}_1$ and $\mathcal{H}_2$, where the optimisation problems are set. In our case, we have $\mathcal{K}_1^* = \mathcal{K}_1$ and the first inner product is the Hilbert-Schmidt inner product over the Hilbert space of bounded linear operators and the second inner product is the inner product induced by the component-wise inner products of Hilbert spaces of the constituents of $Y$. $H_1, H_2$ and $\mathcal{N}$ are known, while $X$ is the primal optimisation variable, while $Y$ is the optimisation variable in the dual problem.

For the present problem, we identify 
\begin{align*}
 X &= \left(\sigma \oplus P \oplus N  \right)  \\
 H_1 &= \left( \nabla f(\overline{\rho}) \oplus 0 \oplus 0 \right)\\
 H_2 &= -\left(-\lambda_1 \oplus \hdots \oplus \lambda_{6 N_{\mathrm{St}}} \oplus 2 \sqrt{w} \oplus \rho_A \oplus -\rho_A  \right)\\
 Y &= \left( y_1 \oplus \hdots \oplus y_{6 N_{\mathrm{St}}} \oplus s \oplus \tau \oplus \Theta \right),
\end{align*}
where we interpret scalars as $1\times1$ matrices, as well as the linear map
\begin{align*}
    \mathcal{N}(X) &= \mathcal{N}(X_1 \oplus X_2 \oplus X_3) \\&
    = \left( -\Tr{X_1 \hat{\Lambda}_1} \oplus \hdots \oplus -\Tr{X_1 \hat{\Lambda}_{6 N_{\mathrm{St}}}} \oplus - \Tr{X_2} \oplus - \Tr{X_3} \oplus X_2 - \mathrm{Tr}_B\left[ X_1\right] \oplus X_3 - \mathrm{Tr}_B\left[ X_1\right] \right).
\end{align*}

It remains to find the dual (adjoint) of $\mathcal{N}$, defined by $\langle Y, \mathcal{N}(X) \rangle_{\mathcal{H}_2} = \langle \mathcal{N}^*(Y), X \rangle_{\mathcal{H}_1}$. One can show that
\begin{align*}
    \mathcal{N}^*\left( y_1 \oplus \hdots \oplus y_{6 N_{\mathrm{St}}} \oplus s \oplus \tau \oplus \Theta \right) = \left( \left( -  \sum_{j=1}^{6N_{\mathrm{St}}} y_j \hat{\Lambda}_j - \tau \otimes I_B - \Theta \otimes I_B \right)\oplus \left(-s \cdot I + \tau \right) \oplus \left(- s \cdot I + \Theta \right) \right).
\end{align*}

Therefore, the dual problem reads
\begin{align*}
    - \max ~&\vec{y}\cdot \vec{\lambda} + 2\sqrt{w} s + \Tr{\rho_A \tau} - \Tr{\rho_A \Theta}\\
    \text{subject to }&\\
    &\nabla f(\overline{\rho}) + \sum_{j=1}^{6 N_{\mathrm{St}}} y_j \hat{\Lambda}_j + \tau \otimes I_B - \Theta \otimes I_B \geq 0\\
    &s \cdot I - \tau \geq 0\\
    & s \cdot I + \Theta \geq 0\\
    & \vec{y} \geq 0, ~s \geq 0,~\tau, \Theta \geq 0.
\end{align*}

Finally, we apply the relaxation in \cite{Winick_2018} to take numerical imprecisions into account. This adds $\epsilon_{\mathrm{num}}$ to the vector $\vec{v}$ as well as to $2 \sqrt{w}$. Therefore, as claimed, we finally obtain the dual.

\section{Completeness}\label{apdx:Completeness}

\begin{proof}[Proof of Proposition \ref{prop:completeness-bound}]
    First, we show why the completeness may be decomposed into multiple epsilon terms. This has also been explained in other work \cite{Arnon_Friedman_2020}. By definition of completeness of a QKD protocol (Definition \ref{def:completeness}),
    \begin{align*}
         \Pr[\mathsf{Abort}|\mathsf{Honest}] 
        =& \Pr[\mathsf{ET\,Abort} \cup \mathsf{AT\,Abort} \cup \mathsf{EC\,Abort}|\mathsf{Honest}] \\
        \leq & \Pr[\mathsf{ET\,Abort}|\mathsf{Honest}] + \Pr[\mathsf{AT\,Abort}|\mathsf{Honest}] + \Pr[\mathsf{EC\,Abort}|\mathsf{Honest}] \ ,
    \end{align*}
    where we have used that energy test, acceptance test, and error correction are the steps in the protocol which might abort and then applied the union bound. We take $\mathsf{Honest}$ to mean the input at the step conditioned on the previous inputs passing on the honest input. We take the honest input to be the state $\sigma^{\otimes n} := (\mathcal{E}_{\mathrm{Honest}}(\rho))^{\otimes n}$, where $\rho$ is the state the devices effectively prepare and $\mathcal{E}_{\mathrm{Honest}}$ is the assumed memoryless noisy channel when there is no eavesdropper. We can then take each of these conditional probabilities and define a notion of $\epsilon-$completeness for these subprotocols in the same manner as for the whole protocol (Definition \ref{def:completeness}). The completeness of error correction is a choice of error correcting code, so we leave this as an input parameter of the protocol, $\epsilon^{c}_{\mathrm{EC}}$. Thus we are only interested in bounding the other two probabilities. 
    
    For the energy test, it's very similar to what was done in Appendix \ref{apdx:proof_ET}, so we follow the notation from that section. First,
    \begin{align*}
        \Pr[\mathsf{ET\,Abort}|\mathsf{Honest}] = \Pr[|\{Y_{i}:Y_{i} \leq \beta_{T}\}| > l_{T} | \sigma] = \sum_{j = l_{T} + 1}^{k_{T}} \Pr[f_{k_{T}} = P_{j}|\sigma] \leq \sum_{j = l_{T} + 1}^{k_{T}} 2^{-k_{T} D(P_{j}||Q_{\sigma})} \ ,
    \end{align*}
    where we first used the definition of when the protocol aborts, then decomposed it into the types, and finally applied \eqref{eq:type-prob-bound} using $Q_{\sigma} := \begin{pmatrix} 1 - \Tr{\sigma V_{1}} \\ \Tr{\sigma V_{1}} \end{pmatrix}$. Finally, this sum may be tedious to calculate, so we make an assumption to simplify the calculation. We assume that $1 - \Tr{V_{1}\sigma} < \frac{l_{T}+1}{k_{T}}$. This means that every $j > l_{T}+1$ term can only lead to a larger relative divergence term than the $l_{T}+1$ term. It follows that we have
    \begin{align*}
        \Pr[\mathsf{ET\,Abort}|\mathsf{Honest}] \leq \sum_{j = l_{T} + 1}^{k_{T}} 2^{-k_{T} D(P_{j}||Q_{\sigma})} \leq (k_{T}-l_{T}-1) 2^{-k_{T}D(P_{l_{T}+1}||Q_{\sigma})} =: \epsilon^{c}_{\mathrm{ET}} \ .
    \end{align*}
    This completes the explanation for the energy test.

    For the acceptance test,
    \begin{align*}
        \Pr[\mathsf{AT\,Abort}|\mathsf{Honest}] = \Pr[\exists X \in \Theta: |v_{X} - r_{X}| > t_{X} | \sigma] \leq \sum_{X \in \Theta} \Pr[|v_{X} - r_{X}| > t_{X} | \sigma] \leq 2 \sum_{X \in \Theta} e^{-2m_{X} \frac{t_{X}^{2}}{4\|X\|^{2}_{\infty}}} =: \epsilon^{c}_{\mathrm{AT}} \ ,
    \end{align*}
    where we used the definition of the accepted observations ($\mathcal{O}$ in Theorem \ref{Thm:ParameterEstimation}), the union bound, and then Hoeffding's inequality. Combining these terms completes the proof.
\end{proof}

\twocolumngrid
\bibliography{Bibliography}
	
\end{document}